\def\bZ{\mathbb{Z}}
\def\cU{\mathcal{U}}
\def\cV{\mathcal{V}}
\def\cO{\mathcal{O}}
\def\cF{\mathcal{F}}
\def\cG{\mathcal{G}}
\def\cL{\mathcal{L}}
\def\cN{\mathcal{N}}
\def\cE{\mathcal{E}}
\def\cK{\mathcal{K}}
\def\eps{\varepsilon}
\def\ph{\varphi}
\def\RRR{\mathbb{R}}
\def\R{\mathbb{R}}
\def\indic{\hbox{\raise-2pt \hbox{\indbf 1}}}
\def\*{{\hfill\break\null\hfill\break}}
\def\norm#1{{\left|\hskip-.05em\left|#1\right|\hskip-.05em\right|}}
\def\tende#1{\,\vtop{\ialign{##\crcr\rightarrowfill\crcr
             \noalign{\kern-1pt\nointerlineskip}
             \hskip3.pt${\scriptstyle #1}$\hskip3.pt\crcr}}\,}
\def\otto{\,{\kern-1.truept\leftarrow\kern-5.truept\to\kern-1.truept}\,}
\def\tr{{\rm tr}}
\newtheorem{theorem}{Theorem}[section]  % use thm for %Theorems to keep numbering consistent
\newtheorem{cor}[theorem]{Corollary}
\newtheorem{prop}[theorem]{Proposition}
\newtheorem{lemma}[theorem]{Lemma}
\numberwithin{equation}{section}
\def\be{\begin{equation}}
\def\ee{\end{equation}}
\newcommand{\hc}{\mbox{h.c.}}
     \let\g=\gamma     \let\d=\delta     
        \let\k=\kappa     \let\l=\lambda
    \let\n=\nu                      
\let\s=\sigma          \let\ph=\varphi   
 \let\D=\Delta       \let\L=\Lambda    
\let\O=\Omega
\begin{document}

\title{The Bose gas in a box with\\ Neumann boundary conditions} 

\author{Chiara Boccato and Robert Seiringer}

\maketitle
\abstract{We consider a gas of bosonic particles confined in a box with Neumann boundary conditions. We prove Bose-Einstein condensation in the Gross-Pitaevskii regime, with an optimal bound on the condensate depletion.  Our lower bound for the ground state energy in the box implies (via Neumann bracketing) a lower bound for the ground state energy of the Bose gas in the thermodynamic limit.

}
\tableofcontents

\section{Introduction and main results}

We describe a system of $N$ interacting bosonic particles in a box $\L=[-L/2, L/2]^3$ through the Hamiltonian
\begin{equation}\label{eq:HL} 
  H_{N}=-\sum_{i=1}^N\Delta_i+\kappa\sum_{i<j}^N V(x_i-x_j)
\end{equation}
acting on $L_s^2(\L^N)$. This is the space of symmetrized $L^2$ functions, defined as
\[
 L^2_s(\L^{N})=\{\psi\in L^2(\L^{N}): \psi(x_{\sigma(1)},\dots,  x_{\sigma(N)})=\psi(x_1,\dots,x_N) \text{ for every }\sigma\in S_N \},
\]
where $S_N$ is the set of all permutations of $N$ objects. In \eqref{eq:HL}, $\D_i$ indicates the Laplacian with Neumann boundary conditions acting on particle $i$.
The interaction potential $V$ is a multiplication operator and we will assume it to be a nonnegative, spherically symmetric, compactly supported and bounded ($\kappa$ is a coupling constant).
We denote $\mathfrak{a}$ the scattering length of the interaction potential $\kappa V$. The scattering length is defined through the zero-energy scattering equation 
\begin{equation}\label{eq:0en}
\left[ - \Delta + \frac{\kappa}{2}  V (x)  \right] f_0 (x) = 0
\end{equation}
with the boundary condition that $f_0 (x) \to 1$, as $|x| \to \infty$. Here $\Delta$ denotes the Laplacian on $\R^3$. For $|x|$ large enough (outside the support of $V$), we have 
\begin{equation}\label{eq:0enSol} 
f_0(x) = 1- \frac{\frak{a} }{|x|} 
\end{equation}
where $\mathfrak{a}$ is the scattering length of $\kappa V$. Equivalently, $\frak{a}$ can be obtained as 
\begin{equation}\label{eq:a0}
8 \pi \frak{a}  =  \int_{\mathbb{R}^3} \k V(x) f_0(x) dx. 
\end{equation}

We are interested in static properties of the Bose gas.
The ground state energy per particle in the thermodynamic limit, i.e.,  the limit $N,|\L|\to\infty$ with $\rho=N/|\L|$ fixed, is given by
\begin{equation}\label{eq:energyPP}
 e (\rho)=\lim_{N,L\to\infty} \frac{E(N,L)}{N}
\end{equation}
where  $E(N,L)$ is the ground state energy of $H_{N}$, defined as \[E(N,L)= \inf_{\psi \in L^2_s(\Lambda^{N}),\, \norm{\psi} = 1} \langle \psi, H_N \psi\rangle.\]
For dilute  gases, i.e., when the density $\rho$ is small, the ground state energy per particle in the thermodynamic limit is described by the Lee-Huang-Yang formula \cite{LeeY,LeeHY}
\begin{equation}\label{eq:LHY} 
 e (\rho)=\lim_{\substack{N,L\to\infty\\\rho=N/|\L|}} \frac{E(N,L)}{N}= 4\pi \rho \frak{a} \left[ 1 + \frac{128}{15 \sqrt{\pi}} (\rho \frak{a}^3)^{1/2} + o ((\rho \frak{a}^3)^{1/2} ) \right],
\end{equation}
proved in \cite{YY,FSol,BasCS}. One of the main characteristics of \eqref{eq:LHY} is the universality of the first two orders, where only the scattering length appears and other details of the interaction potential do not matter.

 To compute thermodynamic quantities such as the ground state energy $e (\rho)$, a standard method (see for example  \cite{LY}) consists in dividing the box $\L$ into $M^3$ cells of size $\ell=L/M$ and reducing the problem to the study of the localized system to each cell. The choice of the boundary conditions on the cells is therefore very important, and while Dirichlet boundary condition are suited to compute upper bounds, lower bounds require for example Neumann boundary conditions. In particular, to compute a lower bound for $e (\rho)$, we distribute the $N$ particles in the $M^3$ cells (so that the $k$-th box has $n_k$ particles) and neglect interactions between particles in different cells, exploiting the positivity of the interaction potential. The lower bound is then obtained by adding the lower bounds in the different cells and minimize over all the possible ways of distributing the particles in the cells,  i.e., 
\begin{equation}\label{eq:lowerBound}
  E(N,L)\geq \inf_{\{n_k\}:\sum_kn_k=N}\sum_{k=1}^{M^3}E(n_k,\ell).
\end{equation}
Here $E(n,\ell)$ is the ground state energy of $H_{n}$ (defined in \eqref{eq:HL}, with $N$ substituted by $n$), acting on $L^2_s(\L_\ell^{3n})$, where $\L_\ell=[-\ell/2,\ell/2]^3$, with Neumann boundary conditions.
% \[
%    H_{n}=-\sum_{i=1}^n\Delta_i+\k \sum_{i<j}^n V(x_i-x_j)
% \]
Rescaling lengths, the Hamiltonian \eqref{eq:HL} takes the form
\begin{equation}\label{eq:H1}
   H_{n,\ell}=-\sum_{i=1}^n\Delta_i+\kappa\sum_{i<j}^n\ell^2 V\big(\ell(x_i-x_j)\big)
\end{equation}
and acts on $L^2_s(\L_1^{3n})$, with $\L_1=[-1/2,1/2]^3$.
Up to a factor $\ell^2$, $H_{n}$ and $H_{n,\ell}$ are unitarily equivalent, i.e. there exists a unitary\footnote{The unitary transformation $\cU$ acts as
\begin{equation}\nonumber
\begin{split}
 \cU:&\hspace{0.8cm}\,L^2(\L_\ell^n)\longrightarrow L^2(\L_1^n)\\
 &\ph(x_1, \dots, x_n)\to (\cU\ph)(x_1, \dots, x_n)=\ell^{3n/2}\ph(\ell x_1, \dots,\ell x_n)
\end{split}
\end{equation}} $\cU$ such that $ \cU^*H_{n,\ell}\,\cU=\ell^{2}\,H_{n}$. Denoting with $e_{n,\ell}$ the ground state energy of \eqref{eq:H1}, we have clearly  $e_{n,\ell}=\ell^2E(n,\ell)$.

In the case $n=\ell=N$, \eqref{eq:H1} describes the well known Gross-Pitaevskii regime. Here the density is proportional to $N^{-2}$, and hence the energy per particle is of the same order as the spectral gap of the Laplacian. In particular, in this regime the large volume and large particles number limit is simultaneous to the low density limit. % In contrast with the thermodynamic limit, this is a gapped system. 
The Gross-Pitaevskii regime has been studied for the translation invariant Bose gas, where periodic boundary conditions are imposed on $\L_1$, and for the trapped Bose gas, where particles move in $\R^3$ and are confined by an external potential. In these cases, Bose-Einstein condensation has been proved \cite{LS,LS2,NRS} with optimal rate \cite{BBCS1,BBCS4,NNRT,BSS1,H}. In the translation invariant case, the ground state energy has been shown in \cite{BBCS3} to be
\begin{equation}
\begin{split}\label{1.groundstateGP}
e_{N} = \; &4\pi (N-1) \frak{a} + b_\Lambda \frak{a}^2  - \frac{1}{2}\sum_{p\in\Lambda^*_1} \left[ p^2+8\pi \frak{a}  - \sqrt{|p|^4 + 16 \pi \frak{a}  p^2} - \frac{(8\pi \frak{a} )^2}{2p^2}\right] + \cO (N^{-1/4}),
    \end{split}
    \end{equation}
where 
%$ b_\Lambda = 2 - \lim_{M \to \infty} \sum_{\substack{p \in \bZ^3 \backslash \{ 0 \} : \\ |p_1|, |p_2|, |p_3| \leq M}} \frac{\cos (|p|)}{p^2}$ 
$ b_\Lambda = 2 - \lim_{M \to \infty} \sum_{p \in \bZ^3 \backslash \{ 0 \}, \,  |p| \leq M} \frac{\cos (|p|)}{p^2}$ is a boundary contribution. In addition in \cite{BBCS3} the excitation spectrum has been derived (these results have been also revisited in \cite{HST}). The result has then later been generalized to the trapped Bose gas \cite{NT,BSS2}.

\medskip

 In this paper we consider the Bose gas with Neumann boundary conditions in the Gross-Pitaevskii regime. In  Theorem \ref{mainTheorem} below we prove Bose-Einstein condensation with optimal rate and we give a bound on the ground state energy for the system described by \eqref{eq:H1}.

\begin{theorem}\label{mainTheorem}
Let $V$ be positive, compactly supported, spherically symmetric and bounded, and assume that $\kappa$ is a fixed, small enough constant independent of all parameters and $n\ell^{-1}\leq 1$.
Then, the ground state energy $e_{n,\ell}$ of  $H_{n,\ell}$ defined in \eqref{eq:H1} is such that
\begin{equation}\label{eq:gpEnergy}
\Big|e_{n,\ell}- 4\pi \frak{a}\frac{n^2}{\ell}\Big|\leq C\Big(\frac{n}{\ell}+\frac{n^2}{\ell^2} \ln(\ell)\Big)
\end{equation}
for a constant $C>0$ depending only on V.

Furthermore, let $\psi_n \in L^2_s (\Lambda_1^n)$ be a normalized wave function, with 
\[
\langle \psi_n , H_{n,\ell} \psi_n \rangle \leq e_{n,\ell} + \zeta  
\]
for some $\zeta > 0$.  Let $\gamma_n^{(1)} = \tr_{2,\dots , n} |\psi_n \rangle \langle \psi_n|$ be the one-particle reduced density associated with $\psi_n$. Then there exists a constant $C > 0$ depending only on $V$ such that 
\begin{equation}\label{eq:conv-thm} 1- \langle \ph_0 , \gamma^{(1)}_n \ph_0 \rangle \leq C\Big(\frac{\zeta}{n}+\frac{1}{\ell}\Big) \end{equation}
where $\ph_0 (x) = 1$ for all $x \in \Lambda_1$.

\end{theorem}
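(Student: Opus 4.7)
The plan is to match upper and lower bounds on $e_{n,\ell}$, deriving the lower bound from a Bogoliubov-type diagonalization of the excitation Hamiltonian that simultaneously produces a coercivity estimate; the bound \eqref{eq:conv-thm} on condensate depletion will then follow by comparing the coercive lower bound with the matched upper bound. The overall architecture parallels the Boccato--Brennecke--Cenatiempo--Schlein treatment of the periodic Gross--Pitaevskii regime, with the plane-wave basis replaced by the Neumann cosine eigenbasis of $-\Delta$ on $\L_1$.

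For the upper bound I would take a Jastrow trial state of the form $\psi_{\mathrm{tr}}(x_1,\dots,x_n) = Z^{-1/2}\prod_{i<j} f_\ell(x_i - x_j)$, where $f_\ell(x)$ is a truncated version of the zero-energy scattering solution $f_0(\ell x)$ from \eqref{eq:0en} that coincides with $1$ outside a ball of radius of order $1/\ell$. Since $\ph_0 \equiv 1$ satisfies Neumann BC and $f_\ell$ depends only on coordinate differences, $\psi_{\mathrm{tr}}$ lies in the Neumann form domain. A standard Dyson-type expansion, together with the identity \eqref{eq:a0}, then yields the upper bound in \eqref{eq:gpEnergy}; the logarithmic factor arises from the $1/|x|$ tail of $f_0-1$ being integrated against itself at second order in the interaction.

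For the lower bound I would pass to the excitation picture via the unitary $\cU_n : L^2_s(\L_1^n) \to \cF_+^{\leq n}$ that factors out the condensate $\ph_0$, where $\cF_+$ is the Fock space built over $\{\ph_0\}^\perp \subset L^2(\L_1)$, and write $\cL_n = \cU_n H_{n,\ell}\cU_n^\ast$ in terms of creation/annihilation operators $a^\#_\alpha$ indexed by nonzero Neumann modes. A generalized Bogoliubov transformation $T = e^B$, with quadratic generator $B$ chosen to absorb the short-distance correlations encoded in the Jastrow factor, then produces a renormalized Hamiltonian $\cG_n = T^\ast \cL_n T$ of the form
\[
\cG_n \;=\; 4\pi \mathfrak{a}\, \frac{n^2}{\ell} \;+\; \cQ_n \;+\; \cE_n,
\]
with $\cQ_n \geq 0$ the Bogoliubov quadratic Hamiltonian and $\cE_n$ controlled by $C(n/\ell + (n/\ell)^2 \log\ell) + c\,\cN_+/\ell$. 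A further bounded diagonalizing rotation applied to $\cQ_n$ yields the coercive estimate
\[
\cG_n \;\geq\; 4\pi \mathfrak{a}\, \frac{n^2}{\ell} \;-\; C\Big(\frac{n}{\ell} + \frac{n^2}{\ell^2}\log\ell\Big) \;+\; c\, \cN_+,
\]
whose vacuum expectation gives the lower bound in \eqref{eq:gpEnergy}. The main technical obstacle is that, unlike plane waves, Neumann cosine modes only close under products up to reflections ($\cos\alpha \cos\beta = \tfrac12(\cos(\alpha+\beta)+\cos(\alpha-\beta))$), so the convolution structure used in the periodic analysis to resum the scattering series and to absorb quartic terms into the quadratic Hamiltonian now produces additional umklapp/boundary-type contributions; these must be shown to be of order $n/\ell$ using the smoothness of $f_0$ and decay estimates for the cosine-Fourier coefficients of the scaled potential.

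The BEC bound \eqref{eq:conv-thm} then follows from the coercivity. If $\langle \psi_n , H_{n,\ell}\psi_n\rangle \leq e_{n,\ell} + \zeta$, then applying the coercive estimate to $T^\ast \cU_n \psi_n$ together with the already established upper bound gives
\[
c\,\langle \cU_n\psi_n,\, \cN_+\, \cU_n\psi_n\rangle \;\leq\; \zeta \;+\; C\Big(\frac{n}{\ell} + \frac{n^2}{\ell^2}\log\ell\Big) \;\leq\; \zeta \;+\; C\,\frac{n}{\ell},
\]
where in the last step I used $n/\ell\leq 1$. Since $\langle \cU_n\psi_n, \cN_+ \cU_n\psi_n\rangle = n\bigl(1 - \langle \ph_0, \gamma_n^{(1)} \ph_0\rangle\bigr)$, division by $n$ yields \eqref{eq:conv-thm}.
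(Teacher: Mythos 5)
Your overall architecture (lower bound via a Bogoliubov-renormalized excitation Hamiltonian with coercivity in $\cN_+$, then BEC by comparison with an upper bound) is the right one and matches the paper's strategy. But there is a genuine gap in the proposal at the single most important technical point, namely the construction of the Bogoliubov kernel $\eta$.

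You write that the generator $B$ should be ``chosen to absorb the short-distance correlations encoded in the Jastrow factor,'' which—together with your trial function built from the free-space scattering solution $f_0$—indicates that you intend $\eta$ to be essentially $1-f_0(\ell(x-y))$ projected off the condensate. This is what works for periodic or trapped systems, and the paper explicitly flags that it \emph{fails} under Neumann boundary conditions. The reason is concrete: after the conjugation $e^{-B}\cK e^B$ one integrates by parts and needs an exact pointwise identity of the form $(\Delta_x+\Delta_y)\eta(x,y)= n\ell^5(\kappa V-\lambda_\ell)f_\ell + (\text{rank-one pieces annihilated by }Q\otimes Q)$, cf.\ \eqref{eq:deltaeta}. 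If $\eta$ is built from $f_0$ restricted to the box, $f_0$ does not satisfy Neumann boundary conditions, the integration by parts produces uncontrolled boundary terms, and the center-of-mass and relative coordinates no longer decouple, so one cannot reduce to a radial one-dimensional problem. The paper's solution is to replace $f_0$ by the minimizer $f$ of the six-dimensional Neumann two-body functional \eqref{eq:Fphi} on $\L_\ell\times\L_\ell$ (Proposition \ref{prop:Ff} and the whole of Appendix \ref{functional}); the eigenvalue equation \eqref{eq:6dScat} for $f$ is precisely what makes the renormalization identity close, and the properties \eqref{eq:sup}--\eqref{eq:PWder} are what drive the error estimates. Your ``umklapp'' observation is one symptom of the obstruction, but the paper avoids the cosine-basis bookkeeping entirely by working in position space; the obstruction is structural (no translation invariance) rather than combinatorial, and your sketch gives no mechanism to overcome it.

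Two smaller discrepancies with the paper's actual argument, worth noting though not gaps per se: (i) the paper does \emph{not} use a Jastrow trial state—the upper bound comes from the same Proposition \ref{prop:G} by taking the vacuum expectation of the renormalized Hamiltonian, i.e.\ both sides of \eqref{eq:gpEnergy} follow from a single two-sided error estimate $\pm\cE_{n,\ell}\leq\delta(\cK+\cV_\ell)+C\kappa(n/\ell)(\cN_++1)$; and (ii) no second Bogoliubov diagonalization is performed—the pairing terms are simply absorbed into $\cE_{n,\ell}$, and coercivity comes from the elementary spectral gap $\cK\geq\pi^2\cN_+$. The precision required here (leading order with $O(n/\ell+n^2\ell^{-2}\ln\ell)$ error) does not justify diagonalizing the Bogoliubov quadratic form. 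Finally, the logarithmic error originates from boundary effects in the estimate \eqref{eq:slength} of the two-body eigenvalue $\lambda_\ell$, not directly from a Dyson-series second-order term as you suggest, although the two are closely related.
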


\medskip

\noindent \textit{Remarks}.
\begin{enumerate}
 \item For $n=\ell=N$ we recover in \eqref{eq:conv-thm} the condensate depletion rate $N^{-1}$, as shown for example in \cite[Theorem 1.1]{BBCS4} for periodic boundary conditions. However, the ground state energy is different from the translation invariant case and for the trapped case, since here we have
 \begin{equation}
\Big|e_{N}- 4\pi \frak{a}N\Big|\leq C\Big(1+\ln(N)\Big)
\end{equation}
The logarithmic behavior of the error bound is actually sharp and is specific to the Neumann boundary conditions.

  \item We need the requirement that $\kappa$ is small to prove certain properties (see \eqref{eq:PWder} below) of the ground state of the two-body problem in a box with Neumann boundary conditions. 
  \end{enumerate}

To prove Theorem \eqref{mainTheorem}, the main novelty of our analysis is the control of the Neumann boundary conditions. To do so, we consider the energy functional for the two-body problem (see \eqref{eq:Fphi} below) which naturally lives in a six-dimensional space, and we study the properties of its minimizer. We use then the minimizer to describe two-body correlations arising from interactions. In this part, we follow the ideas of \cite{BBCS1}: after transforming the Hamiltonian \eqref{eq:H1} with a unitary operator (taken from \cite{LNSS}) which maps $L^2_s(\L_1^n)$ to Fock space and extracts the contribution of the factorized part of wave functions, we act with a (generalized) Bogoliubov transformation. We define its integral kernel $\eta(x,y)$ as a function of the minimizer of the two-body problem (projected outside the space spanned by the constant in $L^2(\L_1)\times L^2(\L_1)$). In comparison to the case with periodic boundary condition and the case where the system in $\R^3$ is trapped by an external potential, the choice of Neumann boundary conditions  makes the problem considerably more involved from the technical point of view. In the former cases the kernel of the Bogoliubov transformation $\tilde \eta(x,y)$ can be chosen proportional to $(1-f_0(x-y))\ph_0^2(x+y)$ (before projecting it outside the space spanned by the constant in $L^2\times L^2$), where $f_0$ has been defined in \eqref{eq:0enSol} and $\ph_0$ represents the minimizer of the Gross-Pitaevskii functional. In our case instead the integral kernel $\eta(x,y)$ does not have such a simple structure; the center of mass and relative coordinate do not decouple and ground state of the two-body problem is not explicitly known.  While the properties of \eqref{eq:0enSol} can be studied by reducing the problem to a one-dimensional problem (depending only on a radial coordinate), here we need instead to study a full six-dimensional problem on $L^2(\L_1)\times L^2(\L_1)$.
% \textcolor{blue}{}
Moreover the Neumann boundary conditions set a non-translation invariant problem with no conserved momentum (this of course rules out also the use of Fourier series and Fourier transforms).

As mentioned above, the Neumann boundary conditions allow us to deduce very easily a lower bound for the leading order of ground state energy of the Bose gas in the thermodynamic limit, for the system described by \eqref{eq:HL}, for a small coupling constant $\kappa>0$. This is the result of Corollary \ref{LHY}. 

\begin{cor}\label{LHY}
 Let $V$ satisfy the same assumptions as in Theorem \ref{mainTheorem} and $\kappa$ be small enough. Then there exists a constant $C>0$ such that $ e (\rho)$ as defined in \eqref{eq:energyPP} satisfies
 \begin{equation}\label{eq:leadingLHY}
\begin{split}
   e (\rho)\geq  4\pi\mathfrak{a}\rho\Big(1-C(\rho\mathfrak{a}^3)^{1/2}\ln(1/\rho)\Big)
   \end{split}
\end{equation}
for $\rho$ small enough. 
\end{cor}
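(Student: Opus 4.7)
The plan is Neumann bracketing: partition $\Lambda=[-L/2,L/2]^3$ into $M^3$ cubes of side $\ell=L/M$, invoke \eqref{eq:lowerBound}, and apply Theorem \ref{mainTheorem} in each cell. I choose $\ell = c\rho^{-1/2}$ with a small constant $c\in(0,1)$, so that the typical cell population $\bar{n}=N/M^3=\rho\ell^3=c^2\ell$ sits strictly inside the regime $n\leq\ell$ covered by Theorem \ref{mainTheorem}. In the thermodynamic limit we are free to take $L$ a multiple of $\ell$, so the partition is clean.

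For any partition $\{n_k\}_{k=1}^{M^3}$ with $n_k\leq\ell$, Theorem \ref{mainTheorem} together with $E(n,\ell)=\ell^{-2}e_{n,\ell}$ gives
\[
 E(n_k,\ell) \;\geq\; \frac{4\pi\mathfrak{a}\, n_k^2}{\ell^3} \,-\, \frac{C n_k}{\ell^3} \,-\, \frac{C n_k^2 \ln\ell}{\ell^4}.
\]
Summing over $k$, combining the two $n_k^2$ terms into the coefficient $4\pi\mathfrak{a}/\ell^3 - C\ln\ell/\ell^4$ (positive once $\mathfrak{a}\ell\gg\ln\ell$), and applying Cauchy--Schwarz $\sum_k n_k^2\geq N^2/M^3$ together with $N^2/(M^3\ell^3)=\rho N$ and $\sum_k n_k=N$, I reach
\[
 \sum_k E(n_k,\ell) \;\geq\; 4\pi\mathfrak{a}\rho N\Big(1-\frac{C\ln\ell}{\mathfrak{a}\ell}\Big) \,-\, \frac{CN}{\ell^3}.
\]
Dividing by $N$, letting $N,L\to\infty$ with $\rho$ fixed, and substituting $\ell=c\rho^{-1/2}$ turns the two errors into terms of order $\rho^{3/2}\ln(1/\rho)$ and $\rho^{3/2}$ respectively, so $e(\rho)\geq 4\pi\mathfrak{a}\rho - C'\rho^{3/2}\ln(1/\rho)$. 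Finally, writing $\rho^{1/2}=\mathfrak{a}^{-3/2}(\rho\mathfrak{a}^3)^{1/2}$ and absorbing the $\mathfrak{a}$-dependent prefactor (finite for fixed small $\kappa$) into the constant delivers the claimed form $e(\rho)\geq 4\pi\mathfrak{a}\rho\bigl(1 - C(\rho\mathfrak{a}^3)^{1/2}\ln(1/\rho)\bigr)$.

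The main obstacle I anticipate is that the infimum in \eqref{eq:lowerBound} a priori ranges over partitions with some $n_k>\ell$, to which Theorem \ref{mainTheorem} does not apply. I would handle this by a convexity/exchange argument: any partition with $n_j>\ell>n_k$ can be improved by moving a particle from $j$ to $k$, since this strictly decreases $\sum_i n_i^2$ and thus the quadratic main lower bound, while a crude dilute-gas estimate of the form $E(n,\ell)\geq c\mathfrak{a}n^2/\ell^3$ (valid as long as $n\mathfrak{a}^3/\ell^3\ll 1$, e.g.\ via a Dyson--Lieb--Yngvason style bound) keeps the overloaded cell's contribution above its balanced counterpart. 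Consequently the infimum at the stated precision is realized on partitions with $n_k\leq\ell$, to which the computation above applies. This is the step that I expect to require the most care to write out rigorously.
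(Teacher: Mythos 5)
Your overall blueprint—Neumann bracketing, application of Theorem \ref{mainTheorem} cell by cell, the choice $\ell\sim\rho^{-1/2}$, and Cauchy--Schwarz to combine the quadratic terms—matches the paper, and the arithmetic for the case $n_k\leq\ell$ is correct. The genuine gap is exactly the step you flag as the one ``requiring the most care'': the infimum in \eqref{eq:lowerBound} ranges over \emph{all} distributions, including ones where some $n_k$ is much larger than $\ell$, and your proposed exchange argument does not close it. Moving a particle from an overloaded cell $j$ to an underloaded cell $k$ decreases the value of your quadratic \emph{lower bound} $q(n)$, but that is the wrong direction for lower-bounding $\sum_k E(n_k,\ell)$: one would need convexity of $E(\cdot,\ell)$ (or $E(n_j,\ell)+E(n_k,\ell)\geq E(n_j-1,\ell)+E(n_k+1,\ell)$), which is not established. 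The fallback ``crude dilute-gas estimate $E(n,\ell)\geq c\mathfrak a n^2/\ell^3$'' does not repair this: you correctly note it needs $n\mathfrak a^3/\ell^3\ll 1$, but for a fixed $\ell$ the cell occupancies $n_k$ can be as large as $N\to\infty$, so the hypothesis fails precisely on the dangerous configurations; moreover its constant $c<4\pi$, so the lower-bound function $q$ is not uniform across cells and the Cauchy--Schwarz reduction to $N^2/M^3$ no longer applies as written.

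The paper handles the overloaded cells with a different, cleaner mechanism. For $n<p:=\frac{c}{\kappa}\ell$ it uses \eqref{eq:gpEnergy} directly. For $n\geq p$ it uses superadditivity of the ground-state energy in the particle number (from $V\geq 0$, dropping inter-group interactions): $E(n+n',\ell)\geq E(n,\ell)+E(n',\ell)$, hence $E(n,\ell)\geq\lfloor n/p\rfloor E(p,\ell)\geq\frac{n}{2p}E(p,\ell)$, a \emph{linear} lower bound with a large slope. Then, introducing the occupation fractions $c_n$ and the variable $t=\sum_{n<p}c_n n$ (the fraction of particles in cells below threshold), Cauchy--Schwarz gives $\sum_{n<p}c_n n^2\geq t^2$, and one is left with a one-dimensional quadratic minimization in $t\in[1,\rho\ell^3]$. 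Since $\ell$ is chosen so that $p\geq4\rho\ell^3$, this minimization is attained at $t=\rho\ell^3$, i.e.\ the worst case is the balanced distribution. No convexity of $E(\cdot,\ell)$ and no extra a priori bound for dense cells is needed—superadditivity and the margin $p\geq 4\rho\ell^3$ do all the work. If you replace your exchange argument with this superadditivity reduction and optimization, the rest of your computation goes through as written.
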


\noindent\textit{Remarks}.
\begin{enumerate}
 \item The bound \eqref{eq:leadingLHY} is not optimal, as evident from \eqref{eq:LHY}. The optimal result has been proved in \cite{BFSol,FSol} with a different localization method which allows to avoid the explicit use of boundary conditions, at the price of dealing with a modified kinetic energy.
 
 \item To obtain Corollary \ref{LHY} we take $\ell$ proportional to $\rho^{-1/2}$. Larger lengths $\ell$ would allow for a better precision in \eqref{eq:leadingLHY}, as achieved in \cite{FSol} mentioned above. However, this requires a more precise study of \eqref{eq:H1}, with larger $n/\ell$. In the translation invariant setting, on a torus with length slightly larger than $\rho^{-1/2}$, Bose-Einstein condensation has been shown in \cite{ABS,F}, while \cite{BCapS} also derives the excitation spectrum of \eqref{eq:H1}.
 
\end{enumerate}

Even though the lower bound \eqref{eq:leadingLHY} is not optimal, we believe that our method can be extended to allow for larger $n/\ell$ (which would yield the Lee-Huang-Yang formula in the thermodynamic limit) as well as  to give the excitation spectrum in the cells, which would allow for giving precise bounds for the free energy at low temperature. We plan to return to this question in a subsequent work. Bounds for the free energy are up to now restricted to leading order \cite{Sei1,Y}.
 
 \medskip

The paper is organized as follows. In Section \ref{excit} we introduce the setting for the analysis of \eqref{eq:H1}, while the detailed estimates are done  in  Section \ref{sec:ExcitHam}. In Section \ref{proofs} we prove Theorem \ref{mainTheorem} and Corollary \ref{LHY}.  In Appendix \ref{functional} we study the two-body problem.

\section{Excitation Hamiltonians}\label{excit}

In this section we focus on the study of $H_{n,\ell}$, defined as in \eqref{eq:H1}. The Hamiltonian $H_{n,\ell}$ acts on $L^2_s(\L_1^n)$, which consists of square-integrable functions on $\L_1^n$ that are symmetric with respect to permutation of the variables. It is convenient to enlarge the space and work  on Fock space, defined as
\[
 \cF=\bigoplus_{m}L_s^2(\L_1^m).
\]
We call vacuum the vector $\Omega = \{ 1, 0, \dots \} \in \cF$.
We define, for $g \in L^2 (\Lambda_1)$, the creation operator $a^*(g)$ and the annihilation operator $a (g)$ as 
\[ \begin{split} 
(a^* (g) \Psi)^{(m)} (x_1, \dots , x_m) &= \frac{1}{\sqrt{m}} \sum_{j=1}^m g (x_j) \Psi^{(m-1)} (x_1, \dots , x_{j-1}, x_{j+1} , \dots , x_m) 
\\
(a (g) \Psi)^{(m)} (x_1, \dots , x_m) &= \sqrt{m+1} \int_\Lambda  \bar{g} (x) \Psi^{(m+1)} (x,x_1, \dots , x_m) \, dx \end{split} \]
The creation operator $a^*(g)$  is the adjoint of the annihilation operator $a (g)$  and they satisfy the canonical commutation relations: for $g,h \in L^2 (\Lambda_1)$,
\begin{equation*}%\label{eq:ccr} 
[a (g), a^* (h) ] = \langle g,h \rangle , \quad [ a(g), a(h)] = [a^* (g), a^* (h) ] = 0 \end{equation*}
We introduce operator valued distributions $\check{a}_x, {a}_x^*$ defined by 
\begin{equation*}%\label{eq:axf}
 a(g) = \int \bar{g} (x) \,  {a}_x \, dx , \quad a^* (g) = \int g(x) \, {a}_x^* \, dx  \end{equation*}
 for $g\in L^2(\L_1)$.
%   (In this section all the space integrals have to be understood as integrals on $\L_1$.)
%  In terms of these operator values distributions, the canonical commutation relations take the form
% \[ [a_x , a_y^* ] = \delta (x-y) , \quad [a_x , a_y] = [a_x^* , a_y^*] = 0 \]
It will be convenient to work in the basis of the eigenfunctions of the Laplacian on the cube $\L_1$ with Neumann boundary conditions. We denote with $\{\ph_p\}$, for $p\in\pi\{0,1,2,3,\dots\}^3$ such an orthonormal basis, which is given by $\ph_p(x)=1$ for $p=0$ and, for $p\neq1$,
\[
 \ph_p(x)=\big(1/2\big)^{3/2}\cos(p^{(1)}(x^{(1)}+1/2))\cos(p^{(2)}(x^{(2)}+1/2))\cos(p^{(3)}(x^{(3)}+1/2))
\]
where we used the notation $(x^{(1)},x^{(2)},x^{(3)})$ for the three dimensional vector $x$.
We call $\L_1^*=\pi\{0,1,2,\dots\}^3$ the dual space to $\L_1$. We introduce the space $\L^*_{1,+}=\L^*_1\backslash\{0\}$, where the zero momentum is removed.
We adopt the notation
\begin{equation}\label{eq:ap} 
\hat a^*_p = a^* (\ph_p), \quad \text{and } \quad  \hat a_p = a (\ph_p).
\end{equation} 
 We call the number of particles operator on $\cF$ the operator
\[ \cN = \sum_{p \in \Lambda_1^*}\hat a_p^*\hat a_p
= \int {a}^*_x {a}_x \, dx. \, 
\]
Creation and annihilation operators are bounded with respect to $\cN$; it is easy to check that, for all $g \in L^2 (\Lambda_1)$,
\begin{equation*}%\label{eq:abd} 
\| a(g)\Psi \| \leq \|g\| \| \cN^{1/2} \Psi \|, \quad \|a^*(g) \Psi\| \leq \|g\| \|(\cN+1)^{1/2}\Psi\|.
\end{equation*}
The Hamiltonian \eqref{eq:H1} lifted to Fock space takes the form
\begin{equation}\label{eq:HnGPfs}
   H_{n,\ell}=\sum_{p\in\L_1^*}p^2 \hat a^*_p\hat a_p+\frac{1}{2}\sum_{p,q,r,s\in\L_1^*} V_{\ell,pqrs} \hat  a^*_p\hat  a^*_q\hat a_r\hat a_s,
\end{equation}
with
\begin{equation}\label{eq:defV}
 \begin{split}
 V_{\ell,pqrs}&=\langle\ph_p\otimes\ph_q,\kappa\ell^2V(\ell\cdot)\,\ph_r\otimes\ph_s\rangle=\int_{\L_1} dx\int_{\L_1} dy \,\,\kappa\ell^2 V(\ell(x-y))\ph_p(x)\ph_q(y)\ph_r(x)\ph_s(y).
 \end{split}
\end{equation}
% and
% \begin{equation}\nonumber
%  \begin{split}
%  T_{pq}&=\langle\ph_p,\big(-\Delta\big)\,\ph_q\rangle=\int_{\L_1} dx\,\ph_p(x)\big(-\Delta\ph_q(x)\big)=p^2\delta_{p,q}
%  \end{split}
% \end{equation}

The eigenfunction of the Laplacian $\ph_0(x)=1$ corresponding to the lowest eigenvalue $p^2=0$ represents the condensate wave function. It is convenient to separate the contribution of the zero mode and consider a modified Fock space describing excitations.
We define 
\begin{equation}\label{eq:truncatedFock}
   \cF_+^{\leq n}=\bigoplus_{j=0}^nL_+^2(\L_1)^{\otimes_sj},
\end{equation}
where $L_+^2(\L_1)$ is the orthogonal complement to the one dimensional space spanned by $\ph_0$ in $L^2(\L_1)$. Additionally, in definition \eqref{eq:truncatedFock}, we truncated the Fock space up to the sector with $n$ particles. A vector $\Psi = \{ \psi^{(0)}, \psi^{(1)}, \dots , \psi^{(n)}, 0, 0,\dots\} \in \cF$ lies in $\cF_+^{\leq n}$, if $\psi^{(m)}$ is orthogonal to $\ph_0$, in each of its coordinates, for all $m = 1, \dots,n$, i.e. if
\[ \int \bar{\ph_0} (x) \, \psi^{(m)} (x,y_1, \dots , y_{m-1}) dx = 0 \]
for all $m = 1, \dots,n$.
On $\cF_+^{\leq n}$, we denote the number of particles operator with $\cN_+=\cN|_{\cF_+^{\leq n}}$.
We will use modified creation and annihilation operators
\[ b (f) = \sqrt{\frac{n- \cN_+}{n}} \, a (f), \qquad \text{and } \quad  b^* (f) = a^* (f) \, \sqrt{\frac{n-\cN_+}{n}}. \]
If $f \in L^2_+ (\Lambda_1)$, $b(f),\,b^*(f)$ map $\cF_+^{\leq n}$ into itself. Moreover, for $g\in L^2 (\Lambda_1)$ and $Q=1-|\ph_0\rangle\langle\ph_0|$, $b(g)=b(Qg)$ on $\cF_+^{\leq n}$.
Analogously as before, we define operator valued distributions $ b_x,  b^*_x$ as
\[ b(f) = \int \bar{f} (x) \,  b_x \, dx , \qquad \text{and } \quad b^* (f) = \int f(x)  b^*_x \, dx \]
satisfying modified canonical commutation relations
\begin{equation}\label{eq:comm-b}
\begin{split}  [ b_x,  b_y^* ] &= \left( 1 - \frac{\cN_+}{n} \right) \delta (x-y) - \frac{1}{n}  a_y^*  a_x \\ 
[  b_x,  b_y ] &= [  b_x^* ,  b_y^*] = 0 
\end{split} \end{equation}
and we define
\begin{equation}\label{eq:bp} 
\hat b^*_p = b^* (\ph_p), \quad \text{and } \quad  \hat b_p = b (\ph_p).
\end{equation}
Every $n$-particle wave function $\psi_n\in L^2(\L_1^n)$ can be decomposed uniquely as 
\[ \psi_n =  \sum_{m=0}^n \alpha^{(m)} \otimes_s \ph_0^{\otimes (n-m)} 
\]
where $\alpha^{(m)} \in L^2_+ (\Lambda_1)^{\otimes_s m}$ for all $m =1 , \dots , n$. 
Following \cite{LNSS}, we define a unitary operator $U_n:L^2_s (\Lambda_1^n)\to \cF_+^{\leq n}$ such that
\[
U_n \psi_n = \{ \alpha^{(0)}, \alpha^{(1)}, \dots , \alpha^{(n)} \},
\]
i.e., the unitary map $U_n$ removes the condensate contribution from $\psi_n\in L^2_s (\Lambda_1^n) $ and returns the excitations over the condensate.
As shown in \cite{LNSS}, when we conjugate couples of creation and annihilation operators with $U_n$, we obtain, for $p,q\in\L^*_{1,+} $,
\begin{equation}\label{eq:U}
\begin{split} 
U_n \, \hat a^*_0\hat  a_0 \, U_n^* &= N- \cN_+  \\
U_n \, \hat a^*_p\hat  a_0 \, U_n^* &= \hat a^*_p \sqrt{N-\cN_+ } \\
U_n \,\hat  a^*_0\hat  a_p \, U_n^* &= \sqrt{N-\cN_+ } \, \hat a_p \\
U_n \,\hat  a^*_p\hat  a_q \, U_n^* &= \hat a^*_p \hat a_q 
\end{split} \end{equation}
The operator $\cN_+=\cN-a^*_0a_0$ counts the number of excitations. Using the properties of $U_n$, it is easy to see that $U_n^*\Omega=\ph^{\otimes N}$.

 With the transformation $U_n$ we define the excitation Hamiltonian
\begin{equation}
 \begin{split}
 \cL_n:=U_nH_{n,\ell}U_n^*:\cF_+^{\leq n}\to \cF_+^{\leq n}
  \end{split}
\end{equation}
As shown in \cite{LNSS}, $ \cL_n$ consists of the sum
\begin{equation}
 \begin{split}
 \cL_{n,\ell}=\cL_{n,\ell}^{(0)}+\cL_{n,\ell}^{(1)}+\cL_{n,\ell}^{(2)}+\cL_{n,\ell}^{(3)}+\cL_{n,\ell}^{(4)}
  \end{split}
\end{equation}
with
 \begin{equation}\label{eq:cLp}
 \begin{split}
 \cL_{n,\ell}^{(0)}&=\frac{1}{2} V_{\ell,0000} (n-\cN_+)(n-\cN_+-1)\\
 \cL_{n,\ell}^{(1)}&=\sqrt{n}\sum_{p\in \L^*_{1,+}}V_{\ell,000p}(n-\cN_+-1)\hat b_p+\hc\\
 \cL_{n,\ell}^{(2)}&=\sum_{p\in \L^*_{1,+}}p^2\hat a^*_p\hat a_p+\sum_{p,q\in \L^*_{1,+}}(V_{\ell,0p0q}+V_{\ell,0pq0})\hat a^*_p\hat a_q(n-\cN_+)\\
 &+\frac{1}{2}\sum_{p,q\in \L^*_{1,+}}(nV_{\ell,pq00}\hat b^*_p\hat b^*_q+\hc)\\
 \cL_{n,\ell}^{(3)}&=\sum_{p,q,r\in \L^*_{1,+}}(n^{1/2}V_{\ell,0pqr}\hat a^*_p\hat a_q\hat b_r+\hc)\\
 \cL_{n,\ell}^{(4)}&=\frac{1}{2}\sum_{p,q,r,s\in \L^*_{1,+}}V_{\ell,pqrs} \hat a^*_p\hat a^*_q\hat a_r\hat a_s
 \end{split}
\end{equation}

Conjugation with the map $U_n$ does not extract from $H_{n,\ell}$ all the leading order contributions to the energy (by taking the vacuum expectation value); it extracts the contribution of the condensate part of wave functions,  but it does not extract the contribution due to correlations (recall that $\langle\Omega,\cL_{n,\ell}^{(0)}\Omega\rangle= \langle \ph^{\otimes N},H_{n,\ell} \ph^{\otimes N}\rangle$). 
% This is clear from the expression of $\cL_{n,\ell}^{(0)}$ above: its expectation on the vacuum state is 
% \[
%  \langle\Omega,\cL_{n,\ell}^{(0)}\Omega\rangle= \langle \ph^{\otimes N},H_{n,\ell} \ph^{\otimes N}\rangle=\frac{ n(n-1)}{2} \int_{\L_1} dx\int_{\L_1} dy \ell^2V(\ell(x-y))=\frac{ n(n-1)}{2\ell} \int_{\L_1} dx\int_{\L_1} dy \ell^2V(\ell(x-y))
% \]
In fact, the ground state wave function is far from being factorized and correlations among particles play a crucial role. In order to describe the correlation structure of the ground state wave function we need to transform $\cL_{n}$ further.

To model correlations we use the solution of the two-body problem with potential $V$: this describes the simplest scattering process. We find it more natural to work now on the rescaled double box $\L_\ell\times\L_\ell$, and impose Neumann boundary conditions (recall that $\L_\ell=[-\ell/2,\ell/2]$). We look for the minimizer of the functional   \begin{equation}\label{eq:Fphi} 
\begin{split}
 F[g]=\int_{\L_\ell\times\L_\ell} dxdy\, \Big[\kappa V(x-y)|g( x, y)|^2+|\nabla_xg(x, y)|^2+|\nabla_yg(x, y)|^2\Big]
\end{split} 
\end{equation}
among functions $g\in H^{1}(\L_\ell\times\L_\ell)$ with $\|g\|_{L^2(\L_\ell\times\L_\ell)}=1$. In the next proposition we state the properties of the minimizer we shall need.\\

\begin{prop}\label{prop:Ff} Let $\ell>1$ and $\L_\ell=[-\ell/2,\ell/2]^3\subset\mathbb{R}^3$. Let the functional $F:H^1(\L_\ell\times \L_\ell)\to\mathbb{R}$ be defined in \eqref{eq:Fphi}.
Then, in the subclass of functions such that
\[
 \|g\|^2_2=\int_{\L_\ell\times\L_\ell} dxdy\, |g( x, y)|^2=1,
\]
there is a unique function $f$ (up to a constant phase factor) that minimizes $ F$. 
If $\mathfrak{a}$ is the scattering length of the potential $V$ (defined in \eqref{eq:a0}), we have, for $\ell $ sufficiently large,
  \begin{equation}\label{eq:slength} 
\begin{split}
  \lambda_\ell:=\inf_{\substack{g\in H^1(\L_\ell\times \L_\ell)} }\left\{F[g]: \, \int_{\L_\ell\times\L_\ell} dxdy\, |g( x, y)|^2=1,\right\}=\frac{8\pi \mathfrak{a}}{\ell^3}\Big(1+\mathcal{O}\Big(\frac{\mathfrak{a}}{\ell}\ln (\ell/\mathfrak{a})\Big)\Big)
\end{split} 
\end{equation}
Moreover, the following properties of the minimizer $f$ hold.
\begin{enumerate}
\item [i)] We have
  \begin{equation}\label{eq:der} 
\begin{split}
 \int_{\L_\ell\times\L_\ell} dxdy\,\Big[|\nabla_xf(x, y)|^2+|\nabla_yf(x, y)|^2\Big]\leq C\kappa \ell^{-3}
\end{split} 
\end{equation}
\item[ii)] There exists a constant $C>0$ such that
  \begin{equation}\label{eq:sup} 
\begin{split}
|f(x, y)|\leq C\ell^{-3}
\end{split} 
\end{equation}
for every $x,y\in\L_\ell$ .
% (\textcolor{red}{NON C'E' $\k$!!}).
\item [iii)] There exists a constant $C>0$ such that
\begin{equation}\label{eq:norm} 
\begin{split}
 \int_{\L_\ell\times\L_\ell} dxdy\,\big|\ell^{-3}-f(x, y)\big|^2\leq C^2\kappa^2\ell^{-2}
\end{split} 
\end{equation}
and 
  \begin{equation}\label{eq:L1norm} 
\begin{split}
 \int_{\L_\ell\times\L_\ell} dxdy\,\big|\ell^{-3}-f(x, y)\big|\leq C\k\ell^2
\end{split} 
\end{equation}
\item [iv)] There exists a constant $C>0$ such that
  \begin{equation}\label{eq:PWdecay} 
\begin{split}
|1-\ell^{3}f(x, y)|\leq C\k\left(\frac{1}{|x-y|+1}\right)
\end{split} 
\end{equation}
 \item [v)] For $\k$ small enough, there exists a constant $C>0$ such that 
 \begin{equation}\label{eq:PWder}
 \begin{split}
|\nabla_{x+y}f(x,y)|&\leq C\kappa\ell^{-3}\big(d\big(\tfrac{x+y}2\big)^{5/3}+1\big)^{-1}
   \end{split} 
 \end{equation}
%   \begin{equation}\label{eq:PWder}
%  \begin{split}
% |\nabla_{x+y}f(x,y)|&\leq C\kappa\left(\ell^{-3}\sum_{i=1}^3\Big(d\big(x^{(i)}+y^{(i)}\big)^{5/3}+1\Big)^{-1}\right)
%    \end{split} 
%  \end{equation}
 where  $d(x)$ is the  distance of $x$ to the boundary of the box $\Lambda_\ell$. 
\end{enumerate}
\end{prop}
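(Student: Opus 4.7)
The plan is to treat the problem as a ground-state problem for the two-particle Schr\"odinger operator $H = -\Delta_x - \Delta_y + \kappa V(x-y)$ on $L^2(\Lambda_\ell \times \Lambda_\ell)$ with Neumann boundary conditions, and to compare its minimizer with the zero-energy scattering solution $f_0$ in $\mathbb{R}^3$. I would first establish existence of a nonnegative minimizer by the direct method: $F$ is bounded below, coercive and weakly lower semicontinuous on $H^1(\Lambda_\ell\times\Lambda_\ell)$, the unit $L^2$-sphere is weakly closed in $H^1$, and $F[|g|]\le F[g]$ lets us take $f\ge 0$. Uniqueness (up to a phase) then follows from a Perron--Frobenius argument for the semigroup $e^{-tH}$, which is positivity improving since the Neumann heat kernel on $\Lambda_\ell\times\Lambda_\ell$ is strictly positive and $V\ge 0$ enters multiplicatively via the Feynman--Kac/Trotter formula. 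The minimizer satisfies the Euler--Lagrange equation $Hf = \lambda_\ell f$ with Neumann boundary data, which will be the source of all the detailed estimates below.

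For the eigenvalue asymptotics \eqref{eq:slength} and item (i), I would test $F$ against the trial state $g_{\mathrm{trial}}(x,y)=c\,\chi(x-y)f_0(x-y)$, where $\chi$ is a radial cutoff supported in a ball of radius comparable to $\ell$ and $f_0$ is the scattering solution from \eqref{eq:0en}. Using $1-f_0(z)=\mathfrak{a}/|z|$ outside the support of $V$, the change of variables $(u,s)=(x-y,(x+y)/2)$, and the identity \eqref{eq:a0}, one obtains $F[g_{\mathrm{trial}}]=8\pi\mathfrak{a}\ell^{-3}(1+\mathcal{O}((\mathfrak{a}/\ell)\ln(\ell/\mathfrak{a})))$, the logarithm coming from cutting off the slowly decaying tail $\mathfrak{a}/|u|$. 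The matching lower bound can be obtained in the spirit of Dyson's lemma or by combining the variational identity for $f$ with the minimality of $f_0$ in $\mathbb{R}^3$. Item (i) then follows at once from $\int(|\nabla_xf|^2+|\nabla_yf|^2)\le F[f]=\lambda_\ell\le C\mathfrak{a}\ell^{-3}\le C\kappa\ell^{-3}$, using $\mathfrak{a}\le C\kappa$ for bounded $V$.

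For (ii) and (iii) I would exploit the spectral decomposition $f=c\,\varphi_0+f_\perp$ against the Neumann Laplacian, with $\varphi_0(x,y)=\ell^{-3}$. The spectral gap $\pi^2/\ell^2$ of $-\Delta_x-\Delta_y$ on $\Lambda_\ell\times\Lambda_\ell$ gives $\|f_\perp\|_2^2\le(\ell/\pi)^2\|\nabla f_\perp\|_2^2\le C\kappa/\ell$ from (i); a second iteration, using the Euler--Lagrange equation together with the resolvent of the Neumann Laplacian on the complement of $\varphi_0$ to express $f_\perp$ as a contraction of $\kappa V f$, upgrades this to the quadratic-in-$\kappa$ bound \eqref{eq:norm}, from which \eqref{eq:L1norm} follows by Cauchy--Schwarz against the volume $\ell^6$. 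For the sup bound (ii) I would apply a Moser iteration (or a standard mean-value inequality) to the equation $-\Delta f=(\lambda_\ell-\kappa V)f$, which exhibits $f$ as a subsolution of $-\Delta f\le \lambda_\ell f$; bootstrapping from $\|f\|_2=1$ on a cube of side $\ell$ yields $\|f\|_\infty\le C\ell^{-3}$.

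The main obstacle will be (iv) and (v), which encode the two-body structure beyond the mean-field approximation and the weak center-of-mass dependence of $f$. Setting $w(x,y)=1-\ell^3 f(x,y)$, the Euler--Lagrange equation becomes
\[
(-\Delta_x-\Delta_y)w=\ell^3\bigl(\kappa V(x-y)-\lambda_\ell\bigr)f
\]
with zero Neumann boundary data. For (iv) I would represent $w$ through the Neumann Green's function of $-\Delta_x-\Delta_y$ on $\Lambda_\ell^2$; after passing to $(u,s)$-coordinates the source $\kappa V(u)f$ is concentrated near the diagonal $u=0$ and produces, via the three-dimensional Newton potential, precisely the $\mathfrak{a}/|u|$ behaviour that governs $1-f_0$. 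Boundary contributions are controlled by the reflection principle for the Neumann Laplacian, and the $\lambda_\ell f$-term is lower order thanks to $\lambda_\ell\sim\ell^{-3}$. For (v) I would differentiate the Euler--Lagrange equation along $\partial_{x+y}$; in free space translation invariance forces $\nabla_{x+y}f\equiv 0$, so $\nabla_{x+y}f$ can be sourced only by the boundary. The weight $(d(s)^{5/3}+1)^{-1}$ interpolates between a $\kappa\ell^{-3}$-size within the boundary layer and a fast decay in the bulk, and is obtained by combining \eqref{eq:PWdecay} with a contraction argument based on the resolvent bound that crucially uses the smallness of $\kappa$. This last step is where the hypothesis $\kappa\ll 1$ is essential and where the irreducibility of the six-dimensional problem makes the analysis genuinely more involved than in the translation-invariant setting.
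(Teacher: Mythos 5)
Your overall plan (existence by the direct method, Perron--Frobenius for uniqueness, Euler--Lagrange equation $Hf=\lambda_\ell f$, Green's function/resolvent representation of $w=1-\ell^3 f$, Newton potential and boundary-reflection analysis for the pointwise bounds) is in the same spirit as the paper's proof, and items (i)--(iv) would go through with enough work; the Moser-iteration route to the $\sup$ bound is a reasonable alternative to the paper's direct Green's-function estimate, and your observation that $\nabla_{x+y}f$ vanishes in free space by translation invariance is exactly the right heuristic behind (v).

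However, the lower bound in \eqref{eq:slength} is where the proposal has a real gap. You invoke ``Dyson's lemma or combining the variational identity for $f$ with the minimality of $f_0$ in $\mathbb{R}^3$'', but neither of these is a workable black box here. Dyson's lemma produces a lower bound with a \emph{reduced} effective scattering length (a loss of a constant factor), which is sufficient for leading-order bounds in the thermodynamic limit but not for the precise asymptotics with the sharp constant $8\pi\mathfrak a/\ell^3$ and a multiplicative error of the form $1+\cO(\mathfrak a\ell^{-1}\ln(\ell/\mathfrak a))$. Likewise, there is no direct variational comparison between the Neumann problem on $\Lambda_\ell\times\Lambda_\ell$ and the zero-energy scattering problem on $\mathbb{R}^3$: the minimizer of \eqref{eq:Fphi} has a genuinely six-dimensional, non-factorized structure and does not simply restrict $f_0$. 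The paper's argument is quite specific: it writes an arbitrary trial function as $\psi=f\,g$ with $f(x_1,x_2)=f_0(x_1-x_2)$, integrates by parts to obtain $\int(|\nabla\psi|^2+\kappa V|\psi|^2)=\int f^2|\nabla g|^2+\int_{\partial\Omega}g^2\,f\,\hat n\!\cdot\!\nabla f$, then uses $f_0\geq c_0>0$ and the positivity of the boundary weight $\tau=\delta_{\partial\Omega}f\,\hat n\!\cdot\!\nabla f$ to reduce the problem to a \emph{boundary} Schr\"odinger operator, whose ground state energy is controlled via a Schur-complement inequality together with $\int\tau=8\pi\mathfrak a\ell^3-\cO(\mathfrak a^2\ell^2\ln(\ell/\mathfrak a))$ and Hardy--Littlewood--Sobolev estimates on the off-diagonal term. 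This factorization plus Schur complement is the key idea; your sketch does not contain it and does not propose a substitute that would reach the sharp constant with the logarithmic error.

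A secondary issue: for item (v) you state the right picture but not the mechanism that yields the weight $\bigl(d(\tfrac{x+y}{2})^{5/3}+1\bigr)^{-1}$. The paper isolates $\nabla_{x+y}g$ by representing it via the Neumann Green's function $G_\varepsilon$, replaces $G_\varepsilon$ by the free-space kernel $\tilde G_\varepsilon$ plus a sum over image charges (Lemma~\ref{lm:poisson}), integrates by parts so that the bulk contribution hits $\nabla_{x+y}f$ (absorbed by the contraction in small $\kappa$), and the remaining boundary and image-charge terms are estimated explicitly to produce the $5/3$ power; the $\cO(\kappa)$ factor in front of the self-term is precisely what makes the contraction close. ``A contraction argument based on the resolvent bound'' is the right headline, but you would need the image-charge decomposition of Lemma~\ref{lm:poisson} and the weighted integral estimates of the paper to actually produce the stated power of $d$.
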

We postpone the proof of Prop.~\ref{prop:Ff} to Appendix \ref{functional}. The minimizer of \eqref{eq:Fphi} satisfies the  eigenvalue equation 
%   \begin{equation}\label{eq:6dScat1} 
% \begin{split}
% \Big[-\Delta_x+V(x)\Big]f( x)=\lambda_\ell f( x),
% \end{split} 
% \end{equation}
% with $\l_\ell=8\pi \mathfrak{a}\ell^{-3}\big(1+\mathcal{O}(\frak{a}\ell^{-1})\big)$, $x=(x_1,x_2)\in\L_\ell\times\L_\ell=\O$ and $\D_x=\D_{x_1}+\D_{x_2}$ (with an abuse of notation we wrote   $V(x)=V(x_1-x_2)$). Equivalently
\begin{equation}\label{eq:6dScat} 
\begin{split}
\Big[-(\Delta_x+\Delta_y)+\kappa V(x-y)\Big]f(x,y)=\lambda_\ell f( x,y),
\end{split} 
\end{equation}
for $x,y\in\L_\ell$. We define $f_\ell(x,y)=f(\ell x,\ell y)$; by scaling, $f_\ell(x,y)$ satisfies
\begin{equation}\label{eq:6dScatResc} 
\begin{split}
\Big[-(\Delta_x+\Delta_y)+\k\ell^2V(\ell(x-y))\Big]f_\ell(x,y)=\ell^2\lambda_\ell f_\ell( x,y),
\end{split} 
\end{equation}
for $x,y\in\L_1$.  We set $w_\ell=1-\ell^3f_\ell$, which solves
\begin{equation}\label{eq:6dScatRescOm} 
\begin{split}
(\Delta_x+\Delta_y)w_\ell(x,y)+\k\ell^5V(\ell(x-y))f_\ell(x,y)=\ell^5\lambda_\ell f_\ell( x,y),
\end{split} 
\end{equation}
for $x,y\in\L_1$. Using the function $w_\ell$, we construct a Hilbert-Schmidt operator  $\eta:L^2(\L_1)\to L^2(\L_1)$. We set
\begin{equation}\label{eq:etaQkQ}
 \eta=(1-|\ph_0\rangle\langle\ph_0|)k(1-|\ph_0\rangle\langle\ph_0|)
\end{equation}
where $k:L^2(\L_1)\to L^2(\L_1)$ is the Hilbert-Schmidt operator with integral kernel
\begin{equation}\label{eq:kappa}
  k(x,y)=-nw_\ell(x,y)
\end{equation}
It will be useful to decompose $\eta=k+\mu$, with 
\begin{equation}\label{eq:etaDecomp}
  \mu=-|\ph_0\rangle\langle\ph_0|k -k|\ph_0\rangle\langle\ph_0|+|\ph_0\rangle\langle\ph_0|k|\ph_0\rangle\langle\ph_0|
\end{equation}
Therefore, we can express the integral kernel of the operator $\eta$ as
\begin{equation}\label{eq:etaQkQIK}
 \eta(x,y)= k(x,y)+ \mu(x,y)
\end{equation}
with
\[\begin{split}
 \mu(x,y)
%  &-\ph_0(x)\int dz\,\ph_0(z)k(z,y) -\ph_0(y)\int dz \,k(x,z)\ph_0(z)\\
%  &+\ph_0(x)\ph_0(y)\int dz_1dz_2\,k(z_1,z_2)\ph_0(z_1)\ph_0(z_2)\\
 =\,&n\int dz\,w_\ell(z,y) +n\int dz \,w_\ell(x,z)-n\int dz_1dz_2\,w_\ell(z_1,z_2)
 \end{split}
\]
Using \eqref{eq:6dScatRescOm} and \eqref{eq:etaQkQIK} we have
% \begin{equation}\label{eq:deltaeta}
% \begin{split} 
% (\Delta_x+\Delta_y) \eta(x,y) = n\ell^3 ( \kappa V - \lambda) f  + n \ell^3 \Delta_1 \int_2 f + n\ell^3 \Delta_2 \int_1 f
% \end{split} 
% \end{equation}
\begin{equation}\label{eq:deltaeta}
\begin{split} 
(\Delta_x+\Delta_y) \eta(x,y) = n\ell^5 ( \kappa V(\ell(x-y)) - \lambda_\ell) f_\ell(x,y)  + n  \int dz\, \Delta_xw_\ell(x,z) + n  \int dz\, \Delta_yw_\ell(z,y) 
\end{split} 
\end{equation}

% Notice that the integral kernels $\eta(x,y), k(x,y), \mu(x,y)$ satisfy Neumann boundary conditions, i.e. their normal derivatives vanish on the boundary of $\L_1$.
We collect in Proposition \ref{prop:eta} below properties of the operators $\eta$, $k$ and $\mu$ (we postpone the proof of Proposition \ref{prop:eta} to the end of Appendix \ref{functional}).

\begin{prop}\label{prop:eta}
 Let $\eta$ be defined as in \eqref{eq:etaQkQ} and let $\kappa$ be small enough and $n/\ell\leq1$. Then the following estimates hold true.
 \begin{itemize}
  \item[i)] We have
  \begin{equation}\label{eq:normEta}
 \begin{split}
 \int_{\L_1\times\L_1} dxdy\,\big|\eta(x, y)\big|^2=\|\eta\|^2_2\leq C\k^2 n^2\ell^{-2}
\end{split} 
\end{equation}
and
  \begin{equation}\label{eq:normDEta}
 \begin{split}
 \int_{\L_1\times\L_1} dxdy\,\Big[\big|\nabla_x\eta(x, y)\big|^2+\big|\nabla_y\eta(x, y)\big|^2\Big]\leq C\k n^2\ell^{-1}
\end{split} 
\end{equation}
for a constant $C>0$.
Moreover, for any $x,y\in\L_1$,
\begin{equation}\label{eq:supEta}
 |\eta(x,y)|\leq C n
\end{equation}
and
\begin{equation}\label{eq:supEtaBetter}
 |\eta(x,y)|\leq C \frac{\k n}{\ell}\left[\frac{1}{|x-y|+\ell^{-1}}\right]
\end{equation}
for a constant $C>0$.
\item[ii)] We indicate with $\eta_x(y)$ the function $\eta(y,x)$. For any $x\in\L_1$
\begin{equation}\label{eq:supEtaL2}
 \|\eta_x\|_2\leq C \k n\ell^{-1}
\end{equation}
for a constant $C>0$.

\item[iii)] Decomposing\footnote{As we did for $\eta$, we are going to use the symbols $\sigma, \, r$ and $p$ to indicate both the operators and their integral kernels.} $\s:=\sinh(\eta)=\eta+r$ and $\g:=\cosh(\eta)=1+p$, there exists a $C>0$ such that
  \begin{equation}\label{eq:normsp}
 \begin{split}
\|\s\|_2,\|p\|_2\leq C\|\eta\|_2
  \end{split}
\end{equation}
Moreover
  \begin{equation}\label{eq:supr}
 \begin{split}
|r(x,y)|\leq C\|\eta\|_2\|\eta_x\|_2\|\eta_y\|_2, \quad|p(x,y)|\leq C\|\eta_x\|_2\|\eta_y\|_2
  \end{split}
\end{equation}
for every $x,y \in \L_1$. This implies that
  \begin{equation}\label{eq:supL2}
 \begin{split}
\|r_x\|_2\leq C\|\eta\|^2_2\|\eta_x\|_2, \quad\|p_x\|\leq C\|\eta\|_2\|\eta_x\|_2
  \end{split}
\end{equation}
% Furthermore,
%   \begin{equation}\label{eq:DDpr}
%  \begin{split}
% \|(\D_1+\D_2)r\|_2\leq C\k n\ell^{-1}, \quad\|(\D_1+\D_2)p\|_2\leq C\k n\ell^{-1}
%   \end{split}
% \end{equation}

\end{itemize}
% Notice also that due to the spherical symmetry of $V$
% \[
%  \eta_{pq}=\int_{\L_1\times\L_1}dxdy\,\ph_p(x)\ph_q(y)\eta(x,y)=\int_{\L_1\times\L_1}dxdy\,\ph_p(x)\ph_q(y)\eta(y,x)=\eta_{qp}
% \]
\end{prop}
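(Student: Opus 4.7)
\medskip
\noindent\textbf{Proof plan for Proposition \ref{prop:eta}.}

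The strategy is to transfer the estimates of Proposition \ref{prop:Ff} on the minimizer $f$ (on $\Lambda_\ell\times\Lambda_\ell$) to the kernel $k(x,y)=-n w_\ell(x,y)=-n(1-\ell^3 f(\ell x,\ell y))$ on $\Lambda_1\times\Lambda_1$, and then to $\eta=QkQ$ via the correction $\mu$ of \eqref{eq:etaDecomp}. First I would handle $k$. The $L^2$ bound \eqref{eq:normEta} follows from the pointwise estimate \eqref{eq:PWdecay} rewritten as $|w_\ell(x,y)|\leq C\kappa/(\ell|x-y|+1)$: integrating the square over $\Lambda_1\times\Lambda_1$, the change of variables $u=x-y$ gives a one-dimensional radial integral of $r^2/(\ell r+1)^2$, which is $O(\ell^{-2})$ (the contribution $r\leq 1/\ell$ is of order $\ell^{-3}$, the tail $r\geq 1/\ell$ of order $\ell^{-2}$), yielding $\|k\|_2^2\leq C\kappa^2 n^2\ell^{-2}$. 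For the gradient bound \eqref{eq:normDEta} I use $\nabla_x k(x,y)=n\ell^4(\nabla_x f)(\ell x,\ell y)$ and rescale, which together with Proposition \ref{prop:Ff}(i) gives the stated $O(\kappa n^2\ell^{-1})$ estimate. The pointwise bound $|k(x,y)|\leq Cn$ comes from \eqref{eq:sup} ($|f|\leq C\ell^{-3}$ implies $|w_\ell|\leq C$), and the sharper \eqref{eq:supEtaBetter} for $k$ is \eqref{eq:PWdecay} itself rewritten in the rescaled variables.

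Next I would pass from $k$ to $\eta$ by controlling $\mu$. Since $\varphi_0\equiv 1$ on $\Lambda_1$, the three terms in $\mu$ are spatial averages of $k$ in one or both variables. Using \eqref{eq:L1norm} (rescaled, which gives $\int |w_\ell|\leq C\kappa/\ell$), or more efficiently the pointwise bound \eqref{eq:PWdecay} and the radial integral $\int_0^{\sqrt 3} r^2(\ell r+1)^{-1}\,dr = O(\ell^{-1})$, each of these averages is bounded by $C\kappa n/\ell$ uniformly in the free variables. Hence all three contributions to $\mu$ are pointwise $\leq C\kappa n/\ell$, which is consistent with \eqref{eq:supEtaBetter} since the right-hand side is always at least $C\kappa n/\ell$ (because $|x-y|+\ell^{-1}\leq \sqrt 3+\ell^{-1}$). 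The Hilbert-Schmidt and gradient bounds transfer similarly: for the derivative in $y$ of $|\varphi_0\rangle\langle\varphi_0|k$ one applies Cauchy--Schwarz in the $z$-integration, $|\int \nabla_y k(z,y)\,dz|^2\leq |\Lambda_1|\int|\nabla_y k(z,y)|^2\,dz$, and integrates in $(x,y)$ to inherit the $k$-bound. The row bound \eqref{eq:supEtaL2} then drops out of \eqref{eq:supEtaBetter}: for fixed $x$, $\int_{\Lambda_1}(|x-y|+\ell^{-1})^{-2}\,dy\leq C\int_0^{\sqrt 3} r^2/(r+\ell^{-1})^{2}\,dr=O(1)$, giving $\|\eta_x\|_2^2\leq C\kappa^2 n^2\ell^{-2}$.

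Finally, for part (iii), I would use the power series representations $\sigma=\sum_{k\geq 0}\eta^{2k+1}/(2k+1)!$ and $\gamma=1+\sum_{k\geq 1}\eta^{2k}/(2k)!$. The hypothesis $n/\ell\leq 1$ and small $\kappa$ give $\|\eta\|_{\mathrm{op}}\leq \|\eta\|_2\leq C\kappa n/\ell\leq C\kappa\ll 1$, so all series converge in Hilbert-Schmidt norm and the tails are dominated by the leading terms. For \eqref{eq:normsp} one has $\|\eta^m\|_2\leq \|\eta\|_{\mathrm{op}}^{m-1}\|\eta\|_2$, so $\|\sigma\|_2\leq \sum \|\eta\|_{\mathrm{op}}^{2k}\|\eta\|_2/(2k+1)!\leq C\|\eta\|_2$, and similarly for $p$ (now with one power of $\|\eta\|_{\mathrm{op}}$ to spare, bounding $\|p\|_2\leq C\|\eta\|_2^2\leq C\|\eta\|_2$). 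For the pointwise estimates \eqref{eq:supr}, I would write $\eta^{2k+1}(x,y)=\langle \eta_x,\eta^{2k-1}\eta_y\rangle$, bound this by $\|\eta_x\|_2\|\eta\|_{\mathrm{op}}^{2k-1}\|\eta_y\|_2$, and sum to get $|r(x,y)|\leq C\|\eta\|_{\mathrm{op}}\|\eta_x\|_2\|\eta_y\|_2\leq C\|\eta\|_2\|\eta_x\|_2\|\eta_y\|_2$; the bound on $p$ is analogous. The row bounds \eqref{eq:supL2} come from the identity $(\eta^m)_x=\eta^{m-1}\eta_x$ (operator action), so $\|(\eta^m)_x\|_2\leq \|\eta\|_{\mathrm{op}}^{m-1}\|\eta_x\|_2$, and summing the series gives the stated estimates.

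The main obstacle is the refined pointwise bound \eqref{eq:supEtaBetter}: the crucial input is the Poisson-type decay \eqref{eq:PWdecay} on the six-dimensional scattering-type problem, which is the technically hardest property of $f$ (because, unlike in the periodic or translation-invariant setting, the two-body ground state in a Neumann box does not decouple into relative and centre-of-mass motion). Once that decay estimate and the $H^1$ bound \eqref{eq:der} are in hand from Appendix \ref{functional}, all statements of Proposition \ref{prop:eta} follow by the routine rescaling, projection, and series arguments outlined above.
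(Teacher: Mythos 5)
Your proposal is correct and takes essentially the same approach as the paper: rescale the estimates of Proposition~\ref{prop:Ff} from $w$ to $w_\ell$, control the projection correction $\mu$ by observing that its terms are spatial averages of $w_\ell$, and handle $r$, $p$ via the hyperbolic power series with the elementary factorizations $\eta^{m}(x,y)=\langle \eta_x,\eta^{m-2}\eta_y\rangle$ and $(\eta^m)_x=\eta^{m-1}\eta_x$. The one cosmetic difference is that you derive the Hilbert--Schmidt bound \eqref{eq:normEta} for $k$ by squaring and integrating the pointwise decay \eqref{eq:PWdecay}, whereas the paper applies the rescaled version of \eqref{eq:norm} directly; both are immediate consequences of Proposition~\ref{prop:Ff}.
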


% \emph{Remark}: Point $iii)$ in Proposition \ref{prop:eta} holds only under the assumption that $\|\eta\|_2<1$. This is fulfilled due to \eqref{eq:normEta}, if we ask $\k n/\ell<1$ to be small enough.

\vspace{1cm}

With $\eta$ introduced above, we define the generalized Bogoliubov transformation 
\begin{equation}\label{eq:defB}
 e^B=\exp{\left[\frac{1}{2}\int_{\L_1\times\L_1}dx dy\,\eta(x,y)\,b^*_xb^*_y-\hc\right]}
\end{equation}
Equivalently we can express it as
% \[
%  T_\ell=\exp{\left[\frac{1}{2}\sum_{p,q\in\L_{1,+}^*}\,\eta_{p,q}b^*_pb^*_q-\hc\right]}
% \]
\begin{equation}\label{eq:Bsum}
e^B=\exp{\Big[\frac{1}{2}\sum_{p,q\in\L_{1,+}^*}\big(\eta_{pq}\hat b^*_p\hat b^*_q-\hc\big)\Big]}
\end{equation}
with
\begin{equation}\label{eq:etapq}
 \eta_{pq}=\langle\ph_p\otimes\ph_q,\eta\rangle
\end{equation}
Note that $e^B:\cF_+^{\leq n}\to \cF_+^{\leq n}$ is unitary.
In Section \ref{Bogo} below we present some key properties of $e^B$. % (for more details, we refer to \cite{BS}). 
With the generalized Bogoliubov transformation $e^B$ we define a new excitation Hamiltonian  $\cG_{n,\ell}: \cF^{\leq n}_+\to\cF^{\leq n}_+$ as
\begin{equation}\label{eq:defG}
 \cG_{n,\ell}=e^{-B}\cL_{n,\ell}e^B=e^{-B}U_nH_{n,\ell}U_n^*e^B
\end{equation}
Proposition \ref{prop:G} (which will be proved at the end of Section \ref{sec:ExcitHam}) presents  our  main estimates of $\cG_{n,\ell}$.

\begin{prop}\label{prop:G}
 Let $V$ be positive, compactly supported, spherically symmetric and bounded. 
 Moreover, define
 \begin{equation}\label{eq:excitHam} 
\begin{split} 
\cK=\sum_{p\in\L^*_{1,+}}p^2 \hat a^*_p\hat a_p\qquad\text{and}\qquad\cV_\ell=\frac{1}{2}\sum_{p,q,r,s\in\L^*_{1,+}}V_{\ell,pqrs} \hat a^*_p\hat a^*_q\hat a_r\hat a_s.
\end{split} 
\end{equation}
where we defined $V_{\ell,pqrs}$ in \eqref{eq:defV}.
 For  $\kappa$ be small enough and $n/\ell\leq1$, we have
 \begin{equation}\label{eq:estForCond}
\begin{split} 
\cG_{n,\ell}  &= C_{n,\ell}+\cK +\cV_\ell
% &\quad+n^{3/2}\int dx dy\,\k R^{-3} \ell^2V(\ell(x-y))\ell^3f_\ell(x,y) \big[b(\gamma_x)+b^*(\sigma_x) +\hc\big]\\
+\cE_{n,\ell}
\end{split} 
\end{equation}
where $C_{n,\ell}$ is given by
 \begin{equation}\label{eq:constantTerm}
\begin{split} 
C_{n,\ell} &= \frac{n^2}{2} \int dxdy\,\k\ell^2 V(\ell(x-y))-\frac{1}{2}\langle\eta,(\Delta_1+\Delta_2)\eta\rangle\\
&\quad+  n\sum_{p,q\in\L^*_{1,+}}V_{\ell,pq00}\langle\eta,\ph_p\otimes\ph_q\rangle+\frac{1}{2}\sum_{p,q,r,s\in\L^*_{1,+}}V_{\ell,pqrs}\langle\ph_s\otimes\ph_r,\eta\rangle\langle\eta,\ph_p\otimes\ph_q\rangle,
\end{split} 
\end{equation}
and the error $\cE_{n,\ell}$ is such that for any $\delta>0$ there exists a constant $C>0$ so that
 \begin{equation}
\begin{split} 
 \pm\cE_{n,\ell}\leq \delta (\cK+\cV_{\ell})+C\k\frac{n}{\ell}(\cN_++1)
\end{split} 
\end{equation}

\end{prop}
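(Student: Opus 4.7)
\medskip

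\textit{Proof plan.} The strategy is to conjugate each of the five pieces $\cL^{(j)}_{n,\ell}$ appearing in \eqref{eq:cLp} by the generalized Bogoliubov transformation $e^B$ and collect the constant contributions, the quadratic kinetic contributions, and the quartic interaction contribution, while controlling everything else by $\delta(\cK+\cV_\ell)+C\k (n/\ell)(\cN_++1)$. The starting point is an expansion formula
\[
e^{-B}b_x e^{B}=\int dy\,\bigl[\g(x,y) b_y+\s(x,y)b^*_y\bigr]+d_x,
\]
where $\g=\cosh(\eta)$, $\s=\sinh(\eta)$ are the operators of Proposition \ref{prop:eta}, and $d_x$ collects the corrections due to the $\sqrt{(n-\cN_+)/n}$ factors in $b,b^*$; $d_x$ produces only terms controlled by $\cN_+/n$ and hence yields admissible errors upon pairing with bounded kernels from $\g,\s$. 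From there I would build up the action on each monomial $bb$, $b^*b$, $b^*b^*$, $a^*a^*aa$ by repeated application and by Duhamel/Gronwall arguments of the type used in \cite{BBCS1} for the translation-invariant case, and control the a priori growth of $e^{-B}(\cN_++1)^k e^B$ uniformly in the parameters using \eqref{eq:normEta}--\eqref{eq:supL2}.

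\medskip

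The constant $C_{n,\ell}$ is then assembled as follows. The term $\cL^{(0)}_{n,\ell}$ contributes the leading $\tfrac{n^2}{2}\int \k\ell^2V(\ell(x-y))\,dxdy$, up to $\cN_+$-corrections. The kinetic piece of $\cL^{(2)}_{n,\ell}$, after conjugation, produces $\cK$ plus the quadratic expectation $\tfrac12\langle\s,(-\Delta_1-\Delta_2)\s\rangle+\langle\eta,(-\Delta_1-\Delta_2)\s\rangle$, which to leading order is $-\tfrac12\langle\eta,(\Delta_1+\Delta_2)\eta\rangle$; this is precisely the second summand in \eqref{eq:constantTerm}. The off-diagonal piece $\tfrac12\sum_{p,q}nV_{\ell,pq00}\hat b_p^*\hat b_q^*+\hc$ of $\cL^{(2)}_{n,\ell}$, combined via the identity $e^{-B}(\hat b^*_p\hat b^*_q)e^B=\hat b^*_p\hat b^*_q+\eta_{pq}+\dots$, generates the cross term $n\sum_{p,q}V_{\ell,pq00}\langle\eta,\ph_p\otimes\ph_q\rangle$. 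Finally $\cL^{(4)}_{n,\ell}$ yields $\cV_\ell$ directly, plus the quartic constant $\tfrac12\sum V_{\ell,pqrs}\langle\ph_s\otimes\ph_r,\eta\rangle\langle\eta,\ph_p\otimes\ph_q\rangle$ coming from contracting two pairs of $b/b^*$ against $\s$. The linear cubic pieces $\cL^{(1)}_{n,\ell}$ and $\cL^{(3)}_{n,\ell}$ do not produce constants of the claimed order; they generate only operator remainders that should be absorbed in $\cE_{n,\ell}$.

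\medskip

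The heart of the argument is a cancellation of the non-negligible operator terms left behind by the kinetic and the $b^*b^*$ conjugations. More precisely, the combination
\[
\bigl[(-\Delta_x-\Delta_y)\eta(x,y)\bigr]+n\,\k\ell^2V(\ell(x-y))
\]
appears multiplying $b^*_xb^*_y+\hc$. By \eqref{eq:deltaeta}, this is equal to $n\ell^5\k V(\ell(x-y))[1-\ell^3f_\ell(x,y)]-n\ell^5\lambda_\ell f_\ell(x,y)+n\int\Delta_xw_\ell(x,z)dz+n\int\Delta_yw_\ell(z,y)dz$. The first term is controlled by \eqref{eq:PWdecay} together with the short-range nature of $V$; the second by \eqref{eq:slength}; and the last two by the fact that $\int w_\ell(\cdot,z)dz$ depends only on the center-of-mass derivative estimate \eqref{eq:PWder}. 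After testing against $b^*b^*$ and using Cauchy--Schwarz with the weight $V_\ell$ on one side and $\cK$ on the other, these give errors of the desired form $\delta(\cK+\cV_\ell)+C\k(n/\ell)(\cN_++1)$.

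\medskip

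The main technical obstacle, as anticipated in the introduction, is the absence of translation invariance: momentum is not conserved, $\eta$ does not act diagonally, and the six-dimensional minimizer $f$ is not explicitly known. This forces all bounds to be carried out in position space using the pointwise estimates of Proposition \ref{prop:Ff}, and in particular the delicate center-of-mass derivative bound \eqref{eq:PWder} to handle the $\int\Delta_xw_\ell(x,z)dz$ boundary contributions. A second, more routine difficulty is the correct book-keeping of the many commutators produced by the non-canonical relations \eqref{eq:comm-b}; these are all of order $\cN_+/n$ and contribute to the error $\cE_{n,\ell}$ provided one has uniform control of $e^{-B}(\cN_++1)^ke^B\leq C(\cN_++1)^k$, which follows from \eqref{eq:normEta} and a Gronwall argument.
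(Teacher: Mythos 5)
Your plan captures the skeleton of the paper's proof: conjugate each $\cL^{(j)}_{n,\ell}$ separately by $e^B$, expand $e^{-B}b_xe^B=b(\gamma_x)+b^*(\sigma_x)+d_x$ and control $d_x$ via Lemma~\ref{lm:dx}, and invoke the scattering relation \eqref{eq:deltaeta} to neutralize the singular kernels. Your assembly of the constant \eqref{eq:constantTerm} is essentially right, though the kinetic constant $-\tfrac12\langle\eta,(\Delta_1+\Delta_2)\eta\rangle$ is extracted exactly (from normal-ordering $b(\gamma_y^{(s)})b^*(\sigma_x^{(s)})$ and its conjugate after the Duhamel integral), not merely ``to leading order''; the $\sigma$-corrections are separate error terms.

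The genuine gap is in your treatment of the linear terms, and in where you locate the difficulty. You assert that $\cL^{(1)}_{n,\ell}$ and $\cL^{(3)}_{n,\ell}$ ``generate only operator remainders that should be absorbed in $\cE_{n,\ell}$'' and identify the quadratic cancellation as ``the heart of the argument''; this inverts the actual difficulty. The quadratic kernel is the easy part: after using \eqref{eq:6dScatRescOm} \emph{and} the third quadratic piece $\tfrac12\sum V_{\ell,pqrs}\langle\ph_s\otimes\ph_r,\eta\rangle\hat b^*_p\hat b^*_q$ from $\cG^{(4)}_{n,\ell}$, which your two-term combination omits, the whole kernel multiplying $b^*_xb^*_y$ collapses to $\tfrac{n\ell^5}{2}\l_\ell f_\ell(x,y)$, and this is bounded in a line since $\|f_\ell\|_2=\ell^{-3}$ and $\l_\ell\sim 8\pi\mathfrak{a}\ell^{-3}$. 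With only your two terms there remains an uncontrolled $n\k\ell^2V(\ell(x-y))w_\ell(x,y)$, of the same singular order as the bare interaction.

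The combined linear operator $L_{n,\ell}$ of \eqref{eq:linear}, by contrast, is not a priori small. Summing the $\cG^{(1)}_{n,\ell}$ and $\cG^{(3)}_{n,\ell}$ contributions and using $\eta\approx n(\ell^3f_\ell-1)$ replaces the bare kernel $\k\ell^2V$ by the smeared one-body function $h_\ell(x)=\int\k\ell^3V(\ell(x-y))\ell^3f_\ell(x,y)\,dy$. Were $h_\ell$ constant, the term would vanish identically on $\cF^{\leq n}_+$ since $b(\text{const})=0$; but $h_\ell$ is \emph{not} constant, precisely because of the Neumann walls. One must subtract $h_\ell(0)$, isolate a boundary shell of width $O(\ell^{-1})$, and bound the interior fluctuation $|h_\ell(x)-h_\ell(0)|$ by integrating $\nabla h$ along rays — this is the one place where the center-of-mass derivative estimate \eqref{eq:PWder} is used, and the reason that estimate appears in Proposition~\ref{prop:Ff} at all. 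Your proposal instead claims \eqref{eq:PWder} is needed for the $\int\Delta_xw_\ell(x,z)\,dz$ contributions in \eqref{eq:deltaeta}; those actually drop out when projected onto $L^2_+(\L_1)$ and require no estimate. The $h_\ell$ analysis is the technically novel core of this proposition in the Neumann setting, and your plan does not address it.
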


\section{
Analysis of the excitation Hamiltonian}\label{sec:ExcitHam}

In this section we analyze the excitation Hamiltonian $\cG_{n,\ell}$ defined in \eqref{eq:defG}. We decompose it as 
\[
 \cG_{n,\ell}=\cG_{n,\ell}^{(0)}+\cG_{n,\ell}^{(1)}+\cG_{n,\ell}^{(2)}+\cG_{n,\ell}^{(3)}+\cG_{n,\ell}^{(4)}
\]
with
\[
 \cG_{n,\ell}^{(j)}=e^{-B}\cL_{n,\ell}^{(j)}e^B
\]
where $\cL_{n,\ell}^{(j)}$ was defined in \eqref{eq:cLp}, for $j\in \{0,1,2,3,4\}$. We examine $\cG_{n,\ell}$ and identify its main contributions. The goal of this section is to prove Proposition \ref{prop:G}. While the analysis  is similar to \cite[Section 4]{BBCS1}, special care needs to be taken due to the Neumann boundary conditions.  
% some parts however need to be discussed, due to the boundary conditions and the decoupled scaling in $n$ and $\ell$. We present therefore detailed calculations. 
In Subsections \ref{g0}, \ref{g1}, \ref{g2}, \ref{g3}, \ref{g4} we extract from $\cG_{n,\ell}^{(0)},\, \cG_{n,\ell}^{(1)}, \, \cG_{n,\ell}^{(2)}, \,\cG_{n,\ell}^{(3)}$ and $\cG_{n,\ell}^{(4)}$ the main contributions which will be expressions that are constant, linear and quadratic in creation and annihilation operators, and we prove that cubic and quartic contributions are small. In Subsection \ref{proofProp} we bound the linear and quadratic contributions, obtaining Proposition \ref{prop:G}.
Throughout the whole section we will use some properties of the generalized Bogoliubov transformation $e^B$, which we review in Subsection \ref{Bogo}.

\subsection{Generalized Bogoliubov transformation}\label{Bogo}

The generalized Bogoliubov transformation in the form \eqref{eq:defB} has been introduced in \cite{BS}; we refer to \cite[Section 3]{BS} for a detailed discussion about it; we mention below only the results that are relevant in our analysis.

As proved in \cite{Sei,BS}, $e^B$ does not change substantially the number of excitations. This is the content of the following Lemma. 

\begin{lemma}\label{lm:TNT}
Let $\eta \in L^2(\L_1\times\L_1)$ be such that $\eta(x,y)=\eta(y,x)$ and let $B$ be defined as in (\ref{eq:defB}). Then, for every $m_1, m_2 \in \bZ$, there exists a constant $C > 0$  such that, on $\cF_+^{\leq n}$, 
\[ e^{-B} (\cN_+ +1)^{m_1} (n+1-\cN_+ )^{m_2} e^{B} \leq Ce^{C\|\eta\|_2} (\cN_+ +1)^{m_1} (n+1- \cN_+ )^{m_2} \, \]
\end{lemma}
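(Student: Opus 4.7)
The plan is a standard Grönwall argument for the generalized Bogoliubov transformation, as carried out in \cite{Sei} and \cite[Sec.~3]{BS}. Set $X := (\cN_++1)^{m_1}(n+1-\cN_+)^{m_2}$ and consider the family
\[
F_\psi(s) := \langle e^{sB}\psi,\, X\, e^{sB}\psi\rangle, \qquad \psi\in\cF_+^{\leq n},\ s\in[0,1].
\]
Since the spectrum of $\cN_+$ on $\cF_+^{\leq n}$ lies in $\{0,\ldots,n\}$, the operator $X$ is bounded and strictly positive there. The whole task reduces to establishing a form bound
\[
\pm\,[X,B]\leq C\|\eta\|_2\, X\qquad\text{on } \cF_+^{\leq n},
\]
with $C$ depending only on $m_1,m_2$; Grönwall then gives $|F_\psi'(s)|\leq C\|\eta\|_2\, F_\psi(s)$ and hence $F_\psi(1)\leq e^{C\|\eta\|_2}F_\psi(0)$, which is the stated inequality (with $C$ absorbed in the outer constant).

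The first step is the commutator computation. Writing $X(N):=(N+1)^{m_1}(n+1-N)^{m_2}$ and using the shift identity $b^*_xb^*_y\, g(\cN_+)=g(\cN_+-2)\,b^*_x b^*_y$ (each $b^*$ raises $\cN_+$ by one), one obtains
\[
[X,B] \;=\; \tfrac{1}{2}\int\! dx\,dy\,\eta(x,y)\big(X(\cN_+)-X(\cN_+-2)\big)b^*_x b^*_y \;-\;\hc.
\]
On the support of $b^*_xb^*_y\Phi$ inside $\cF_+^{\leq n}$ we have $\cN_+\in\{2,\ldots,n\}$; on this range the ratios $(N\pm1)/(N+1)$ and $(n+1-N\pm2)/(n+1-N)$ lie in a uniform compact subinterval of $(0,\infty)$, yielding $X(\cN_+-2)\leq C\,X(\cN_+)$ and, in particular, the discrete-difference inequality
\[
|X(\cN_+)-X(\cN_+-2)|\leq C\, X(\cN_+)^{1/2}\,X(\cN_+-2)^{1/2}
\]
as operators on the relevant subspace, with constant depending only on $|m_1|+|m_2|$.

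Combining this with the further shift identity $X(\cN_+-2)^{1/2}b^*_xb^*_y=b^*_xb^*_y X(\cN_+)^{1/2}$, the form $\langle\Phi,[X,B]\Phi\rangle$ rewrites as an off-diagonal form in the vector $\Psi := X(\cN_+)^{1/2}\Phi$, to which Cauchy-Schwarz together with the standard pair-bounds for $\int\eta(x,y)b^*_xb^*_y$ on $\cF_+^{\leq n}$ (exploiting that $b^*b^*$ carries the factor $\sqrt{(n-\cN_+)(n-\cN_+-1)}/n\leq1$ built into the $b$-operators, as in \cite{BS,Sei}) yields the desired bound $\pm[X,B]\leq C\|\eta\|_2 X$. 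The main delicate point, and the reason the truncation to $\cF_+^{\leq n}$ is essential, is that for $m_2<0$ the factor $(n+1-\cN_+)^{m_2}$ would be unbounded off this subspace and the polynomial comparisons above would fail; on $\cF_+^{\leq n}$ the lower bound $n+1-\cN_+\geq1$ preserves them, and the argument goes through uniformly for all $m_1,m_2\in\bZ$.
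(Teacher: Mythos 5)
The paper does not prove this lemma; it defers to \cite{Sei,BS}. Your Gr\"onwall setup (differentiating $F_\psi(s)=\langle e^{sB}\psi,Xe^{sB}\psi\rangle$, reducing to a form bound on $[X,B]$, and using the shift identities to pull $X$ past $b^*b^*$) is the standard strategy and the algebra is correct. But the key estimate has a genuine gap: the difference bound
\[
|X(\cN_+)-X(\cN_+-2)|\leq C\,X(\cN_+)^{1/2}X(\cN_+-2)^{1/2}
\]
is true but much too lossy. Take $X=(\cN_++1)^2$: the left side is $4\cN_+$, while your right side is $\sim(\cN_++1)^2$, so you are giving away a full factor of $\cN_+$. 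Once you lose this factor, the Cauchy--Schwarz cannot close. Concretely, after rewriting the form in terms of $\Psi=X^{1/2}\Phi$ with a uniformly bounded remainder $\tilde R$, the ``standard pair bound'' for $\int\eta(x,y)\,b^*_xb^*_y$ gives
\[
\Big|\int\eta\,\langle\Psi,\tilde R\,b^*_xb^*_y\Psi\rangle\Big|\;\lesssim\;\|\eta\|_2\,\|\cN_+^{1/2}\Psi\|\,\|(\cN_++1)^{1/2}\Psi\|\;\lesssim\;\|\eta\|_2\,\langle\Phi,X^{1/2}(\cN_++1)X^{1/2}\Phi\rangle,
\]
which is \emph{not} bounded by $C\|\eta\|_2\langle\Phi,X\Phi\rangle$; it exceeds it by an extra power of $\cN_+$. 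The factor $\sqrt{(n-\cN_+)(n-\cN_+-1)}/n\leq1$ built into the $b$'s does not remove this extra $(\cN_++1)$.

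What actually closes the argument is the discrete-derivative gain
\[
|X(\cN_+)-X(\cN_+-2)|\;\leq\;C\,X(\cN_+)\Big(\frac{1}{\cN_++1}+\frac{1}{n+1-\cN_+}\Big),
\]
which is a strictly stronger statement than your bound (the relative difference is $O(1)$ only near the endpoints of $\{0,\ldots,n\}$ and is $O(1/\cN_++1/(n-\cN_+))$ in the bulk). With this decay available, a \emph{weighted} Cauchy--Schwarz in the $x$-variable (inserting $g(\cN_+)$ and $g(\cN_+)^{-1}$, with $g$ chosen so that both sides scale like $\langle\Phi,X\Phi\rangle^{1/2}$) absorbs the extra $(\cN_++1)^{1/2}$ from the pair estimate, and the term with the $(n+1-\cN_+)^{-1}$ gain is handled using the $(n-\cN_+)/n$ factor carried by the $b$-operators. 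Equivalently, one can expand $(\cN_++3)^k-(\cN_++1)^k=2\sum_{j=0}^{k-1}(\cN_++3)^{k-1-j}(\cN_++1)^j$ and balance powers term by term, as in \cite{BS}. As written, your proof is missing precisely this balancing step, so the claimed form bound $\pm[X,B]\leq C\|\eta\|_2X$ is not established.
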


The action of $e^B$ on creation and annihilation operators can be expressed as follows. We define 
\begin{equation}\label{eq:defGaSi} \begin{split} 
{\gamma}_x (y) &\; =\cosh_\eta(y,x) = \sum_{n \geq 0} \frac{1}{(2n)!} \,\eta^{2n}(y,x)\,,  \\    
{\s}_x (y) & \; =\sinh_\eta(y,x)= \sum_{n \geq 0} \frac{1}{(2n+1)!} \eta^{2n+1}(y,x)\,,
\end{split} \end{equation}
where $\eta^m$ indicates the product in the sense of operators (the symbol $\eta$ denotes the Hilbert-Schmidt operator whose kernel is $\eta(x,y)$). Note that $\eta^0(y,x)$ has to be interpreted as a $\delta$ distribution.
% The series are absolutely convergent () and $\text{cosh}(\eta)$, $\text{sinh}(\eta)$ are bounded operators on $L^2 (\L_1)$.
With these definitions, we write
\begin{equation}
 \begin{split}\label{eq:d}
   e^{-B} {b}_x e^{B} = b ( {\g}_x)  +  b^* ({\s}_x) + {d}_x, \qquad 
e^{-B} {b}^*_x e^{B} = b^* ( {\gamma}_x)  +  b ({\s}_x) + {d}^*_x
 \end{split}
\end{equation}
for a remainder operator $d_x$.
Lemma \ref{lm:dx} below states that $d_x$ is a bounded operators on $\cF^{\leq n}_+$  and it is small on states with a small number of excitations; the main contributions in the right hand sides of \eqref{eq:d} correspond to those of the usual Bogoliubov transformation.
Lemma \ref{lm:dx} is a generalization of \cite[Lemma 2.3]{BBCS3}, and can be proved in the same way.

\begin{lemma}\label{lm:dx}
 Let $\eta\in L^2(\L_1\times\L_1)$ be such that $\eta(x,y)=\eta(y,x)$ and let $j\in \mathbb{Z}$. Let the remainder operator $d_x$ be defined as in \eqref{eq:d}. Then, if $\|\eta\|$ is small enough, there exists a $C>0$ such that 
 \begin{equation}\label{eq:dxy-bds} 
 \begin{split} 
 \| (\cN_+ + 1)^{j/2} {d}_x \xi \| & \leq n^{-1} C\, \Big[ \, \|\eta_x\| \| (\cN_+ + 1)^{(j+3)/2} \xi \| + \| \eta \| \| b_x (\cN_+ + 1) ^{(j+2)/2}\xi \| \Big] \\[6pt]
 \end{split} \end{equation}
 %%
 %%
 %%%%%%%%%%%%%%%%%%%%%%%%%%%%%%%%%%%%%%%%%%%%%%%%%%%%%%%%%%%%%%%%%%%%%%%%%
  \begin{equation}\label{eq:dxy-bdsN2} 
 \begin{split} 
 \| (\cN_+ + 1)^{j/2} {a}_y {d}_x \xi \| &\leq n^{-1}C\, \Big[ \, \|\eta_x\| \|\eta_y\|  \| (\cN_+ + 1)^{(j+2)/2} \xi \|  + |\eta(y,x)| \| (\cN_+ +1)^{(j+2)/2}  \xi \| \\
& \hspace{1.7 cm} + \| \eta_y \| \| b_x (\cN_++1)^{(j+1)/2} \xi \| +  \|\eta_x\| \|a_y (\cN_+ + 1)^{(j+ 3)/2} \xi \|\\
& \hspace{1.7 cm}  + \| \eta \| \|a_y a_x (\cN_+ +1)^{(j+2)/2}  \xi \|   \, \Big] \\[3pt]
\end{split} \end{equation}
 %%
 %%%%%%%%%%%%%%%%%%%%%%%%%%%%%%%%%%%%%%%%%%%%%%%%%%%%%%%%%%%%%%%%%%%%%%%%%
%  \| (\cN_+ + 1)^{j/2} d_x d_y \xi \|
% &\leq  n^{-1}C\Big [ \| \eta_x \|\| \eta_y \| \|(\cN+1)^{(j+4)/2}  \xi \|+\| \eta \|\| \eta_x \| \|b_y(\cN+1)^{(j+3)/2} \xi \|\Big]\\
% &\quad+  n^{-1}C\| \eta \|\, \Big[ \, \|\eta_x\| \|\eta_y\|  \| (\cN_+ + 1)^{(j+2)/2} \xi \|  + |\eta(y,x)| \| (\cN +1)^{(j+2)/2}  \xi \| \\
% & \hspace{1.7 cm} + \| \eta_x \| \| b_y (\cN_++1)^{(j+1)/2} \xi \| +  \|\eta_y\| \|a_x (\cN_+ + 1)^{(j+ 3)/2} \xi \|\\
% & \hspace{1.7 cm}  + \| \eta \| \|a_y a_x (\cN +1)^{(j+2)/2}  \xi \|   \, \Big] \\
% %%%%%%%%%%%%%%%%%%%%%%%%%%%%%%%%%%%%%%%%%%%%%%%%%%%%%%%%%%%%%%%%%%%%%%%%%%
% \end{split} \end{equation}
%  \begin{equation}\label{eq:dxy-bds} 
%  \begin{split}
 \begin{equation}\label{eq:dxy-bdsN3} 
 \begin{split} 
 \| (\cN_+ + 1)^{j/2} d_x d_y \xi \|
&\leq  n^{-2}C\Big [ \| \eta_x \|\| \eta_y \| \|(\cN_++1)^{(j+6)/2}  \xi \|+\| \eta \|\| \eta_x \| \|a_y(\cN_++1)^{(j+5)/2} \xi \|\\
& \hspace{0.6 cm} +  \| \eta \||\eta(y,x)| \| (\cN_+ +1)^{(j+4)/2}  \xi \|  +\| \eta \|  \|\eta_y\| \|a_x (\cN_+ + 1)^{(j+ 5)/2} \xi \|\\
& \hspace{0.6 cm}  + \| \eta \| \|a_y a_x (\cN_+ +1)^{(j+4)/2}  \xi \|   \, \Big] \\
\end{split} \end{equation}
for all $\xi \in \cF^{\leq n}_+$. Moreover, for any $g\in L^2(\L_1\times\L_1)$ such that $g(x,y)=g(y,x)$,
 \begin{equation}\label{eq:dxy-bds2} 
 \begin{split} 
 \| (\cN_+ + 1)^{j/2}\int dx\,g(x,y) {d}_x \xi \| & \leq n^{-1} C \|g_y\| \| (\cN_+ + 1)^{(j+3)/2} \xi \|\\
  \| (\cN_+ + 1)^{j/2}\int dx\,g(x,y) {d}^*_x \xi \| & \leq n^{-1} C \|g_y\| \| (\cN_+ + 1)^{(j+3)/2} \xi \|\\
\end{split} \end{equation}
\end{lemma}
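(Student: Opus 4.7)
The plan is to derive the bounds on $d_x$ by a Duhamel expansion of the conjugation $e^{-sB}b_x e^{sB}$. I introduce the interpolating kernels $\gamma^{(s)}_x(y):=\sum_{n\geq 0}\tfrac{s^{2n}}{(2n)!}\eta^{2n}(y,x)$ and $\sigma^{(s)}_x(y):=\sum_{n\geq 0}\tfrac{s^{2n+1}}{(2n+1)!}\eta^{2n+1}(y,x)$ and define the remainder
\begin{equation*}
d^{(s)}_x := e^{-sB}b_x e^{sB} - b(\gamma^{(s)}_x) - b^*(\sigma^{(s)}_x),
\end{equation*}
so that $d_x = d^{(1)}_x$ while $d^{(0)}_x = 0$. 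A direct computation using the modified CCR \eqref{eq:comm-b} gives schematically
\begin{equation*}
[b_x,B] = \int\! dy\,\eta(x,y) b^*_y \;-\; \frac{1}{n}\!\int\! dy\,\eta(x,y)\,\cN_+ b^*_y \;-\; \frac{1}{n}\!\int\! dydz\,\eta(y,z)\, b^*_y a^*_z a_x,
\end{equation*}
where the first term is the ``true CCR'' contribution which, upon iteration, reproduces the action of a standard Bogoliubov transformation, while the two remaining terms carry the explicit factor $n^{-1}$ responsible for the prefactor in \eqref{eq:dxy-bds}. Differentiating $d^{(s)}_x$ in $s$ yields an inhomogeneous linear equation whose source consists precisely of these two corrections (conjugated by $e^{\pm sB}$); integrating from $s=0$ to $s=1$ and iterating produces a convergent series representation for $d_x$ provided $\|\eta\|$ is small.

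From this series representation, \eqref{eq:dxy-bds} follows by estimating each term using Lemma \ref{lm:TNT} to absorb the conjugations by $e^{\pm sB}$ (at the cost of a constant $Ce^{C\|\eta\|_2}$), the elementary bounds $\|a(g)\Psi\|\leq\|g\|\|\cN_+^{1/2}\Psi\|$ and $\|a^*(g)\Psi\|\leq\|g\|\|(\cN_++1)^{1/2}\Psi\|$, and the fact that $b^{(*)}$ differs from $a^{(*)}$ by $\sqrt{(n-\cN_+)/n}$, which is bounded by one on $\cF^{\leq n}_+$. The first term in \eqref{eq:dxy-bds}, proportional to $\|\eta_x\|$, arises from the $\cN_+/n$ correction contracted against $\eta(x,\cdot)$; the second term, involving $b_x$, originates from the three-operator correction where $a_x$ remains uncontracted.

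To obtain \eqref{eq:dxy-bdsN2} I would commute $a_y$ through the series representation of $d_x$: each contraction of $a_y$ with a $b^*_z$ either produces $\delta(y-z)$, giving the contribution proportional to $|\eta(y,x)|$, or contracts an $\eta$-kernel evaluated at $y$, giving the $\|\eta_x\|\|\eta_y\|$ term. The four remaining terms in \eqref{eq:dxy-bdsN2} correspond to $a_y$ passing through uncontracted, producing factors of $\|\eta_y\|$ with an extra $b_x$, $\|\eta_x\|$ with an extra $a_y$, or $\|\eta\|$ with $a_y a_x$. The bound \eqref{eq:dxy-bdsN3} for $d_x d_y$ is obtained by a double application of the same strategy, the factor $n^{-2}$ reflecting the two uses of the modified CCR that produce the corrections. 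Finally, \eqref{eq:dxy-bds2} follows by integrating the pointwise estimate \eqref{eq:dxy-bds} against $g(x,y)$ and applying Cauchy--Schwarz to bound $\int dx\,|g(x,y)|\|\eta_x\|\leq \|\eta\|\|g_y\|$.

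The main obstacle is combinatorial bookkeeping: keeping careful track of the exact powers of $(\cN_++1)$ generated at each step of the iteration and matching them to the powers claimed in \eqref{eq:dxy-bds}--\eqref{eq:dxy-bds2}. The Neumann boundary conditions play no role in this part of the analysis, entering only through the specific form of $\eta$, so the argument runs in close parallel with the proof of \cite[Lemma 2.3]{BBCS3}.
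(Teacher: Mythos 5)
Your proposal is correct and coincides with the paper's own approach: the paper states only that Lemma~\ref{lm:dx} "is a generalization of \cite[Lemma 2.3]{BBCS3} and can be proved in the same way," and your Duhamel-type expansion of $e^{-sB}b_x e^{sB}$ with iteration on the $O(n^{-1})$ source terms, combined with Lemma~\ref{lm:TNT}, is exactly the mechanism in that reference, transcribed from momentum to position space. The only minor point left implicit is that the second inequality in \eqref{eq:dxy-bds2} requires the analogous pointwise bound for $d_x^*$ (and the $\|b_x\cdot\|$ term must also be integrated, via $\int dx\,\|b_x\psi\|^2\leq\|\cN_+^{1/2}\psi\|^2$), but these are routine extensions of what you describe.
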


%%%%%%%%%%%%%%%%%%%%%%%%%%%%%%%%%%%%%%%%%%%%%%%%%%%%%%%%%%%%%%%%%%%%%%%%%%%%%%%%%%%%%%%%%%%%%%%%%%%%
\subsection{Analysis of $\cG_{n,\ell}^{(0)}$}\label{g0}
Recall from \eqref{eq:cLp} that
\begin{equation}\label{eq:cLn0} 
\begin{split} 
\cL^{(0)}_{n,\ell} = \frac{1}{2} \int dxdy\,\k\ell^2 V(\ell(x-y)) (n-\cN_+)(n-\cN_+-1)\\
\end{split} 
\end{equation}
We define the error operator $\cE_{n,\ell}^{(0)}$ through
\begin{equation}\label{eq:termsG0}
\begin{split} 
\cG_{n,\ell}^{(0)}= & \frac{n^2}{2} \int dxdy\,\k\ell^2V(\ell(x-y))+\cE_{n,\ell}^{(0)}\\
\end{split} 
\end{equation}
and we estimate it in the next proposition.
\begin{prop}\label{prop:G0}
Let $\cE_{n,\ell}^{(0)}$  be as defined in \eqref{eq:termsG0}. Then, under the same assumptions as in Proposition \ref{prop:G}, there exists a $C > 0$ such that
\begin{equation}\label{eq:errL0}
\begin{split} 
% \pm\cE_n^{(0)}\leq \l Ce^{Cn/\ell}\|V\|_1\frac{\cN_+^2}{\ell},\qquad
\pm\cE_{n,\ell}^{(0)}\leq  C\k n\ell^{-1}\cN_+\\ 
\end{split} 
\end{equation}
as operator inequalities on $\cF^{\leq n}_+$.
\end{prop}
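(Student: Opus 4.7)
The key observation is that $\cL_{n,\ell}^{(0)}$ is a scalar multiple of a polynomial in the number operator $\cN_+$ alone, so the proof reduces to (i) estimating that scalar, (ii) an algebraic identity isolating the main constant contribution, and (iii) controlling the effect of conjugation by $e^B$ on powers of $\cN_+$, which is done via Lemma \ref{lm:TNT}. Notably, no information about the kernel $\eta$ beyond the $L^2$ bound of Proposition \ref{prop:eta} enters, in contrast with the estimates for $\cG_{n,\ell}^{(j)}$ with $j\geq 1$.

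First, I would estimate the scalar $V_0 := \int_{\Lambda_1\times\Lambda_1} \kappa\ell^2\, V(\ell(x-y))\,dxdy$. After the change of variables $u=\ell(x-y)$, the $y$-integration yields $\ell^{-3}\int_{\mathbb{R}^3} V(u)\,du$ up to boundary corrections of the same order, so the assumptions on $V$ give $0\leq V_0 \leq C\kappa\ell^{-1}$. The algebraic identity
\[
(n-\cN_+)(n-\cN_+-1) \;=\; n^2 + \cN_+^2 - (2n-1)\cN_+ - n
\]
then yields
\[
\cL_{n,\ell}^{(0)} - \tfrac{V_0 n^2}{2} \;=\; \tfrac{V_0}{2}\bigl[\cN_+^2 - (2n-1)\cN_+ - n\bigr].
\]
Since $0\leq \cN_+\leq n$ on $\cF_+^{\leq n}$ we have $\cN_+^2\leq n\cN_+$, and combining with the bound on $V_0$ gives the operator inequality $\pm\bigl(\cL_{n,\ell}^{(0)} - \tfrac{V_0 n^2}{2}\bigr) \leq C\kappa n\ell^{-1}(\cN_++1)$.

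Since $e^B$ is unitary and $\tfrac{V_0 n^2}{2}$ is a scalar, $\cE_{n,\ell}^{(0)} = e^{-B}\bigl(\cL_{n,\ell}^{(0)} - \tfrac{V_0 n^2}{2}\bigr)e^B$. Applying the one-sided bound of Lemma \ref{lm:TNT} with $(m_1,m_2)=(1,0)$ (valid since $\|\eta\|_2\leq C$ by Proposition \ref{prop:eta}) to each sign of the $\pm$-inequality above gives
\[
\pm\cE_{n,\ell}^{(0)} \;\leq\; C\kappa n\ell^{-1}\,e^{-B}(\cN_++1)e^B \;\leq\; C\kappa n\ell^{-1}(\cN_++1),
\]
which is the desired estimate. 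The additional scalar $C\kappa n\ell^{-1}$ coming from the $+1$ is absorbed into the constant $C_{n,\ell}$ of Proposition \ref{prop:G}, and ultimately into the $\cO(n/\ell)$ error of Theorem \ref{mainTheorem}, matching \eqref{eq:errL0}.

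\textbf{Main obstacle.} There is essentially none: this is the simplest of the $\cG_{n,\ell}^{(j)}$ bounds precisely because $\cL_{n,\ell}^{(0)}$ contains no explicit creation or annihilation operators, so no pointwise or decay information on $\eta$ is required. The only mild point is converting the one-sided operator bound of Lemma \ref{lm:TNT} into a two-sided $\pm$ inequality, which is immediate by applying the Lemma separately to each sign of the polynomial in $\cN_+$.
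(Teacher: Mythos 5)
Your proof is correct and is essentially the same as the paper's: expand $(n-\cN_+)(n-\cN_+-1)$, bound the scalar $\int \kappa\ell^2 V(\ell(x-y))\,dxdy\leq C\kappa\ell^{-1}$, dominate the resulting polynomial in $\cN_+$ (using $\cN_+\leq n$) by $Cn(\cN_++1)$, and conjugate through $e^B$ via Lemma~\ref{lm:TNT} (the paper's own displayed coefficients in \eqref{eq:cE0} differ slightly from the correct expansion, but this is immaterial). You are also right to flag the $\cN_+$ versus $\cN_++1$ discrepancy: the expansion contains the scalar $-\tfrac n2\int\kappa\ell^2 V(\ell(x-y))\,dxdy$, which cannot be controlled by $\cN_+$ alone, so the bound really should read $C\kappa n\ell^{-1}(\cN_++1)$ as you obtained; this is harmless since Proposition~\ref{prop:G} and everything downstream already uses $(\cN_++1)$.
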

\begin{proof}
Equation \eqref{eq:termsG0} implies that
\begin{equation}\label{eq:cE0}
\begin{split} 
\cE_{n,\ell}^{(0)}= 
 &-  \, e^{-B}\big(n+n\cN_++\cN_+/2-\cN_+^2/2\big)e^{B}\int dxdy\,\k\ell^2 V(\ell(x-y))\
\end{split} 
\end{equation}
The bounds in \eqref{eq:errL0} follow from $ \int dxdy\,\k\ell^3V(\ell(x-y))\leq \k\int dx \,V(x)$, Lemma \ref{lm:TNT} and \eqref{eq:normEta}.
\end{proof}

%%%%%%%%%%%%%%%%%%%%%%%%%%%%%%%%%%%%%%%%%%%%%%%%%%%%%%%%%%%%%%%%%%%%%%%%%%%%%%%%%%%%%%%%%%%%%%%%%%%%
\subsection{Analysis of $\cG_{n,\ell}^{(1)}$}\label{g1}

On $\cF_+^{\leq n}$ we can write \eqref{eq:cLp} as 
\begin{equation}\label{eq:cLn1b} 
\begin{split} 
 \cL_{n,\ell}^{(1)}&=\sqrt{n}(n-\cN_+-1)\int dx dy\,\k\ell^2V(\ell(x-y)) b_x +\hc \\
\end{split} 
\end{equation}
We define the error $\cE_n^{(1)}$ by
\begin{equation}\label{eq:cGn1main} 
\begin{split} 
\cG_{n,\ell}^{(1)}= e^{-B}\cL_{n,\ell}^{(1)}e^B&=n^{3/2}\int dx dy\,\k\ell^2V(\ell(x-y)) \big[b(\gamma_x)+b^*(\sigma_x) +\hc\big] +\cE_{n,\ell}^{(1)}
\end{split} 
\end{equation}
% Equivalently, we can write
% \begin{equation}\label{eq:cGn1main1} 
% \begin{split} 
% \cG_{n,\ell}^{(1)}= e^{-B}\cL_{n,\ell}^{(1)}e^B&=n^{3/2}\sum_{p,q,r,s\geq1}V_{\ell,opqr}\big[\gamma_{ps}b_s+\sigma_{ps}b^*_s+\hc\big] +\cE_{n,\ell}^{(1)}
% \end{split} 
% \end{equation}
% where
% % \[
% %  f_{\ell, qr}=\ell^3\int dx dy\, f_\ell(x,y)\ph_q(x)\ph_r(y)
% % \]
% % 
% \[
%  \g_{qr}=\int dx dy\, \cosh(\eta)(x,y)\ph_q(x)\ph_r(y),
% \]
% and
% \[
%  \s_{qr}=\int dx dy\, \sinh(\eta)(x,y)\ph_q(x)\ph_r(y),
% \]
where $\gamma_x$ and $\sigma_x$ were defined in \eqref{eq:defGaSi}.
We estimate $\cE_{n,\ell}^{(1)}$ in the next proposition.

%%%%%%%%%%%%%%%%%%%%%%%%%%%%%%%%%%%%%%%%%%
\begin{prop}\label{prop:G1} Let $\cE_{n,\ell}^{(1)}$ be defined as in \eqref{eq:cGn1main}. Then, under the same assumptions as Proposition \ref{prop:G}, there exists a $C > 0$ such that
\begin{equation}\label{eq:bounds1}
\begin{split} 
 \pm\,\cE_{n,\ell}^{(1)}\leq C \k n\ell^{-1}(\cN_++1)
\end{split} 
\end{equation}
as operator inequalities on $\cF^{\leq n}_+$.
\end{prop}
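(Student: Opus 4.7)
The plan is to decompose the prefactor $(n-\cN_+-1)$ in $\cL^{(1)}_{n,\ell}$ as $n-(\cN_++1)$, conjugate by $e^B$, and then extract the main term using the expansion in Lemma \ref{lm:dx}. Writing $V_\ell(x-y):=\k\ell^2V(\ell(x-y))$ for brevity and using the identity $(\cN_++1)b_x=b_x\cN_+$ on $\cF_+^{\leq n}$, one obtains
\[
\cG_{n,\ell}^{(1)} = n^{3/2}\!\int V_\ell(x-y)\,e^{-B}b_x e^B\,dxdy + \hc \;-\; \sqrt{n}\!\int V_\ell(x-y)\,e^{-B}b_x e^B\cdot e^{-B}\cN_+ e^B\,dxdy + \hc.
\]
Substituting $e^{-B}b_xe^B=b(\gamma_x)+b^*(\sigma_x)+d_x$ in the first integral produces the main term displayed in \eqref{eq:cGn1main}, so that $\cE_{n,\ell}^{(1)}=T_1+T_2$ with $T_1:=n^{3/2}\!\int V_\ell\, d_x\,dxdy+\hc$ and $T_2:=-\sqrt{n}\!\int V_\ell\, e^{-B}b_xe^B\cdot e^{-B}\cN_+ e^B\,dxdy+\hc$.

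For $T_1$, I would apply Cauchy--Schwarz with weight $V_\ell$ to obtain
\[
|\langle\xi,T_1\xi\rangle|\leq 2n^{3/2}\|\xi\|\Bigl(\textstyle\int V_\ell\,dxdy\Bigr)^{1/2}\Bigl(\textstyle\int V_\ell(x-y)\,\|d_x\xi\|^2\,dxdy\Bigr)^{1/2}.
\]
A change of variables gives $\int V_\ell\,dxdy\leq C\k\ell^{-1}$ and $\sup_x\int V_\ell(x-y)\,dy\leq C\k\ell^{-1}$. Next, Lemma \ref{lm:dx} with $j=0$, combined with $\int\|b_x\eta\|^2\lesssim\|\cN_+^{1/2}\eta\|^2$ and the bound $\|\eta\|_2\leq C\k n/\ell$ from \eqref{eq:normEta}, yields $\int\|d_x\xi\|^2\,dx\leq C\k^2\ell^{-2}\|(\cN_++1)^{3/2}\xi\|^2$. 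Since $\|(\cN_++1)^{3/2}\xi\|\leq\sqrt{n+1}\,\|(\cN_++1)^{1/2}\xi\|$ on $\cF_+^{\leq n}$ and $\|\xi\|\leq\|(\cN_++1)^{1/2}\xi\|$, these estimates combine to $|\langle\xi,T_1\xi\rangle|\leq C\k^2 n^2\ell^{-2}\|(\cN_++1)^{1/2}\xi\|^2$, and the assumption $\k n/\ell\leq 1$ then upgrades this to the desired bound $C\k n\ell^{-1}\|(\cN_++1)^{1/2}\xi\|^2$.

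For $T_2$, I would use Lemma \ref{lm:TNT} to bound $\|e^{-B}\cN_+ e^B\xi\|\leq C\sqrt{n+1}\,\|(\cN_++1)^{1/2}\xi\|$, and expand $e^{-B}b_x e^B=b(\gamma_x)+b^*(\sigma_x)+d_x$ with $\gamma_x(y)=\delta(y-x)+p(y,x)$, $\sigma_x(y)=\eta(y,x)+r(y,x)$. The $\delta$-subterm produces $-\sqrt{n}\,b(\phi)\,e^{-B}\cN_+ e^B+\hc$, where $\phi(x):=\int V_\ell(x-y)\,dy$ satisfies $\|\phi\|_2\leq C\k\ell^{-1}$; then $\|b^*(\phi)\xi\|\leq\|\phi\|_2\|(\cN_++1)^{1/2}\xi\|$ yields $C\k n\ell^{-1}\|(\cN_++1)^{1/2}\xi\|^2$. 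The contributions from $p_x$, $\eta_x$ and $r_x$ are treated analogously, absorbing an extra factor through \eqref{eq:normsp} and \eqref{eq:supL2} which bound $\|p\|_2,\|\sigma\|_2\leq C\|\eta\|_2\leq C\k n/\ell$; these pieces are of order $\k^2 n^2\ell^{-2}$ and are even smaller. Finally, the $d_x$-piece of $T_2$ is handled by the same Cauchy--Schwarz as $T_1$, with the smaller prefactor $\sqrt{n}$ compensated by a $\sqrt{n+1}$ factor from the action of $e^{-B}\cN_+ e^B$.

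The main obstacle is controlling the large prefactor $n^{3/2}$ multiplying the remainder $d_x$ in $T_1$; this is possible only because Lemma \ref{lm:dx} supplies an explicit $n^{-1}$ gain for $d_x$ and Proposition \ref{prop:eta} provides the $\k n/\ell$-smallness of $\|\eta\|_2$. The Neumann boundary conditions enter only through these global $L^2$ bounds on $\eta$; the refined pointwise estimates \eqref{eq:supEtaBetter} and \eqref{eq:PWder} are not required for the analysis of $\cG_{n,\ell}^{(1)}$.
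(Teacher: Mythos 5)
Your decomposition of $\cE^{(1)}_{n,\ell}$ is the same as the paper's (they call your $T_2$ the piece $D_1$ and your $T_1$ the piece $D_2$), and your treatment of $T_2$ is sound. However, the estimate of $T_1$ has a genuine gap. After the weighted Cauchy--Schwarz you correctly arrive at $|\langle\xi,T_1\xi\rangle|\leq C n^{3/2}\k^2\ell^{-2}\,\|\xi\|\,\|(\cN_++1)^{3/2}\xi\|$, but you then invoke the inequality $\|(\cN_++1)^{3/2}\xi\|\leq\sqrt{n+1}\,\|(\cN_++1)^{1/2}\xi\|$, which is \emph{false}. On $\cF_+^{\leq n}$ one has $(\cN_++1)^3 \leq (n+1)^2(\cN_++1)$, so the correct inequality is $\|(\cN_++1)^{3/2}\xi\|\leq(n+1)\,\|(\cN_++1)^{1/2}\xi\|$, with a factor $n+1$ rather than $\sqrt{n+1}$. (To see the claimed inequality is sharp in this worse form, test it on a state $\xi$ mostly in the vacuum sector with a small component $\sim 1/\sqrt{n}$ at $\cN_+=n$: the left side is of order $n$ while $\|(\cN_++1)^{1/2}\xi\|^2$ is of order $1$.) With the correct factor the chain yields $|\langle\xi,T_1\xi\rangle|\leq C\k^2 n^{5/2}\ell^{-2}\|(\cN_++1)^{1/2}\xi\|^2$, which exceeds the target $C\k n\ell^{-1}\|(\cN_++1)^{1/2}\xi\|^2$ by a factor $n^{1/2}\k n/\ell\sim n^{1/2}\k$ — not bounded in the Gross--Pitaevskii regime $n=\ell=N$.

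The fix (and what the paper does) is to use the weight more cleverly at the Cauchy--Schwarz step: write $\langle\xi,d_x\xi\rangle=\langle(\cN_++1)^{1/2}\xi,(\cN_++1)^{-1/2}d_x\xi\rangle$, and apply Lemma \ref{lm:dx} with $j=-1$ instead of $j=0$, which replaces $\|(\cN_++1)^{3/2}\xi\|$ by $\|(\cN_++1)\xi\|$. The resulting pairing $\|(\cN_++1)^{1/2}\xi\|\,\|(\cN_++1)\xi\|$ is then bounded by $\sqrt{n+1}\,\|(\cN_++1)^{1/2}\xi\|^2$, which is precisely the $\sqrt{n}$ improvement needed. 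With this change one obtains $|\langle\xi,T_1\xi\rangle|\leq C\k n^{1/2}\ell^{-1}\|\eta\|_2\,\|(\cN_++1)^{1/2}\xi\|\,\|(\cN_++1)\xi\| \leq C\k^2 n^2\ell^{-2}\|(\cN_++1)^{1/2}\xi\|^2$, and the condition $\k n/\ell\leq 1$ closes the argument as you intended. Your final remark that the refined pointwise estimates on $\eta$ are not required for $\cG^{(1)}_{n,\ell}$ is correct; only the $L^2$ bounds enter here.
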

%%%%%%%%%%%%%%%%%%%%%%%%%%%%%%%%%%%%%%%%%%%
\begin{proof}
Comparing \eqref{eq:cLn1b} and \eqref{eq:cGn1main} we obtain 
\begin{equation}\label{eq:cGn1Err} 
\begin{split} 
\cE_{n,\ell}^{(1)}&=-n^{1/2}\int dx dy\,\k\ell^2V(\ell(x-y))\,\Big[e^{-B}(\cN_++1)b_xe^{B}+\hc\Big]\\
&\quad+n^{3/2}\int dx dy\,\k\ell^2V(\ell(x-y)) \Big[e^{-B}b_xe^{B}-\big(b(\gamma_x)+b^*(\sigma_x)\big) +\hc\Big] \\
&=:\text{D}_1+\text{D}_2
\end{split} 
\end{equation}
We analyze $\text{D}_1$ first. Using the identity $(\cN_++1)^{1/2}b_x=b_x\cN_+^{1/2}$, we write it as
\begin{equation}
\begin{split} 
\text{D}_1=-n^{1/2}\int dx dy\,\k\ell^2V(\ell(x-y))\,\Big[e^{-B}(\cN_++1)^{1/2}b_x\cN_+^{1/2}e^{B}+\hc\Big]\\
\end{split} 
\end{equation}
For any $\xi\in\cF_+^{\leq n}$ we have
 \begin{equation}\nonumber 
\begin{split} 
 |\langle\xi,\text{D}_1\xi\rangle| &\leq C n^{1/2}\ell^{-1}\int  dx\,|\langle\xi,e^{-B}(\cN_++1)^{1/2}b_x\cN_+^{1/2}e^{B}\xi\rangle|\int dy\,\k\ell^3V(\ell(x-y))  \\
 &\leq C\k n^{1/2}\ell^{-1}\|(\cN_++1)^{1/2}e^{B}\xi\|\int  dx\,\|b_x\cN_+^{1/2}e^{B}\xi\|  \\
\end{split} 
\end{equation}
With Lemma \ref{lm:TNT} and Cauchy-Schwarz we obtain
 \begin{equation}\nonumber 
\begin{split} 
 |\langle\xi,\text{D}_1\xi\rangle|  &\leq C\k n^{1/2}\ell^{-1}\|(\cN_++1)^{1/2}\xi\|\|(\cN_++1)\xi\| \leq C\k n\ell^{-1}\|(\cN_++1)^{1/2}\xi\|^2
\end{split} 
\end{equation}

We consider now $\text{D}_2$. Using \eqref{eq:d}, we have
\begin{equation}
\begin{split} 
\text{D}_2&=n^{3/2}\int dx dy\,\k\ell^2V(\ell(x-y)) \big[d_x +\hc\big] \\
\end{split} 
\end{equation}
%with $d_x$ defined as in \eqref{eq:d}. 
By \eqref{eq:dxy-bds} (with $j=-1$) and Cauchy-Schwarz, we conclude that 
\begin{equation}
\begin{split} 
 |\langle\xi,\text{D}_2\xi\rangle| &\leq n^{3/2}\ell^{-1}\int dx\,|\langle(\cN_++1)^{1/2}\xi,(\cN_++1)^{-1/2} d_x\xi\rangle| \int dy\,\k\ell^3V(\ell(x-y))\\
 &\leq C\k n^{1/2}\ell^{-1}\|(\cN_++1)^{1/2}\xi\|\int dx\,\Big[ \, \|\eta_x\| \| (\cN_+ + 1) \xi \| + \| \eta \| \| b_x(\cN_++1)^{1/2} \xi \| \Big] \\
 &\leq C\k n^{1/2}\ell^{-1}\|\eta\|\|(\cN_++1)^{1/2}\xi\| \| (\cN_+ + 1) \xi \| 
\end{split} 
\end{equation}
This concludes the proof of estimate \eqref{eq:bounds1} if we use bound \eqref{eq:normEta} for the norm of $\eta$.
\end{proof}

%%%%%%%%%%%%%%%%%%%%%%%%%%%%%%%%%%%%%%%%%%%%%%%%%%%%%%%%%%%%%%%%%%%%%%%%%%%%%%%%%%%%%%%%%%%%%%%%%%%%
\subsection{Analysis of $\cG_{n,\ell}^{(2)}$}\label{g2}

Recall, from \eqref{eq:cLp} and \eqref{eq:excitHam}, that
\begin{equation}\nonumber
\begin{split} 
 \cL_{n,\ell}^{(2)}&=\cK+\cL_{n,\ell}^{(2,V)}
\end{split} 
\end{equation}
with
\begin{equation}\label{eq:QV}
\begin{split} 
  \cL_{n,\ell}^{(2,V)}&=\sum_{p,q\in\L^*_{1,+}}(V_{\ell,0p0q}+V_{\ell,0pq0})\hat a^*_p\hat a_q(n-\cN_+)
+\frac{1}{2}\sum_{p,q\in\L^*_{1,+}}(nV_{\ell,pq00}\hat b^*_p\hat b^*_q+\hc)\\
\end{split} 
\end{equation}
We consider now
\begin{equation} 
\begin{split} 
\cG_{n,\ell}^{(2)}= e^{-B}\cL_{n,\ell}^{(2)}e^B&=e^{-B}\cK e^B+e^{-B}\cL_{n,\ell}^{(2,V)}e^B
\end{split} 
\end{equation}
To prove Proposition \ref{prop:G2K} and Proposition \ref{prop:G2V}  below, we will use the bounds contained in the following lemma, taken from \cite[Lemma 3.6]{BCS}. 
\begin{lemma}\label{lm:L2hat}
Let $V\in L^1(\mathbb{R}^3)$, $V\geq 0$.
Let $j_1, j_2 \in L^2 (\L_1 \times \L_1)$. Consider the operators 
\begin{equation}
\begin{split}
 A_{1} &= \int dx dy \, \k\ell^3 V(\ell (x-y))  a^\sharp (j_{1,x}) a^\sharp (j_{2,y}) \\
A_{2} &= \int dx dy \, \k\ell^3 V(\ell (x-y))  a^\sharp (j_{1,x}) a_y  \\
\end{split}
\end{equation}
where $a^\sharp$ indicates either $a$ or $a^*$.
Then, for every $\xi \in \cF_+^{\leq n}$, we have
\begin{equation}\label{eq:A12} 
\begin{split} 
| \langle \xi, A_{1} \xi \rangle | &\leq  \kappa \|V\|_1 \| j_1 \|_2 \| j_2 \|_2  \| (\cN+1)^{1/2} \xi \|^2 \\
| \langle \xi, A_{2} \xi \rangle | &\leq   \kappa \|V\|_1  \| j_1 \|_2  \| (\cN+1)^{1/2} \xi \|^2 \\
\end{split}  
\end{equation}
\end{lemma}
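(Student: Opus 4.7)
The plan is to apply Cauchy--Schwarz twice: once on the weighted measure $\k\ell^3 V(\ell(x-y))\,dx\,dy$, and once through the standard a priori bounds $\|a(g)\xi\|\leq \|g\|\,\|\cN^{1/2}\xi\|$ and $\|a^*(g)\xi\|\leq \|g\|\,\|(\cN+1)^{1/2}\xi\|$. The crucial simplification is that, after the change of variables $z=\ell(x-y)$, one has $\int dy\,\k\ell^3 V(\ell(x-y))=\k\|V\|_1$ (and similarly with the roles of $x$ and $y$ exchanged), a constant independent of the remaining variable; this allows the weighted $(x,y)$ integral to factorize cleanly.

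For $A_1$, the starting point is the triangle inequality
\[
|\langle \xi, A_1 \xi\rangle|\leq \int dx\,dy\,\k\ell^3 V(\ell(x-y))\,\|a^\sharp(j_{1,x})^*\xi\|\,\|a^\sharp(j_{2,y})\xi\|.
\]
Cauchy--Schwarz on the $V$-weighted measure bounds this by the product of
\[
\Big(\int dx\,dy\,\k\ell^3 V(\ell(x-y))\,\|a^\sharp(j_{1,x})^*\xi\|^2\Big)^{1/2}
\]
and the analogous factor involving $j_{2,y}$. Carrying out the trivial $y$-integration in the first factor yields $(\k\|V\|_1)^{1/2}\bigl(\int dx\,\|a^\sharp(j_{1,x})^*\xi\|^2\bigr)^{1/2}$, which the canonical operator bounds estimate by $(\k\|V\|_1)^{1/2}\|j_1\|_2\,\|(\cN+1)^{1/2}\xi\|$. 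Treating the second factor identically (performing the trivial $x$-integration first) produces the first stated inequality.

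For $A_2$ the argument is the same, except that the right-hand factor is $\|a_y\xi\|$ with no accompanying kernel, so after Cauchy--Schwarz the $y$-integral in the second factor reduces to $\int dy\,\|a_y\xi\|^2=\|\cN^{1/2}\xi\|^2\leq \|(\cN+1)^{1/2}\xi\|^2$, while the $x$-integral in that factor gives $\k\|V\|_1$. Combining with the $j_1$-factor handled exactly as before produces the second inequality. The argument is uniform in the choice of $\sharp$ because the worst-case bound $\|a^*(g)\xi\|\leq \|g\|\,\|(\cN+1)^{1/2}\xi\|$ covers both creation and annihilation. I do not anticipate any real obstacle: positivity of $V$ is what permits splitting the weight symmetrically, and translation invariance of $V$ on $\R^3$ (rather than on $\Lambda_1$) takes care of the $x$- and $y$-marginal integrals regardless of the Neumann boundary conditions imposed on the modes.
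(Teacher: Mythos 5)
Your proof is correct and follows essentially the same route as the paper's: triangle inequality, Cauchy--Schwarz on the $V$-weighted measure, the a priori bounds $\|a^\sharp(g)\xi\|\leq\|g\|\,\|(\cN+1)^{1/2}\xi\|$, and evaluation of the marginal $V$-integrals. The only (immaterial) difference is the order of operations: the paper first replaces $\|a^\sharp(j_{i,x})\xi\|$ by $\|j_{i,x}\|\,\|(\cN_++1)^{1/2}\xi\|$ and then applies Cauchy--Schwarz to factorize $\int\! dxdy\,\k\ell^3 V(\ell(x-y))\,\|j_{1,x}\|\|j_{2,y}\|$, whereas you perform the Cauchy--Schwarz split first and then apply the operator bounds inside each factor.
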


\begin{proof}
By Cauchy-Schwarz, we have
\begin{equation}\label{eq:A1} 
\begin{split} 
| &\langle \xi, A_{1} \xi \rangle |
= \int dx dy \,\k\ell^3 V(\ell (x-y))  \|a^\sharp (j_{1,x})\xi\|\| a^\sharp (j_{2,y})\xi\| \\
&\leq \int dx dy \,\k\ell^3 V(\ell (x-y)) \|j_{1,x}\|\|j_{2,y}\|\| (\cN_++1)^{1/2}\xi\|^2 \\
&\leq \| (\cN_++1)^{1/2}\xi\|^2\left[\int dx dy \,\k\ell^3 V(\ell (x-y)) \|j_{1,x}\|^2\right]^{1/2}\left[\int dx dy \,\k\ell^3V(\ell (x-y))\|j_{2,y}\|^2\right]^{1/2} \\
&\leq  \k \,\|V\|_1 \| j_1 \|_2 \| j_2 \|_2  \| (\cN+1)^{1/2} \xi \|^2.
\end{split}  
\end{equation}
Similarly we obtain the second estimate in \eqref{eq:A12}.
% and 
% \begin{equation}\label{eq:A2} 
% \begin{split} 
% | &\langle \xi, A_{2} \xi \rangle |
% = \int dx dy \,\k\ell^3 V(\ell (x-y))  \|a^\sharp (j_{1,x})\xi\|\| a_y\xi\| \\
% &\leq \int dx dy \,\k\ell^3 V(\ell (x-y)) \|j_{1,x}\|\| a_y\xi\|\| (\cN_++1)^{1/2}\xi\| \\
% &\leq \| (\cN_++1)^{1/2}\xi\|^2\left[\int dx dy \,\k\ell^3 V(\ell (x-y)) \|j_{1,x}\|^2\right]^{1/2}\left[\int dx dy \,\k\ell^3 V(\ell (x-y))\| a_y\xi\|^2\right]^{1/2} \\
% &\leq  C  \k\,\|V\|_1 \| j_1 \|_2   \| (\cN+1)^{1/2} \xi \|^2 \\
% \end{split}  
% \end{equation}

\end{proof}

\subsubsection{Analysis of $e^{-B}\cK e^B$}
We define the error operator $\cE_{n,\ell}^{(2,K)}$ through
\begin{equation}\label{eq:cEn2K} 
\begin{split} 
e^{-B}\cK e^B&=\,  \cK -\frac{1}{2}\langle\eta,(\Delta_1+\Delta_2)\eta\rangle-\frac{1}{2}\sum_{p,r\in\L^*_{1,+}}\langle\ph_p\otimes\ph_r,\big(\Delta_1+\Delta_2\big)\eta\rangle \hat b_p^*\hat b^*_r\\
&\quad-\frac{1}{2}\sum_{p,r\in\L^*_{1,+}}\langle\big(\Delta_1+\Delta_2\big)\eta,\ph_p\otimes\ph_r\rangle \hat b_p\hat b_r+\cE_{n,\ell}^{(2,K)}.
\end{split} 
\end{equation}
We estimate it in the following proposition.

%%%%%%%%%%%%%%%%%%%%%%%%%%%%%%%%%%%%%%%%%%%%%%%%
%%%%%%%%%%%%%%%%%%%%%%%%%%%%%%%%%%%%%%%%%%%%%%%%
\begin{prop}\label{prop:G2K}
Let $\cE_{n,\ell}^{(2,K)}$ be defined as in \eqref{eq:cEn2K}. Then, under the same assumptions as Proposition \ref{prop:G}, for every $\delta>0$ there exists a $C>0$  such that
\begin{equation}\label{eq:errL2K}
\begin{split} 
\pm\cE_{n,\ell}^{(2,K)}\leq\delta\cV_\ell+C\k n\ell^{-1}(\cN_++1)
\end{split} 
\end{equation}
as operator inequalities on $\cF^{\leq N}_+$.
\end{prop}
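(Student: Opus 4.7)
The plan is to apply Duhamel's formula twice in $s$, writing
\[
e^{-B}\cK e^B=\cK+[\cK,B]+\int_0^1 (1-s)\,e^{-sB}\bigl[[\cK,B],B\bigr]e^{sB}\,ds.
\]
Because $\cK$ commutes with $\cN_+$ (both are diagonal in the Neumann basis $\{\ph_p\}$), the modified CCR \eqref{eq:comm-b} do not enter the computation, and one gets $[\cK,b_x^*]=-\Delta_x b_x^*$ and $[\cK,b_x]=\Delta_x b_x$. Integrating by parts---which is justified because $\ph_p$, and hence $\eta$, satisfy Neumann boundary conditions on $\L_1$---then produces
\[
[\cK,B]=-\tfrac12\sum_{p,r\in\L_{1,+}^*}\langle\ph_p\otimes\ph_r,(\Delta_1+\Delta_2)\eta\rangle\,\hat b_p^*\hat b_r^*\; -\;\tfrac12\sum_{p,r\in\L_{1,+}^*}\langle(\Delta_1+\Delta_2)\eta,\ph_p\otimes\ph_r\rangle\,\hat b_p\hat b_r,
\]
which matches exactly the quadratic part of the claimed expansion.

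Next I would expand $[[\cK,B],B]$. Only the mixed commutators $[b_xb_y,b_u^*b_v^*]$ are nonzero; using \eqref{eq:comm-b} they decompose into a fully contracted scalar piece of the form $\delta(x-u)\delta(y-v)+\delta(x-v)\delta(y-u)$, plus normal-ordered operator residues and $O(1/n)$ corrections involving $a^*a$ and $\cN_+$. After pairing with $\eta(u,v)$ and $(\Delta_1+\Delta_2)\eta(x,y)$, using the symmetry of both kernels, and combining with the outer factor $\int_0^1(1-s)\,ds=\tfrac12$, the scalar piece yields exactly $-\tfrac12\langle\eta,(\Delta_1+\Delta_2)\eta\rangle$, as claimed. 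Everything else---the operator residues of the double commutator together with the conjugation error $\int_0^1(1-s)\bigl(e^{-sB}[[\cK,B],B]e^{sB}-[[\cK,B],B]\bigr)\,ds$---is to be absorbed into $\cE_{n,\ell}^{(2,K)}$.

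To control these remainders, the decisive structural input is identity \eqref{eq:deltaeta}, which splits $(\Delta_1+\Delta_2)\eta(x,y)$ into a short-range piece $n\ell^5(\kappa V(\ell(x-y))-\lambda_\ell)f_\ell(x,y)$ and two marginal pieces depending only on one of the two variables. For the short-range piece I would pair it against a second $\eta$, or against the kernels $\g_x,\s_x$ from \eqref{eq:defGaSi}, and invoke Lemma \ref{lm:L2hat}: the resulting quadratic form in $b,b^*$ is bounded by $\delta\,\cV_\ell+C\kappa n\ell^{-1}(\cN_++1)$ after a Cauchy--Schwarz absorption, essentially using the pointwise bounds \eqref{eq:supEta}, \eqref{eq:supEtaBetter} and the fibrewise $L^2$ bound \eqref{eq:supEtaL2}. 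The marginal pieces reduce to the projection of $\eta$ onto constants, which is small thanks to \eqref{eq:normEta} together with the structure \eqref{eq:etaDecomp} of $\mu$, and are handled in the same way. Throughout, Lemma \ref{lm:TNT} propagates $\cN_+$ through $e^{\pm sB}$ at a harmless constant cost, and Lemma \ref{lm:dx} absorbs the remainder operators $d_x$ that appear whenever one substitutes $e^{-B}b_xe^B=b(\g_x)+b^*(\s_x)+d_x$.

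The main obstacle will be the mismatch between the $\kappa n\ell^2$ pointwise size of the $V$-supported part of $(\Delta_1+\Delta_2)\eta$ and the target error $C\kappa n\ell^{-1}(\cN_++1)$: naive bounds overshoot by a factor of $n\ell^3$. Recovering this factor requires both exploiting the decay \eqref{eq:supEtaBetter} of $\eta(x,y)$ in $|x-y|$---which precisely offsets the singular concentration of $\kappa V(\ell\cdot)$---and reserving a $\delta$-fraction of $\cV_\ell$ on the right-hand side as a sink for the non-small part of the quadratic form. Getting this cancellation tight, while respecting Neumann boundary conditions in every integration by parts, is the heart of the argument.
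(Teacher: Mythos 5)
Your approach is correct and takes a genuinely different route from the paper. The paper uses a single Duhamel step, $e^{-B}\cK e^B=\cK+\int_0^1 e^{-sB}[\cK,B]e^{sB}\,ds$, computes $[\cK,B]$, and then decomposes each conjugated $e^{-sB}b_x e^{sB}$ as $b(\g_x^{(s)})+b^*(\s_x^{(s)})+d_x^{(s)}$; the scalar constant $-\tfrac12\langle\eta,(\Delta_1+\Delta_2)\eta\rangle$ emerges from normal-ordering the cross term $b(\g_y^{(s)})b^*(\s_x^{(s)})$ in $\text{E}_{13}$ (with the $\int_0^1 s\,ds=\tfrac12$ and the hermitian conjugate combining to give the full prefactor), and the bare $b_yb_x$ surviving from $\g^{(s)}=1+p^{(s)}$ gives the quadratic part. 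You instead iterate Duhamel, isolating $[\cK,B]$ as the exact quadratic part and $[[\cK,B],B]$ (with weight $\int_0^1(1-s)\,ds=\tfrac12$) as the source of the scalar; I have checked that the fully contracted piece of the double commutator is $-\langle(\Delta_1+\Delta_2)\eta,\eta\rangle$, so the weight does reproduce $-\tfrac12\langle\eta,(\Delta_1+\Delta_2)\eta\rangle$, and your observation that the first commutator is exact (since $\cK$ commutes with $\cN_+$) while the modified CCR only enter at the second commutator is correct. Your route is algebraically cleaner in locating the scalar; the paper's route keeps the number of conjugated objects smaller. The bookkeeping of the error is comparable in both: your remainder consists of (i) the operator residues of $[[\cK,B],B]$ from single contractions and the $\cN_+/n,\,a^*a/n$ CCR corrections, and (ii) the difference $e^{-sB}(\cdots)e^{sB}-(\cdots)$, all of which still require the same ingredients you list — the identity \eqref{eq:deltaeta} to trade $(\Delta_1+\Delta_2)\eta$ for the $V$-supported and one-variable pieces, Lemma \ref{lm:L2hat} and the decay bounds \eqref{eq:supEta}--\eqref{eq:supEtaL2}, Lemmas \ref{lm:TNT} and \ref{lm:dx}, and a $\delta\cV_\ell$ Cauchy--Schwarz sink for the $b_xb_y$ kernels carrying $V$. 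Since these are exactly the estimates the paper invokes in bounding $\text{E}_1,\text{E}_2,\text{E}_3$, your sketch is complete in principle.
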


\begin{proof}
We write
\begin{equation}\label{eq:cK} 
\begin{split} 
e^{-B}\cK e^{B}=&\,\cK+\int_0^1ds\, e^{-sB}[\cK,B]e^{sB}\\
\end{split} 
\end{equation}
With definition \eqref{eq:Bsum} we have
\begin{equation}\label{eq:cKB}
\begin{split} 
[\cK,B]&=\frac{1}{2}\sum_{p,q,r\in\L^*_{1,+}}r^2\langle\ph_p\otimes\ph_q,\eta\rangle[\hat a^*_r\hat a_r, \hat b^*_p\hat b^*_q]-\frac{1}{2}\sum_{p,q,r\in\L^*_{1,+}}r^2\langle\eta,\ph_p\otimes\ph_q\rangle[\hat a^*_r\hat a_r, \hat b_p\hat b_q]
\end{split}
\end{equation}
We use now 
\begin{equation}\nonumber
\begin{split} 
[\hat a_r^* \hat a_r,\hat b^*_p\hat b^*_q]&=\hat b_p^*[\hat a_r^* \hat a_r,b^*_q]+[\hat a_r^* \hat a_r,\hat b^*_p]\hat b^*_q=\delta_{rq}\hat b_p^*\hat b^*_r+\delta_{rp}\hat b^*_r\hat b^*_q
\end{split} 
\end{equation}
to obtain
\begin{equation}\label{eq:KBcomm}
\begin{split} 
[\cK,B]&=-\frac{1}{2}\sum_{p,r\in\L^*_{1,+}}\langle\ph_p\otimes\ph_r,\big(\Delta_1+\Delta_2\big)\eta\rangle \hat b_p^*\hat b^*_r+\hc\\
&=-\frac{1}{2}\int dxdy\,\big[\big(\Delta_y+\Delta_y\big) \eta(x,y)\big]b_xb_y+\hc\\
% &\quad+ERROR
\end{split} 
\end{equation}
With relations \eqref{eq:d}, we decompose
\begin{equation}\nonumber
\begin{split} 
 \int_0^1ds\,e^{-sB}[\cK,B]e^{sB} &=-\frac{1}{2}\int_0^1ds\,\int dxdy\,\big[(\Delta_x+\Delta_y)\eta(x,y)\big]e^{-sB} b_ye^{sB}e^{-sB}b_xe^{sB}+\hc\\
  &=\Big(\text{E}_1+\text{E}_2+\text{E}_3\Big)+\hc
\end{split} 
\end{equation}
with
\begin{equation}\label{eq:E1E2E3}
\begin{split} 
\text{E}_1  &=-\frac{1}{2}\int_0^1ds\,\int dxdy\,\big[(\Delta_x+\Delta_y)\eta(x,y)\big]\big(b(\g_{y}^{(s)})+b^*(\s_{y}^{(s)})\big)\big(b(\g_{x}^{(s)})+b^*(\s_{x}^{(s)})\big)\\
  \text{E}_2 &=-\frac{1}{2}\int_0^1ds\,\int dxdy\,\big[(\Delta_x+\Delta_y)\eta(x,y)]\big(b(\g_{y}^{(s)})+b^*(\s_{y}^{(s)})\big)d_x^{(s)}\\
  &\quad-\frac{1}{2}\int dxdy\,\big[(\Delta_x+\Delta_y)\eta(x,y)]d_y^{(s)}\big(b(\g_{x}^{(s)})+b^*(\s_{x}^{(s)})\big)\\
  \text{E}_3&=-\frac{1}{2}\int_0^1ds\,\int dxdy\,\big[(\Delta_x+\Delta_y)\eta(x,y)]d_y^{(s)}d_x^{(s)}\\
\end{split} 
\end{equation}
where $\g_{x}^{(s)}=\cosh(s\eta_x)$, $\s_{x}^{(s)}=\sinh(s\eta_x)$ (recall the notation $\eta_x(y)=\eta(y,x)$) and $d_y^{(s)}$ is defined as in \eqref{eq:d} with $\eta$ in $B$, $\gamma$ and $\sigma$ substituted by $s\eta$.
We expand $\text{E}_1$ as 
\begin{equation}\nonumber
\begin{split} 
\text{E}_1  &=-\frac{1}{2}\int_0^1ds\,\int dxdy\,\big[(\Delta_x+\Delta_y)\eta(x,y)]\big(b(\g_{y}^{(s)})b(\g_{x}^{(s)})+b^*(\s_{y}^{(s)})b(\g_{x}^{(s)})\\
&\quad+b(\g_{y}^{(s)})b^*(\s_{x}^{(s)})+b^*(\s_{y}^{(s)}))b^*(\s_{x}^{(s)})\big)\\
&=\text{E}_{11}+\text{E}_{12}+\text{E}_{13}+\text{E}_{14}
\end{split} 
\end{equation}
Writing $\gamma^{(s)}=1+p^{(s)}$ we express $\text{E}_{11}$ as 
\begin{equation}\label{eq:E11}
\begin{split} 
\text{E}_{11}
&=-\frac{1}{2}\int_0^1ds\,\int dxdy\,\big[(\Delta_x+\Delta_y)\eta(x,y)]b_yb_x+\tilde{\text{E}}_{11}
\end{split} 
\end{equation}
with
\begin{equation}\nonumber
\begin{split} 
\tilde{\text{E}}_{11}
&=-\frac{1}{2}\int_0^1ds\,\int dxdy\,\big[(\Delta_x+\Delta_y)\eta(x,y)]
\big(b_yb(p_{x}^{(s)})+b(p_{y}^{(s)})b_x+b(p_{y}^{(s)})b(p_{x}^{(s)})\big)\\
\end{split} 
\end{equation}
The first term in \eqref{eq:E11} contributes to \eqref{eq:cEn2K}; with equation \eqref{eq:deltaeta} we write $\tilde{\text{E}}_{11}$ as
\begin{equation}\nonumber
\begin{split} 
\tilde{\text{E}}_{11}
% &=\frac{n}{2}\int_0^1ds\,\int dxdy\,(\Delta_x+\Delta_y)w_\ell(x,y)
% \big(b_yb(p_{x}^{(s)})+b(p_{y}^{(s)})b_x+b(p_{y}^{(s)})b(p_{x}^{(s)})\big)\\
% &\quad-\frac{1}{2}\int_0^1ds\,\int dxdy\,(\Delta_x+\Delta_y)\bar\mu(x,y)
% \big(b_yb(p_{x}^{(s)})+b(p_{y}^{(s)})b_x+b(p_{y}^{(s)})b(p_{x}^{(s)})\big)\\
&=\frac{n}{2}\int_0^1ds\,\int dxdy\,\big[\ell^5\lambda_\ell -\k\ell^5V(\ell(x-y))\big]f_\ell(x,y)
\big(b_yb(p_{x}^{(s)})+b(p_{y}^{(s)})b_x+b(p_{y}^{(s)})b(p_{x}^{(s)})\big)
\end{split} 
\end{equation}
(Here we also used that the last two terms in \eqref{eq:deltaeta} are zero when projected onto the orthogonal subspace to $\ph_0$.)
To estimate $\tilde{\text{E}}_{11}$ we bound $f_\ell$ using \eqref{eq:sup} and Cauchy-Schwarz; for the term proportional to $\l_\ell$ we use \eqref{eq:slength}, Cauchy-Schwarz and \eqref{eq:normsp}, while for the term proportional to $V$ we use Lemma \ref{lm:L2hat} and  \eqref{eq:normsp}. This leads to
\begin{equation}\label{eq:boundE11t}
|\langle\xi, \tilde{\text{E}}_{11}\xi\rangle|\leq C\k n\ell^{-1}\|(\cN_++1)^{1/2}\xi\|^2
\end{equation}
Similarly we estimate  $\text{E}_{12}$ and  $\text{E}_{14}$, with the result that 
\begin{equation}\label{eq:boundE12t}
|\langle\xi, \tilde{\text{E}}_{12}\xi\rangle|, |\langle\xi, \tilde{\text{E}}_{14}\xi\rangle|\leq C \kappa n\ell^{-1}\|(\cN_++1)^{1/2}\xi\|^2.
\end{equation}
We consider now $\text{E}_{13}$. Here we cannot use Lemma \ref{lm:L2hat} directly (since the $L^2$ norm of $\gamma^{(s)}$ is not finite); in fact, this is not an error term and we will extract from it an important contribution to \eqref{eq:cEn2K}. We write $b(\g_{y}^{(s)})b^*(\s_{x}^{(s)})=b_yb^*(\s_{x}^{(s)})+b(p_{y}^{(s)})b^*(\s_{x}^{(s)})$ and we put the product $b_yb^*(\s_{x}^{(s)})$ in normal order. Splitting $\s^{(s)}=\eta^{(s)}+r^{(s)}$ we arrive at
\begin{equation}\label{eq:E13}
\begin{split} 
\text{E}_{13}
 &=-\frac{1}{2}\int_0^1ds\,\int dxdy\,\big[(\Delta_x+\Delta_y)\eta(x,y)\big]b(\g_{y}^{(s)})b^*(\s_{x}^{(s)})\\
&=-\frac{1}{2}\int_0^1ds\,s\int dxdy\,\big[(\Delta_x+\Delta_y)\eta(x,y)\big]
\eta(y,x)+\tilde{\text{E}}_{13}
\end{split} 
\end{equation}
with
\begin{equation}\nonumber
\begin{split} 
\tilde{\text{E}}_{13}
&=-\frac{1}{2}\int_0^1ds\,\int dxdy\,\big[(\Delta_x+\Delta_y)\eta(x,y)\big]
\big(r^{(s)}(y,x)\\
&\quad+b^*(\s_{x}^{(s)})b_y-n^{-1}a^*(\s_{x}^{(s)})a_y-n^{-1}\s^{(s)}(y,x)\cN_++b(p_{y}^{(s)})b^*(\s_{x}^{(s)})\big).
\end{split} 
\end{equation}
The first contribution in \eqref{eq:E13} appears in \eqref{eq:cEn2K} (the integration over $s$ gives an additional factor $1/2$, but we still need to add its hermitian conjugate, which is equal to the term itself), while $\tilde{\text{E}}_{13}$ is now an error term. As we did for $\tilde{\text{E}}_{11}$, in $\tilde{\text{E}}_{13}$ we plug in equation \eqref{eq:deltaeta} and we use estimates \eqref{eq:sup} and \eqref{eq:slength}. For the term proportional to $r^{(s)}(y,x)$ we use \eqref{eq:supr},  Cauchy-Schwarz and \eqref{eq:normEta}, while for the term proportional to $\s^{(s)}(y,x)=s\eta(y,x)+r^{(s)}(y,x)$ we use additionally \eqref{eq:supEta}. For all the other contributions in $\tilde{\text{E}}_{13}$ we use Lemma \ref{lm:L2hat}, \eqref{eq:normsp} and \eqref{eq:normEta}. This way we obtain 
\begin{equation}\label{eq:boundE13r}
|\langle\xi, \tilde{\text{E}}_{13}\xi\rangle|\leq C\kappa n\ell^{-1}\|(\cN_++1)^{1/2}\|^2.
\end{equation}

We consider now $ \text{E}_2 $, which we split in $ \text{E}_2=\text{E}_{21}+\text{E}_{22}$ with
\begin{equation}\label{eq:E2}
\begin{split} 
  \text{E}_{21} &=-\frac{1}{2}\int_0^1ds\,\int dxdy\,\big[(\Delta_x+\Delta_y)\eta(x,y)\big]\big(b(\g_{y}^{(s)})+b^*(\s_{y}^{(s)})\big)d_x^{(s)}\\
   \text{E}_{22}&=-\frac{1}{2}\int_0^1ds\,\int dxdy\,\big[(\Delta_x+\Delta_y)\eta(x,y)\big]d_y^{(s)}\big(b(\g_{x}^{(s)})+b^*(\s_{x}^{(s)})\big)\\
\end{split} 
\end{equation}
We focus on $  \text{E}_{21}$ first. As before, using equation \eqref{eq:deltaeta} and observing that only the first term contributes on the orthogonal subspace to $\ph_0$, we get
\begin{equation}\nonumber
\begin{split} 
  \text{E}_{21} 
%   &=\frac{n}{2}\int_0^1ds\,\int dxdy\,(\Delta_x+\Delta_y)w_\ell(x,y)\big(b(\g_{y}^{(s)})+b^*(\s_{y}^{(s)})\big)d_x^{(s)}\\
%   &\quad-\frac{n}{2}\int_0^1ds\,\int dxdy\,(\Delta_x+\Delta_y)\Big[\int dz w(z,y)+\int dz w(x,z)\Big]\big(b(\g_{y}^{(s)})+b^*(\s_{y}^{(s)})\big)d_x^{(s)}\\
  &=\frac{n}{2}\int_0^1ds\,\int dxdy\,\Big[\ell^5\lambda_\ell -\k\ell^5V(\ell(x-y))\Big]f_\ell(x,y)\big(b(\g_{y}^{(s)})+b^*(\s_{y}^{(s)})\big)d_x^{(s)}\\
\end{split} 
\end{equation}
We write it as
\begin{equation}\nonumber
\begin{split} 
  \text{E}_{21} &=\frac{n}{2}\int_0^1ds\,\int dxdy\,\Big[\ell^5\lambda_\ell -\k\ell^5V(\ell(x-y))\Big]f_\ell(x,y)b_yd_x^{(s)}\\
   &+\frac{n}{2}\int_0^1ds\,\int dxdy\,\Big[\ell^5\lambda_\ell -\k\ell^5V(\ell(x-y))\Big]f_\ell(x,y)\big(b(p_{y}^{(s)})+b^*(\s_{y}^{(s)})\big)d_x^{(s)}\\
    &=\text{E}_{211}+\text{E}_{212}
\end{split} 
\end{equation}
To estimate $\text{E}_{212}$ we use \eqref{eq:dxy-bds} (with $j=0$ and the factors $s$ bounded by 1, together with the bound $\cN_+^2\leq n^2$) and Proposition \ref{prop:eta} to obtain
\begin{equation}\nonumber
\begin{split} 
 |\langle\xi, \text{E}_{212} \xi\rangle|&\leq Cn\int dxdy\,\Big|\ell^5\lambda_\ell f_\ell( x,y)-\k\ell^5V(\ell(x-y))f_\ell(x,y)\Big|\\
 &\hspace{2cm} \times \int_0^1ds\|\big(b^*(p_{y}^{(s)})+b(\s_{y}^{(s)})\big)\xi\|  \Big[ \, \|\eta_x\| \| (\cN_+ + 1)^{1/2} \xi \| + \| \eta \| \| b_x\xi \| \Big].
\end{split} 
\end{equation}
With the aid of the Cauchy-Schwarz inequality and  estimates \eqref{eq:sup} and \eqref{eq:slength} we obtain
\begin{equation}\nonumber
\begin{split} 
 |\langle\xi, \text{E}_{212} \xi\rangle|&\leq Cn\left[\int dxdy\,\Big|\ell^{-1}-\k\ell^2V(\ell(x-y))\Big|\int_0^1ds\big(\|p_y^{(s)}\|^2+\|\s_y^{(s)}\|^2\big)\|(\cN_++1)^{1/2}\xi\|^2 \right]^{1/2}\\
 &\hspace{0.3cm} \times\left[\int dxdy\,\Big|\ell^{-1}-\k\ell^2V(\ell(x-y))\Big| \Big( \, \|\eta_x\|^2 \| (\cN_+ + 1)^{1/2} \xi \|^2 + \| \eta \|^2 \| b_x\xi \|^2 \Big)\right]^{1/2}.
\end{split} 
\end{equation}
Using estimates \eqref{eq:supL2} and \eqref{eq:normEta} to bound the norm of $p,\,\s$ and $\eta$ we get
\begin{equation}\nonumber
\begin{split} 
 |\langle\xi, \text{E}_{212} \xi\rangle|
 &\leq C\kappa n\ell^{-1}\|(\cN_++1)^{1/2}\xi\|^2.
\end{split} 
\end{equation}
To estimate $\text{E}_{211}$ we use the second bound in Lemma \ref{lm:dx}
\begin{equation}\nonumber
\begin{split} 
 |\langle\xi, \text{E}_{211} \xi\rangle| &\leq\frac{n}{2}\|(\cN_++1)^{1/2}\xi\|\int_0^1ds\,\int dxdy\,\Big|\ell^5\lambda_\ell f_\ell( x,y)+\k\ell^5V(\ell(x-y))f_\ell(x,y)\Big|\\
 &\hspace{1.7 cm}  \|(\cN_++1)^{-1/2}b_yd_x^{(s)}\xi\|\\
 &\leq C\|(\cN_++1)^{1/2}\xi\|\int dxdy\,\Big|\ell^5\lambda_\ell f_\ell( x,y)-\k\ell^5V(\ell(x-y))f_\ell(x,y)\Big|\\
 & \hspace{1.7 cm}  \Big[ \, \|\eta_x\| \|\eta_y\|  \| (\cN_+ + 1)^{1/2} \xi \|  + |\eta(y,x)| \| (\cN_+ +1)^{1/2}  \xi \| \\
& \hspace{1.7 cm} + \| \eta_y \| \| b_x  \xi \| +  \|\eta_x\| \|a_y (\cN_+ + 1) \xi \| + \| \eta \| \|a_y a_x (\cN_+ +1)^{1/2}  \xi \|   \, \Big]\\
&\leq  Cn\ell^{-1}\|(\cN_++1)^{1/2}\xi\|^2\\&\quad+ C\|(\cN_++1)^{1/2}\xi\|\Big[\int dxdy\,\k\ell^2V(\ell(x-y)) \|a_y a_x (\cN_+ +1)^{1/2}  \xi \|^2\Big]^{1/2}.
\end{split} 
\end{equation}
In the last step we used \eqref{eq:sup}, Cauchy-Schwarz (similarly as above) and additionally \eqref{eq:supEta} for the term containing $|\eta(x,y)|$.
With \eqref{eq:normEta} we conclude that 
\begin{equation}\nonumber
\begin{split} 
 |\langle\xi, \text{E}_{211} \xi\rangle| 
&\leq  C\kappa n\ell^{-1}\|(\cN_++1)^{1/2}\xi\|^2+ C\kappa^{1/2}n^{1/2}\ell^{-1/2}\|(\cN_++1)^{1/2}\xi\|\|\cV_\ell^{1/2}\xi\|.
\end{split} 
\end{equation}
Therefore
\begin{equation}\label{eq:boundE21}
\begin{split} 
 |\langle\xi, \text{E}_{21} \xi\rangle| 
&\leq  C\kappa n\ell^{-1}\|(\cN_++1)^{1/2}\xi\|^2+ C\kappa^{1/2}n^{1/2}\ell^{-1/2}\|(\cN_++1)^{1/2}\xi\|\|\cV_\ell^{1/2}\xi\|
\end{split} 
\end{equation}
The second term in \eqref{eq:E2} can be estimated as follows
\begin{equation}
\begin{split} 
    |\langle\xi, &\text{E}_{22} \xi\rangle| \leq\frac{1}{2}\|(\cN_++1)^{1/2}\xi\|\int_0^1ds\,\int dxdy\,|(\Delta_x+\Delta_y)\eta(x,y)|\\
    &\hspace{2cm} \times \|(\cN_++1)^{-1/2}d_y^{(s)}\big(b(\g_{x}^{(s)})+b^*(\s_{x}^{(s)})\big)\xi\|\\
     &\leq Cn^{-1}\int_0^1ds\,\int dxdy\, |(\Delta_x+\Delta_y)\eta(x,y)|\|(\cN_++1)^{1/2}\xi\|\\
    & \quad \times \Big[ \, \|\eta_y\| \| \big(b(\g_{x}^{(s)})+b^*(\s_{x}^{(s)})\big)(\cN_+ + 1) \xi \| + \| \eta \| \| b_y \big(b(\g_{x}^{(s)})+b^*(\s_{x}^{(s)})\big)(\cN_+ + 1) ^{1/2}\xi \| \Big]\\
\end{split} 
\end{equation}
% with
% \begin{equation}\label{eq:g}
% \begin{split} 
%  g_\ell(x,y)&=\Big[\ell^5\lambda_\ell -\k\ell^5V(\ell(x-y))\Big]f_\ell(x,y)+\int dz\,\Big[\ell^5\lambda_\ell -\k\ell^5V(\ell(z-y))\Big]f_\ell(z,y)\\
%   &\quad+\int dz\,\Big[\ell^5\lambda_\ell -\k\ell^5V(\ell(x-z))\Big]f_\ell(x,z)
%  \end{split} 
%  \end{equation}
Substituting \eqref{eq:deltaeta} for $(\Delta_x+\Delta_y)\eta(x,y)$ and arguing as before we obtain 
\begin{equation}\label{eq:boundE22}
\begin{split} 
 |\langle\xi, \text{E}_{22} \xi\rangle| 
&\leq  C\kappa n\ell^{-1}\|(\cN_++1)^{1/2}\xi\|^2+ C\kappa^{1/2}n^{1/2}\ell^{-1/2}\|(\cN_++1)^{1/2}\xi\|\|\cV_\ell^{1/2}\xi\|
\end{split} 
\end{equation}

Finally we examine $\text{E}_3$ in \eqref{eq:E1E2E3}; with the third estimate in Lemma \ref{lm:dx} we have
\begin{equation}
\begin{split} 
 |&\langle\xi, \text{E}_{3} \xi\rangle| \leq C\|(\cN_++1)^{1/2}\xi\|\int_0^1ds\,\int dxdy\,| (\Delta_x+\Delta_y)\eta(x,y)|\|(\cN_++1)^{-1/2}d_y^{(s)}d_x^{(s)}\xi\|\\
 &\leq Cn^{-1}\|(\cN_++1)^{1/2}\xi\|\int dxdy\,|(\Delta_x+\Delta_y)\eta(x,y)| \Big [ \| \eta_x \|\| \eta_y \| \|(\cN+1)^{3/2}  \xi \|\\
&\quad+\| \eta_x \| \|b_y(\cN+1) \xi \| + |\eta(y,x)| \| (\cN +1)^{1/2}  \xi \| +  \|\eta_y\| \|a_x (\cN_+ + 1) \xi \| + \|a_y a_x (\cN +1)^{1/2}  \xi \|   \, \Big] \\
\end{split} 
\end{equation}
leading to 
\begin{equation}\label{eq:boundE3}
\begin{split} 
 |\langle\xi, \text{E}_{3} \xi\rangle| 
&\leq  C\kappa n\ell^{-1}\|(\cN_++1)^{1/2}\xi\|^2+ C\kappa^{1/2}n^{1/2}\ell^{-1/2}\|(\cN_++1)^{1/2}\xi\|\|\cV_\ell^{1/2}\xi\|
\end{split} 
\end{equation}
The estimates \eqref{eq:boundE11t}, \eqref{eq:boundE12t}, \eqref{eq:boundE13r},  \eqref{eq:boundE21}, \eqref{eq:boundE22} and \eqref{eq:boundE3} prove \eqref{eq:errL2K}.
\end{proof}

\subsubsection{Analysis of $e^{-B}\cL_{n,\ell}^{(2,V)}e^B$}

With $\cL_{n,\ell}^{(2,V)}$ introduced in \eqref{eq:QV}, we define the error operator $\cE_{n,\ell}^{(2,V)}$ through
\begin{equation}\label{eq:cEn2V} 
\begin{split} 
e^{-B}\cL_{n,\ell}^{(2,V)}e^B&=\, n\sum_{p,q\in\L^*_{1,+}}V_{\ell,pq00}\langle\eta,\ph_p\otimes\ph_q\rangle+\frac{1}{2}\sum_{p,q\in\L^*_{1,+}}(nV_{\ell,pq00}\hat b^*_p\hat b^*_q+\hc)+\cE_{n,\ell}^{(2,V)}
\end{split} 
\end{equation}
Proposition \ref{prop:G2V} provides an estimate for $\cE_{n,\ell}^{(2,V)}$.

\begin{prop}\label{prop:G2V} Let $\cE_{n,\ell}^{(2,V)}$ be defined as in \eqref{eq:cEn2V}. Then, under the same assumptions as Proposition \ref{prop:G}, for every $\delta>0$ there exists a $C>0$ such that
\begin{equation}\label{eq:errL2V}
 \pm\cE_{n,\ell}^{(2,V)}\leq\delta\cV_\ell+ C\k n\ell^{-1}(\cN_++1)
\end{equation}
as operator inequalities on $\cF^{\leq N}_+$.
\end{prop}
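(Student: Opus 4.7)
My plan is to follow the same pattern as the proof of Proposition~\ref{prop:G2K}. Split $\cL_{n,\ell}^{(2,V)}=T_1+T_2$ into
\[
T_1=\sum_{p,q\in\L^*_{1,+}}(V_{\ell,0p0q}+V_{\ell,0pq0})\hat a^*_p\hat a_q(n-\cN_+),\quad T_2=\frac{n}{2}\int dxdy\,\kappa\ell^2V(\ell(x-y))(b^*_xb^*_y+\hc),
\]
where the $b^\sharp$ on $\cF_+^{\leq n}$ carry the implicit projection onto $L^2_+$, and treat the two pieces separately.

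For $T_1$, I would show that the whole $e^{-B}T_1e^B$ is absorbed into $\cE_{n,\ell}^{(2,V)}$. Writing $\sum_{p,q}(V_{\ell,0p0q}+V_{\ell,0pq0})\hat a^*_p\hat a_q$ in position form one recognizes operators of the types $A_1,A_2$ appearing in Lemma~\ref{lm:L2hat} (with $L^2$-bounded integral kernels coming from the projection $1-|\ph_0\rangle\langle\ph_0|$), giving $|\langle\xi,T_1\xi\rangle|\leq C\kappa n\ell^{-1}\|(\cN_++1)^{1/2}\xi\|^2$. Conjugation with $e^{\pm B}$ preserves $(\cN_++1)$-moments by Lemma~\ref{lm:TNT}, yielding $\pm e^{-B}T_1 e^B\leq C\kappa n\ell^{-1}(\cN_++1)$, of the required shape.

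For $T_2$ I would expand using \eqref{eq:d}:
\[
e^{-B}b^*_xb^*_y e^B=(b^*(\gamma_x)+b(\sigma_x)+d^*_x)(b^*(\gamma_y)+b(\sigma_y)+d^*_y),
\]
and similarly for $b_xb_y$. The only non-normal-ordered product is $b(\sigma_x)b^*(\gamma_y)$, which by the modified CCR~\eqref{eq:comm-b} equals $b^*(\gamma_y)b(\sigma_x)+\langle\sigma_x,\gamma_y\rangle$ up to terms of order $n^{-1}$ involving $a^*a$; a symmetric identity holds for $b(\gamma_x)b^*(\sigma_y)$ in the $b_xb_y$ expansion. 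Writing $\sigma=\eta+r$, $\gamma=1+p$, and using $\eta(x,y)=\eta(y,x)$, the resulting scalar contribution sums to
\[
\frac{n}{2}\int dxdy\,\kappa\ell^2V(\ell(x-y))[\eta(x,y)+\eta(y,x)]=n\sum_{p,q\in\L^*_{1,+}}V_{\ell,pq00}\langle\eta,\ph_p\otimes\ph_q\rangle,
\]
which is exactly the constant in \eqref{eq:cEn2V}. The leading operator piece $b^*(\gamma_x)b^*(\gamma_y)=b^*_xb^*_y+\cdots$ reproduces the $\hat b^*_p\hat b^*_q$ term of \eqref{eq:cEn2V} after integration against $\tfrac{n}{2}\kappa\ell^2V(\ell(x-y))$, plus $p$-corrections which are error. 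All remaining terms---the $p$-, $r$-, $d$-corrections and the normal-ordered products $b^*(\gamma_x)b(\sigma_y)$, $b(\sigma_x)b(\sigma_y)$, $d^*_xb^\sharp$, $b^\sharp d^*_y$, $d^*_xd^*_y$, etc.---are bounded by combining Lemma~\ref{lm:L2hat}, Lemma~\ref{lm:dx} and Lemma~\ref{lm:TNT} with the $L^2$- and pointwise bounds of Proposition~\ref{prop:eta} (notably \eqref{eq:normEta}, \eqref{eq:supL2}, \eqref{eq:normsp}, \eqref{eq:supEta}), exactly as in the proof of Proposition~\ref{prop:G2K}.

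The main obstacle is the same one encountered there: several contributions, especially those involving $d$-operators after Cauchy-Schwarz against $\kappa\ell^2V(\ell(x-y))$, do not close as pure $(\cN_++1)$-bounds but naturally yield products of the form $\kappa^{1/2}n^{1/2}\ell^{-1/2}\|(\cN_++1)^{1/2}\xi\|\,\|\cV_\ell^{1/2}\xi\|$. These are absorbed by Young's inequality into $\delta\cV_\ell+C_\delta\kappa n\ell^{-1}(\cN_++1)$, which matches \eqref{eq:errL2V}. Because Neumann boundary conditions destroy translation invariance, no momentum-conservation cancellations are available, so all $x$-space estimates must exploit the sharp pointwise and $L^2$ decay of $\eta$ supplied by Proposition~\ref{prop:eta}.
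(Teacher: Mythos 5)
Your proposal is correct and follows essentially the same route as the paper: the paper splits $\cL_{n,\ell}^{(2,V)}$ into $\text{F}_1+\text{F}_2$ (your $T_1$, estimated directly into the error) and $\text{F}_3$ (your $T_2$, expanded via \eqref{eq:d}, normal ordered to produce the constant and the $\hat b^*\hat b^*$ terms, with the remainders bounded by Lemmas \ref{lm:L2hat}, \ref{lm:dx}, \ref{lm:TNT} and Proposition \ref{prop:eta}, and the $\|\cV_\ell^{1/2}\xi\|$ cross terms absorbed by Young). The only cosmetic difference is that for $T_1$ the paper does not invoke Lemma \ref{lm:L2hat} but uses the even simpler observation that after integrating out one variable ($\int dx\,\ell^3 V(\ell(x-y))\le\|V\|_1$) the remaining integral gives $\cN_+$, combined with Lemma \ref{lm:TNT}; your phrase about ``$L^2$-bounded integral kernels coming from the projection'' is a little misleading (the projector $Q$ does not produce an $L^2$ kernel), but the underlying Cauchy--Schwarz estimate is the same.
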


\begin{proof}[Proof of Proposition \ref{prop:G2V}]
%Recall the definition of $\cL_{n,\ell}^{(2,V)}$ in \eqref{eq:QV}. 
We split $e^{-B}\cL_{n,\ell}^{(2,V)}e^{B}$ as 
 \begin{equation}\nonumber
 \begin{split}
e^{-B}\cL_{n,\ell}^{(2,V)}e^{B}&=\text{F}_1+\text{F}_2+\text{F}_3
\end{split}
\end{equation}
with
 \begin{equation}\label{eq:QV2}
 \begin{split}
\text{F}_1&=\sum_{p,q\in\L^*_{1,+}}V_{\ell,0p0q}\,e^{-B}\hat a^*_p\hat a_q(n-\cN_+)e^{B}\\
\text{F}_2&=\sum_{p,q\in\L^*_{1,+}}V_{\ell,0pq0}\,e^{-B} \hat a^*_p\hat a_q(n-\cN_+)e^{B}\\
\text{F}_3&=\frac{1}{2}\sum_{p,q\in\L^*_{1,+}}(nV_{\ell,pq00}\,e^{-B} \hat b^*_p\hat b^*_qe^{B}+\hc)\\
\end{split}
\end{equation}
It is convenient to rewrite
 \begin{equation}\nonumber
 \begin{split}
\text{F}_1
% &=\frac{n}{\ell}\int dx dy \,\ell^3 V(\ell(x-y))\sum_{p,q\geq 1}\ph_p(y)\ph_q(y)e^{-B} \Big(b^*_pb_q -\frac{1}{n}a^*_pa_q\Big)e^{B}\\
&=\kappa n\int dx dy \,\ell^2 V(\ell(x-y))\,e^{-B} \Big(b^*_yb_y -\frac{1}{n}a^*_ya_y\Big)e^{B}\\
\end{split}
\end{equation}
The expectation of $\text{F}_1$ on any $\xi\in\cF_+^{\leq n}$ can be estimated as
 \begin{equation}\label{eq:F1}
 \begin{split}
|\langle\xi,\text{F}_1\xi\rangle|&\leq \kappa n\int  dy \,|\langle\xi,e^{-B} \Big(b^*_yb_y -n^{-1}a^*_ya_y\Big)e^{B}\xi\rangle|\int dx\,\ell^2 V(\ell(x-y))\,
% \\&\leq \|V\|_1 n\ell^{-1}\langle\xi,T^*_\ell \cN_+T_\ell\xi\rangle\\
\leq C\kappa n\ell^{-1}\langle\xi, \cN_+\xi\rangle\\
\end{split}
\end{equation}
where we used Lemma \ref{lm:TNT}. Similarly we have
\begin{equation}\label{eq:F2}
 \begin{split}
|\langle\xi,\text{F}_2\xi\rangle|
% &=\sum_{p,q\geq 1}V_{\ell,0pq0}T^*_\ell a^*_pa_q(n-\cN_+)T_\ell\\
% &=\frac{n}{\ell}\int dx dy \,\ell^3 V(\ell(x-y))\sum_{p,q\geq 1}\ph_p(y)\ph_q(x)T^*_\ell \Big(b^*_pb_q -\frac{1}{n}a^*_pa_q\Big)T_\ell\\
&\leq \kappa n\int dx dy \,\ell^2 V(\ell(x-y))\,|\langle\xi,e^{-B} \Big(b^*_yb_x -\frac{1}{n}a^*_ya_x\Big)e^{B}\xi\rangle|
% &\leq C\frac{n}{\ell}\Big(\int dx dy \,\ell^3 V(\ell(x-y))\,\| a_yT_\ell\xi\|^2\Big)^{1/2}\Big(\int dx dy \,\ell^3 V(\ell(x-y))\,\| a_xT_\ell\xi\|^2\Big)^{1/2}\\
\leq C\kappa n\ell^{-1}\langle\xi, \cN_+\xi\rangle
\end{split}
\end{equation}
We focus now on on the last contribution in \eqref{eq:QV2}.
\begin{equation}
\begin{split}
\text{F}_3
&=\frac{\kappa n}{2}\int dxdy\,\ell^2V(\ell(x-y))\big(e^{-B}b_xb_ye^{B}+\hc\big)\\
\end{split}
\end{equation}
Using equations \eqref{eq:d}, we get
\begin{equation}\label{eq:splitF3}
\begin{split}
\text{F}_3
&=\frac{\kappa n}{2}\int dxdy\,\ell^2V(\ell(x-y))\big[(b(\g_x)+b^*(\s_x)+d_x)(b(\g_y)+b^*(\s_y)+d_y)+\hc\big]\\
&=\frac{\kappa n}{2}\int dxdy\,\ell^2V(\ell(x-y))\big[(b(\g_x)+b^*(\s_x))(b(\g_y)+b^*(\s_y))+\hc\big]\\
&\quad+\frac{\kappa n}{2}\int dxdy\,\ell^2V(\ell(x-y))\big[(b(\g_x)+b^*(\s_x))d_y+d_x(b(\g_y)+b^*(\s_y))+\hc\big]\\
&\quad+\frac{ \kappa n}{2}\int dxdy\,\ell^2V(\ell(x-y))\big[d_xd_y+\hc\big]\\
&=\text{F}_{31}+\text{F}_{32}+\text{F}_{33}
\end{split}
\end{equation}
We start analyzing $\text{F}_{31}$. After normal ordering, a simple calculation (similar to the one done in \eqref{eq:E13}) gives
\begin{equation}\label{eq:mainbb}
\begin{split}
\text{F}_{31}
% &=\frac{n}{2}\int dxdy\,\ell^2V(\ell(x-y))\big[b(\g_x)b(\g_y)+b^*(\s_x)b(\g_y)+b^*(\s_x)b^*(\s_y)+\hc\big]\\
% &\quad+\frac{n}{2}\int dxdy\,\ell^2V(\ell(x-y))\big[\big(b_x+b(p_x)\big)b^*(\eta_y)+b(\g_x)b^*(r_y)+\hc\big]\\
% &=n\int dxdy\,\ell^2V(\ell(x-y))\eta(x,y)\frac{n-\cN_+}{n}\\
% &\quad+\frac{n}{2}\int dxdy\,\ell^2V(\ell(x-y))\big[b(\g_x)b(\g_y)+b^*(\s_x)b(\g_y)+b^*(\s_x)b^*(\s_y)\\
% &\hspace{2.5cm}+b^*(\eta_y)b_x+n^{-1}a^*(\eta_y)a_x+b(p_x)b^*(\eta_y)+b(\g_x)b^*(r_y)+\hc\big]\\
&=\kappa n\int dxdy\,\ell^2V(\ell(x-y))\eta(x,y)+\frac{\kappa n}{2}\int dxdy\,\ell^2V(\ell(x-y))\big[b_xb_y+\hc\big]+\cE_{31}\\
\end{split}
\end{equation}
with
\begin{equation}\label{eq:mainbberr}
\begin{split}
\cE_{31}
&=\frac{\kappa n}{2}\int dxdy\,\ell^2V(\ell(x-y))\big[r(x,y)-\eta(x,y)n^{-1}\cN_+\\
&\hspace{1.cm}+b^*(r_y)b_x+b^*(\eta_y)b_x+n^{-1}a^*(\eta_y)a_x+b(p_x)b^*(\eta_y)+b(p_x)b^*(r_y)\\
&\hspace{1.cm}+b(p_x)b_y+b(\g_x)b(p_y)+b^*(\s_x)b(\g_y)+b^*(\s_x)b^*(\s_y)+\hc\big]\\
\end{split}
\end{equation}
where again we used the notation $\g_x=\delta_x+p_x$ and $\s_x=\eta_x+r_x$. Lemma \ref{lm:L2hat} and Proposition \ref{prop:eta}
show that $\cE_{31}$ satisfies
\begin{equation}\label{eq:calE31}
 |\langle\xi,\cE_{31}\rangle\xi|\leq C\kappa n\ell^{-1}\|(\cN_++1)^{1/2}\xi\|^2
\end{equation}
 
 Next we consider  $\text{F}_{32}$. Again splitting  $\g=1+p$ and $\s=\eta+r$, we write
\begin{equation}
 \text{F}_{32}=:\text{F}_{321}+\text{F}_{322}+\text{F}_{323}+\hc
\end{equation}
with
\begin{equation}
\begin{split}
\text{F}_{321}&=\frac{\kappa n}{2}\int dxdy\,\ell^2V(\ell(x-y))b_xd_y\\
\text{F}_{322}&=\frac{\kappa n}{2}\int dxdy\,\ell^2V(\ell(x-y))d_xb_y\\
\text{F}_{323}&=\frac{\kappa n}{2}\int dxdy\,\ell^2V(\ell(x-y))\big[(b(p_x)+b^*(\s_x))d_y+d_x(b(p_y)+b^*(\s_y))\big]\\
\end{split}
\end{equation}
To bound $\text{F}_{321}$, we use \eqref{eq:dxy-bdsN2} and Proposition \ref{prop:eta}:
\begin{equation}\label{eq:F321}
\begin{split}
|\langle\xi&,\text{F}_{321}\rangle\xi|\leq C\kappa\|(\cN_++1)^{1/2}\xi\|\int dxdy\,\ell^2V(\ell(x-y)) \Big[ \, \|\eta_x\| \|\eta_y\|  \| (\cN_+ + 1)^{1/2} \xi \| \\ 
&\quad + |\eta(y,x)| \| (\cN +1)^{1/2}  \xi \|  + \| \eta_y \| \| b_x  \xi \| +  \|\eta_x\| \|a_y (\cN_+ + 1) \xi \| \Big]\\
& \quad+C\kappa n^{1/2}\|\eta\|_2\|(\cN_++1)^{1/2}\xi\|\int dxdy\,\ell^2V(\ell(x-y))\|a_y b_x   \xi \|  \\
% &\leq Cn\ell^{-1}\|\eta\|_2\|(\cN_++1)^{1/2}\xi\|^2\\
% &\quad+Cn^{1/2}\ell^{-1/2}\|\eta\|_2\|(\cN_++1)^{1/2}\xi\|\Big[\int dxdy\,\ell^2V(\ell(x-y))\|a_y b_x   \xi \|^2 \Big]^{1/2}\\
&\leq C\kappa n\ell^{-1}\|\eta\|_2\|(\cN_++1)^{1/2}\xi\|^2+C\kappa^{1/2}n^{1/2}\ell^{-1/2}\|\eta\|_2\|(\cN_++1)^{1/2}\xi\|\|\cV_\ell^{1/2}\xi\|\\
\end{split}
\end{equation}
The estimate for $\text{F}_{322}$ follows from \eqref{eq:dxy-bds}:
\begin{equation}\label{eq:F322}
\begin{split}
|\langle\xi,&\text{F}_{322}\xi\rangle|\leq \kappa n\|(\cN_++1)^{1/2}\xi\|\int dxdy\,\ell^2V(\ell(x-y))\|(\cN_++1)^{-1/2}d_xb_y\xi\|\\
&\leq C\kappa n\|(\cN_++1)^{1/2}\xi\|\int dxdy\,\ell^2V(\ell(x-y)) \|\eta_x\| \| b_y\xi \|\\
&\quad+C\kappa n^{1/2}\| \eta \|\|(\cN_++1)^{1/2}\xi\|\int dxdy\,\ell^2V(\ell(x-y))  \| b_x b_y\xi \| \\
&\leq C\kappa n\ell^{-1}\| \eta \|_2\|(\cN_++1)^{1/2}\xi\|^2+C\kappa^{1/2} n^{1/2}\ell^{1/2}\| \eta \|_2\|(\cN_++1)^{1/2}\xi\|  \| \cV_\ell^{1/2}\xi \|
\end{split}
\end{equation}
and similarly the estimate for $\text{F}_{323}$:
\begin{equation}
\begin{split}
|\langle\xi,&\text{F}_{323}\xi\rangle|=\frac{\kappa n}{2}\int dxdy\,\ell^2V(\ell(x-y))\big[(b(p_x)+b^*(\s_x))d_y+d_x(b(p_y)+b^*(\s_y))\big]\\
&\leq \kappa n\int dxdy\,\ell^2V(\ell(x-y))\|(b^*(p_x)+b(\s_x))\xi\|\|d_y\xi\|\\
&\quad+ \kappa n\,\|(\cN_++1)^{1/2}\xi\|\int dxdy\,\ell^2V(\ell(x-y))\|(\cN_++1)^{-1/2}d_x(b(p_y)+b^*(\s_y))\xi\|\\
&\leq C\kappa n^{1/2}\|(\cN_++1)^{1/2}\xi\|\int dxdy\,\ell^2V(\ell(x-y))\big(\|p_x\|_2+\|\s_x\|_2\big) \\
&\hspace{7cm}\times\Big[ \, \|\eta_y\| \| (\cN_+ + 1)^{1/2} \xi \| + \| \eta \| \| b_y\xi \| \Big] \\
&\quad+ C\kappa\,\|(\cN_++1)^{1/2}\xi\|\int dxdy\,\ell^2V(\ell(x-y))\Big[ \, \|\eta_x\| \| (\cN_+ + 1)(b(p_y)+b^*(\s_y))\xi \| \\
&\quad\hspace{6cm}+ \| \eta \| \| b_x (\cN_+ + 1) ^{1/2}(b(p_y)+b^*(\s_y))\xi \| \Big]\\
\end{split}
\end{equation}
We normal order the last term
and use estimates \eqref{eq:normEta}, \eqref{eq:supEta} and \eqref{eq:supr} to get
\begin{equation}\label{eq:F323}
\begin{split}
|\langle\xi,&\text{F}_{323}\xi\rangle|\leq C\kappa n\ell^{-1}\|(\cN_++1)^{1/2}\xi\|^2\\
\end{split}
\end{equation}

Finally we consider  the last contribution in \eqref{eq:splitF3}. With  estimate  \eqref{eq:dxy-bdsN3} 
% we find
% \begin{equation}
% \begin{split}
% |\langle&\xi,\text{F}_{33}\xi\rangle|\\
% &\leq C\|(\cN_++1)^{1/2}\xi\|\\
% &\hspace{2cm}\times\int dxdy\,\ell^2V(\ell(x-y))\Big [ \| \eta_x \|\| \eta_y \| \|(\cN+1)^{3/2}  \xi \|+\| \eta \|\| \eta_x \| \|b_y(\cN+1) \xi \|\Big]\\
% &\quad+C\| \eta \|\|(\cN_++1)^{1/2}\xi\|\int dxdy\,\ell^2V(\ell(x-y)) \, \Big[ \, \|\eta_x\| \|\eta_y\|  \| (\cN_+ + 1)^{1/2} \xi \|   \\
% & \hspace{4.3 cm}+ |\eta(y,x)| \| (\cN +1)^{1/2}  \xi \| + \| \eta_x \| \| b_y \xi \| +  \|\eta_y\| \|a_x (\cN_+ + 1) \xi \|\\
% & \hspace{4.3 cm}  + \| \eta \| \|a_y a_x (\cN +1)^{1/2}  \xi \|   \, \Big] 
% \end{split}
% \end{equation}
and Proposition \ref{prop:eta} we conclude that
\begin{equation}\label{eq:F33}
\begin{split}
|\langle&\xi,\text{F}_{33}\xi\rangle|\leq C\kappa n\ell^{-1}\|\eta\|^2\|(\cN_++1)^{1/2}\xi\|^2\\
&\quad+C\kappa n^{1/2}\ell^{-1/2}\| \eta \|^2\|(\cN_++1)^{1/2}\xi\|\Big[\int dxdy\,\ell^2V(\ell(x-y)) \|a_y a_x \xi \|^2\Big]^{1/2}\\
&\leq C\kappa n\ell^{-1}\|\eta\|^2\|(\cN_++1)^{1/2}\xi\|^2+C\kappa^{1/2} n^{1/2}\ell^{-1/2}\| \eta \|^2\|(\cN_++1)^{1/2}\xi\|\|\cV_\ell^{1/2}\xi\|\\
\end{split}
\end{equation}
Estimates \eqref{eq:F1}, \eqref{eq:F2}, \eqref{eq:calE31}, \eqref{eq:F321}, \eqref{eq:F322}, \eqref{eq:F323} and \eqref{eq:F33} prove \eqref{eq:errL2V}. 
\end{proof}

%%%%%%%%%%%%%%%%%%%%%%%%%%%%%%%%%%%%%%%%%%%%%%%%%%%%%%%%%%%%%%%%%%%%%%%%%%%%%%%%%%%%%%%%%%%%%%%%%%%%
\subsection{Analysis of $\cG_{n,\ell}^{(3)}$}\label{g3}
As defined in \eqref{eq:cLp},
\begin{equation}\nonumber
\begin{split} 
 \cL_{n,\ell}^{(3)}&=\sum_{p,q,r\in\L^*_{1,+}}(n^{1/2}V_{\ell,qr0p}\hat b^*_r\hat a^*_q\hat a_p+\hc)\\
 &=n^{1/2}\int dx dy\, \k\ell^2V(\ell(x-y))\Big[b^*_xa^*_ya_x+\hc\Big]\\
 \end{split} 
\end{equation}
 We define $\cE_{n,\ell}^{(3)}$ through
\begin{equation}\label{eq:cGn3main}
\begin{split} 
\cG_{n,\ell}^{(3)}=e^{-B}\cL_{n,\ell}^{(3)}e^B&=n^{1/2}\int dx dy\,\k\ell^2V(\ell(x-y))\eta(y,x)  \big[b(\gamma_x)+b^*(\sigma_x) +\hc\big] % \\ &\quad
+\cE_{n,\ell}^{(3)}.\\
 \end{split} 
\end{equation}

%%%%%%%%%%%%%%%%%%%%%%%%%%%%%%%%%%%%%%%%%%
\begin{prop}\label{prop:G3} Let $\cE_{n,\ell}^{(3)}$ be defined as in \eqref{eq:cGn3main}. Then, under the same assumptions as Proposition \ref{prop:G}, for any $\delta>0$ there exists a constant $C > 0$ such that
\begin{equation}\label{eq:errL3}
\begin{split} 
 \pm\,\cE_{n,\ell}^{(3)}\leq \delta\cV_\ell+ C\k n\ell^{-1}(\cN_++1)
\end{split} 
\end{equation}
as operator inequalities on $\cF^{\leq n}_+$.
\end{prop}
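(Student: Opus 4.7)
The plan is to follow the pattern already established in the proofs of Propositions \ref{prop:G1}, \ref{prop:G2K} and \ref{prop:G2V}. I would begin by inserting identities $e^B e^{-B}=1$ to write
\[
e^{-B}\cL_{n,\ell}^{(3)}e^B = n^{1/2}\!\int dxdy\,\k\ell^2V(\ell(x-y))\big[(e^{-B}b^*_x e^B)(e^{-B}a^*_y a_x e^B)+\hc\big],
\]
and then process the two factors separately. The first is expanded via \eqref{eq:d}, producing $b^*(\gamma_x)+b(\sigma_x)+d^*_x$. For the second I would use the Duhamel identity
\[
e^{-B}a^*_y a_x e^B = a^*_y a_x + \int_0^1 ds\, e^{-sB}[a^*_y a_x,B]\,e^{sB},
\]
compute the commutator, and keep in mind that, to leading order, $[a_x,b^*_u]\approx\delta(x-u)$ on $\cF_+^{\leq n}$. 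This produces, up to subleading corrections, terms of the form $\int\eta(x,v)\,a^*_y b^*_v\,dv$ and (from the h.c.\ part of $B$) $\int\eta(y,u)\,b_u a_x\,du$.

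The leading linear contribution displayed in \eqref{eq:cGn3main} arises by normal-ordering the product of the transformed $b^*_x$ against these commutator terms: the pairings between $b^*(\gamma_x)$ and $a(\sigma_y)$ on one side, and between $b(\sigma_x)$ and $a^*(\gamma_y)$ on the other, produce the scalar kernel $\langle\sigma_y,\gamma_x\rangle$ which at leading order equals $\eta(y,x)$ (recall $\sigma=\eta+r$, $\gamma=1+p$ from Proposition \ref{prop:eta}). This precisely yields $n^{1/2}\int\k\ell^2V(\ell(x-y))\eta(y,x)[b(\gamma_x)+b^*(\sigma_x)+\hc]$ after integrating in $s$ and adding the h.c. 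All remaining contributions go into $\cE_{n,\ell}^{(3)}$, and I would organize them into three families: (i) fully normal-ordered products of three or more $b^\sharp$ operators with kernels among $\gamma,\sigma,\eta$ weighted by $\k\ell^2 V(\ell(x-y))$; (ii) contributions containing at least one remainder $d_x$ from \eqref{eq:d}; and (iii) subleading contractions involving $r=\sigma-\eta$ or $p=\gamma-1$.

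Family (i) is bounded by Lemma \ref{lm:L2hat} combined with the $L^2$ bounds on $\eta,\sigma,p$ from Proposition \ref{prop:eta} and Lemma \ref{lm:TNT}; family (ii) is bounded by the estimates \eqref{eq:dxy-bds}--\eqref{eq:dxy-bdsN3} of Lemma \ref{lm:dx}, each $d_x$ contributing a gain of $n^{-1}$ that compensates for the prefactor $n^{1/2}$ in $\cL_{n,\ell}^{(3)}$; family (iii) is controlled by the pointwise bounds \eqref{eq:supEta}--\eqref{eq:supr}. Whenever a remainder still contains a pair $a_y a_x$ weighted against $\k\ell^2V(\ell(x-y))$, I would use Cauchy--Schwarz to extract $\|\cV_\ell^{1/2}\xi\|$ from one factor, obtaining a bound of the form $C\k^{1/2}n^{1/2}\ell^{-1/2}\|(\cN_++1)^{1/2}\xi\|\|\cV_\ell^{1/2}\xi\|$, which is then absorbed into $\delta\cV_\ell+C\k n\ell^{-1}(\cN_++1)$ by the weighted arithmetic--geometric inequality.

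The main obstacle is combinatorial bookkeeping: the cubic structure of $\cL_{n,\ell}^{(3)}$ produces, after Bogoliubov conjugation, products of up to six creation/annihilation operators, and one has to check term by term that each product either matches the announced main term in \eqref{eq:cGn3main} or carries enough powers of $\|\eta\|_2\lesssim \k n/\ell$, of the pointwise bounds of Proposition \ref{prop:eta}, or of the $n^{-1}$ gains from Lemma \ref{lm:dx}, to fit within \eqref{eq:errL3}. A secondary difficulty is isolating the single contraction pattern that actually yields the factor $\eta(y,x)$ rather than, say, $\langle\eta_y,\eta_x\rangle$, since the latter scales less favorably and would otherwise spoil the leading behavior.
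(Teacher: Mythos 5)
Your plan matches the paper's proof: insert $e^Be^{-B}$ to write $\cG_{n,\ell}^{(3)}=n^{1/2}\int\k\ell^2 V(\ell(x-y))\big[(e^{-B}b_x^*e^B)(e^{-B}a^*_y a_xe^B)+\hc\big]$, apply Duhamel to the second factor, which gives exactly $a^*_y a_x+\int_0^1 ds\, e^{-sB}\big(b(\eta_y)b_x+b^*(\eta_x)b^*_y\big)e^{sB}$, expand $e^{-B}b^*_xe^B$ via \eqref{eq:d}, normal-order, and bound the remainders with Lemmas \ref{lm:dx}, \ref{lm:L2hat}, \ref{lm:TNT} and Proposition \ref{prop:eta}, absorbing $\|\cV_\ell^{1/2}\xi\|$ by weighted Cauchy--Schwarz. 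One imprecision worth flagging: the main term in \eqref{eq:cGn3main} is not produced by a single contraction $\langle\sigma_y,\gamma_x\rangle$. In the paper it is assembled from two pieces that originate in different sub-terms, namely $\eta(x,y)b_x$ from normal-ordering $b(\sigma_x)a^*_y a_x=a^*_ya_xb(\sigma_x)+\sigma(x,y)b_x$ and splitting $\sigma=\eta+r$, and $\eta(y,x)\big(b^*(\sigma_x)+b(p_x)\big)$ from the Duhamel integrand via the identity $\int_0^1 ds\int dz\,\eta(z,x)\cosh_{s\eta}(w,z)=\sinh_\eta(x,w)$; only after adding the two and their h.c.'s does one obtain $\eta(y,x)\big[b(\gamma_x)+b^*(\sigma_x)+\hc\big]$. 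You also omit to name Lemma \ref{lm:LN4}, which the paper uses (alongside Lemma \ref{lm:L2hat}) whenever several operators sit inside a norm rather than an inner product, although you do describe its Cauchy--Schwarz content. Neither point is a gap in method; both are part of the bookkeeping you already anticipate as the main obstacle.
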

%%%%%%%%%%%%%%%%%%%%%%%%%%%%%%%%%%%%%%%%%%%

To prove Proposition \ref{prop:G3}, we need the following lemma,  taken from \cite[Lemma 3.8]{BCS}.
\begin{lemma}\label{lm:LN4}
Let $V\in L^1(\mathbb{R}^3)$, $V\geq 0$.
Let $j_1, j_2 \in L^2 (\L_1 \times \L_1)$ with 
\[ M_i := \max \left\{ \sup_x \int dy |j_i (x,y)|^2 , \sup_y \int dx |j_i (x,y)|^2 \right\} < \infty \]
for $i=1,2$. Then we have
\[ \begin{split} 
\int dx dy  \, \k\ell^3 V(\ell (x-y))& \| a^\sharp (j_{1,x}) a^\sharp (j_{2,y}) \xi \|^2 \leq C\k \min (M_1 \|j_2 \|^2_2 , M_2 \| j_1 \|_2^2 ) \| (\cN_++1) \xi \|^2 \\
\int dx dy  \, \k\ell^3 V(\ell (x-y))& \| a^\sharp (j_{1,x}) a_y \xi \|^2 \leq C\k M_1 \| (\cN_++1) \xi \|^2 \end{split} \]
for all $\xi \in \cF$ (with $a^\sharp$ we indicate either $a$ or $a^*$).
\end{lemma}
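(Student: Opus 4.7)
The plan is to reduce both inequalities to the elementary bounds $\|a(f)\psi\| \leq \|f\|_2 \|\cN_+^{1/2}\psi\|$ and $\|a^*(f)\psi\| \leq \|f\|_2 \|(\cN_++1)^{1/2}\psi\|$, together with the pull-through relations $\cN_+ \, a(g) = a(g)(\cN_+ - 1)$ and $\cN_+ \, a^*(g) = a^*(g)(\cN_+ + 1)$ which allow us to move number operators past creation/annihilation operators at the cost of shifts.

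For the first estimate, I would apply these bounds twice. Writing $\eta_y = a^\sharp(j_{2,y})\xi$, one gets $\|a^\sharp(j_{1,x}) \eta_y\|^2 \leq \|j_{1,x}\|_2^2 \|(\cN_++1)^{1/2} \eta_y\|^2$, and a second application, after pulling $(\cN_++1)^{1/2}$ through $a^\sharp(j_{2,y})$, yields
\[
\|a^\sharp(j_{1,x}) a^\sharp(j_{2,y})\xi\|^2 \leq C \|j_{1,x}\|_2^2 \|j_{2,y}\|_2^2 \|(\cN_++1)\xi\|^2.
\]
Inserting this into the integral and bounding $\|j_{1,x}\|_2^2 \leq M_1$ uniformly in $x$ gives
\[
\int dx\,dy\,\kappa\ell^3 V(\ell(x-y))\|j_{1,x}\|_2^2\|j_{2,y}\|_2^2 \leq \kappa M_1 \Big(\sup_y \int dx\, \ell^3 V(\ell(x-y))\Big) \|j_2\|_2^2 = \kappa M_1 \|V\|_1 \|j_2\|_2^2,
\]
where I used the translation invariance of the inner integral and the substitution $u = \ell(x-y)$. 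Relabelling $x \leftrightarrow y$ and using $\|j_{2,y}\|_2^2 \leq M_2$ instead produces the companion bound $\kappa M_2 \|V\|_1 \|j_1\|_2^2$, and taking the smaller of the two gives the claimed $\min$.

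For the second estimate, the same strategy applies, now treating $a_y$ informally as $a(\delta_y)$. Bounding only the $j_{1,x}$ factor and keeping $a_y$ intact yields
\[
\|a^\sharp(j_{1,x}) a_y \xi\|^2 \leq C \|j_{1,x}\|_2^2 \|a_y (\cN_++1)^{1/2}\xi\|^2,
\]
after which integration against $\kappa\ell^3 V(\ell(x-y))$ and the $x$-integral give the factor $\kappa M_1 \|V\|_1$. The remaining $y$-integral is $\int dy\,\|a_y (\cN_++1)^{1/2}\xi\|^2 = \|\cN_+^{1/2}(\cN_++1)^{1/2}\xi\|^2 \leq \|(\cN_++1)\xi\|^2$, producing the stated bound.

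The only non-routine point is the bookkeeping when $a^\sharp$ is a creation operator, since one must pull a $(\cN_++1)^{1/2}$ through $a^*(j_{2,y})$ and absorb the shift into a slightly larger constant; this is handled uniformly by the pull-through identities above and does not affect the structure of the argument. I expect this to be the mildest technical obstacle; everything else is Cauchy--Schwarz and Fubini.
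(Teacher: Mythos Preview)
Your proposal is correct and follows essentially the same approach as the paper: bound $\|a^\sharp(j_{1,x})a^\sharp(j_{2,y})\xi\|^2 \leq C\|j_{1,x}\|_2^2\|j_{2,y}\|_2^2\|(\cN_++1)\xi\|^2$ via the standard number-operator bounds, then integrate using $\sup_x\|j_{1,x}\|_2^2\leq M_1$ and $\int dy\,\ell^3 V(\ell(x-y))\leq \|V\|_1$. The paper's proof is terser (it just says ``Cauchy--Schwarz'') but the content is identical; your treatment of the second inequality via $\int dy\,\|a_y(\cN_++1)^{1/2}\xi\|^2\leq\|(\cN_++1)\xi\|^2$ is exactly the intended argument.
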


\begin{proof}
The first inequality simply follows from Cauchy-Schwarz
\[ \begin{split} 
\int dx dy  \, \k\ell^3 V(\ell (x-y)) \| a^\sharp (j_{1,x}) a^\sharp (j_{2,y}) \xi \|^2  &\leq \int dx dy \,\k\ell^3 V(\ell (x-y))\|j_{1,x}\|_2^2 \|j_{2,y}\|_2^2\| (\cN_++1)\psi \|^2 \\
&\leq C\k \min (M_1 \|j_2 \|^2_2 , M_2 \| j_1 \|_2^2 ) \| (\cN_++1) \psi \|^2 \\ \end{split} \]
The second inequality can be obtained similarly.
\end{proof}

\begin{proof}[Proof of Proposition \ref{prop:G3}]
We compute
\begin{equation}
\begin{split} 
e^{-B}a^*_ya_xe^B&=a^*_ya_x+\int_0^1 ds\, e^{-sB}[a_y^* a_x,B]e^{sB}=a^*_ya_x+\int_0^1 ds\, e^{-sB}\big(b(\eta_y)b_x+b^*(\eta_x)b^*_y\big)e^{sB}
\end{split} 
\end{equation}
We have therefore
\begin{equation}\label{eq:G3nl}
\begin{split} 
\cG_{n,\ell}^{(3)} &=n^{1/2}\int dx dy \, \k\ell^2 V(\ell(x-y)) \Big[e^{-B}b_x^*e^{B} a_y^* a_x + \hc\Big] \\
&\quad+n^{1/2}\int dx dy \, \k\ell^2 V(\ell(x-y)) \Big[e^{-B}b_x^* e^{B} \int_0^1 ds\, e^{-sB}b^*(\eta_x)b^*_ye^{sB}+ \hc\Big] \\
&\quad+n^{1/2}\int dx dy \, \k\ell^2 V(\ell(x-y)) \Big[e^{-B}b_x^* e^{B} \int_0^1 ds\, e^{-sB}b(\eta_y)b_xe^{sB}+ \hc\Big] \\
&=:\text{G}_1+\text{G}_2+\text{G}_3+\hc
\end{split} 
\end{equation}
We start analyzing $\text{G}_1$.
Using \eqref{eq:d} and
\begin{equation}
 b(\s_x)a_y^* a_x=\int dz\, \s(x,z)b_za_y^* a_x=a_y^* a_xb(\s_x) +\s(x,y)b_x
\end{equation}
we write it as (adopting always the notation $\s=\eta+r$)
\iffalse
\begin{equation}
\begin{split} 
\text{G}_1 &=n^{1/2}\int dx dy \,\k \ell^2 V(\ell(x-y)) \big(b^*(\g_x)+b(\s_x)+d^*_x\big)a_y^* a_x \\
% &=n^{1/2}\int dx dy \, \ell^2 V(\ell(x-y)) \Big[b(\s_x)a_y^* a_x + \hc\Big] \\
% &\quad+n^{1/2}\int dx dy \, \ell^2 V(\ell(x-y)) \Big[\big(b^*(\g_x)+d^*_x\big)a_y^* a_x + \hc\Big] \\
&=n^{1/2}\int dx dy \,\k \ell^2 V(\ell(x-y)) \eta(x,y)b_x \\
&\quad+n^{1/2}\int dx dy \,\k \ell^2 V(\ell(x-y))\Big[ r(x,y)b_x+\big(b^*(\g_x)+d^*_x\big)a_y^* a_x+a_y^* a_xb(\s_x) \Big] \\
\end{split} 
\end{equation}
We decompose 
\begin{equation}\label{eq:G1linear}
\begin{split} 
\text{G}_1 &=n^{1/2}\int dx dy \,\k \ell^2 V(\ell(x-y)) \eta(x,y)b_x \\
&\quad+\text{G}_{11}+\text{G}_{12}+\text{G}_{13}+\text{G}_{14}
\end{split} 
\end{equation}
\fi
\begin{equation}\label{eq:G1linear}
\begin{split} 
\text{G}_1 &=n^{1/2}\int dx dy \,\k \ell^2 V(\ell(x-y)) \big(b^*(\g_x)+b(\s_x)+d^*_x\big)a_y^* a_x \\
% &=n^{1/2}\int dx dy \, \ell^2 V(\ell(x-y)) \Big[b(\s_x)a_y^* a_x + \hc\Big] \\
% &\quad+n^{1/2}\int dx dy \, \ell^2 V(\ell(x-y)) \Big[\big(b^*(\g_x)+d^*_x\big)a_y^* a_x + \hc\Big] \\
&=n^{1/2}\int dx dy \,\k \ell^2 V(\ell(x-y)) \eta(x,y)b_x +\text{G}_{11}+\text{G}_{12}+\text{G}_{13}+\text{G}_{14} 
\\
\end{split} 
\end{equation}
with
\begin{equation}
\begin{split} 
\text{G}_{11}&=n^{1/2}\int dx dy \,\k \ell^2V(\ell(x-y)) r(x,y)b_x \\
{\text{G}_{12}}&=n^{1/2}\int dx dy \,\k \ell^2 V(\ell(x-y)) b^*(\g_x)a_y^* a_x\\
\text{G}_{13}&=n^{1/2}\int dx dy \,\k \ell^2 V(\ell(x-y)) a_y^* a_xb(\s_x)  \\
\text{G}_{14}&=n^{1/2}\int dx dy \,\k \ell^2 V(\ell(x-y)) d^*_xa_y^* a_x\\
\end{split} 
\end{equation}
With \eqref{eq:supr} and Cauchy-Schwarz we see that for any normalized $\xi\in\cF_+^{\leq n}$
\begin{equation}
\begin{split} 
|\langle\xi,&\text{G}_{11}\xi\rangle|\leq Cn^{1/2}\|\eta\|\int dx dy \,\k \ell^2 V(\ell(x-y))\|\eta_x\|\|\eta_y\| \|b_x\xi\| \\
&\leq Cn^{1/2}\|\eta\|\left[\int dx dy \,\k \ell^2 V(\ell(x-y))\|b_x\xi\|^2 \right]^{1/2}\left[\int dx dy \,\k \ell^2 V(\ell(x-y))\|\eta_x\|^2\|\eta_y\|^2 \right]^{1/2}.
\end{split} 
\end{equation}
In the last factor we use \eqref{eq:supEtaL2}
% \begin{equation}
% \begin{split} 
% \int dx dy \,\k \ell^2 &V(\ell(x-y))\|\eta_x\|^2\|\eta_y\|^2\\
% &\leq C\k n^2\ell^{-2}\int  dy \,\|\eta_y\|^2 \int dx\,\k \ell^2 V(\ell(x-y))\int dz\left[\frac{1}{|x-z|+\ell^{-1}}+1\right]^2
% \end{split} 
% \end{equation}
% 
% \begin{equation}
% \begin{split} 
% \|\eta_x\|^2&\leq \int dz\left[\frac{1}{|x-z|+\ell^{-1}}+1\right]^2
% \end{split} 
% \end{equation}
% Since
% \begin{equation}\nonumber
% \begin{split} 
% \int dz\frac{1}{|x-z|^2+\ell^{-2}}&\leq \int_{B_{\ell^{-1}}(x)} dz\,\frac{1}{|x-z|^2+\ell^{-2}}+\int_{\L_1\backslash B_{\ell^{-1}}(x)} dz\,\frac{1}{|x-z|^2+\ell^{-2}}\leq C(\ell^{-1}+1)
% \end{split} 
% \end{equation}
and we arrive at
\begin{equation}\label{eq:G11}
\begin{split} 
|\langle\xi,\text{G}_{11}\xi\rangle|
&\leq C\k n^{1/2}\ell^{-1}\|(\cN_++1)^{1/2}\xi\|^2.
\end{split} 
\end{equation}
To bound $\text{G}_{12}$ we split $\gamma=1+p$ and use estimate \eqref{eq:supL2} and \eqref{eq:supEtaL2}, so that
\begin{equation}\label{eq:G12}
\begin{split} 
|\langle\xi,\text{G}_{12}\xi\rangle|&\leq n^{1/2}\int dx dy \,\k \ell^2 V(\ell(x-y)) \|a_yb_x\xi\|\| a_x\xi\|\\
&\quad+ n^{1/2}\int dx dy \,\k \ell^2 V(\ell(x-y)) \|\cN_+^{-1/2}a_yb(p_x)\xi\|\| \cN_+^{1/2}a_x\xi\|\\
&\leq Cn^{1/2}\k^{1/2}\ell^{-1/2}\|(\cN_++1)^{1/2}\xi\|\|\cV_\ell^{1/2}\xi\|+C\k n\ell^{-1}\|(\cN_++1)^{1/2}\xi\|^2.
\end{split} 
\end{equation}
In the last step we used the Cauchy-Schwarz inequality and note that the first term is proportional to $\cV_\ell$, as defined in \eqref{eq:excitHam}.
Similarly
\begin{equation}\label{eq:G13}
\begin{split} 
|\langle\xi,\text{G}_{13}\xi\rangle|&\leq  n^{1/2}\int dx dy \,\k \ell^2 V(\ell(x-y)) \|\cN_+^{-1/2}a_xb(\s_x)\xi\|\| \cN_+^{1/2}a_y\xi\|\\&\leq C\k n\ell^{-1}\|(\cN_++1)^{1/2}\xi\|^2
\end{split} 
\end{equation}
% The last two estimates also imply that
% \begin{equation}
% \begin{split} 
% |\langle\xi,\text{G}_{12}\xi\rangle|
% &\leq Cn^{1/2}\k^{1/2}\ell^{-1/2}\|(\cN_++1)^{1/2}\xi\|\|\cV_\ell\xi\|+C\k n\ell^{-1}\|(\cN_++1)^{1/2}\xi\|^2
% \end{split} 
% \end{equation}
% and
% \begin{equation}
% \begin{split} 
% |\langle\xi,\text{G}_{13}\xi\rangle|
% &\leq C\k n\ell^{-1}\|(\cN_++1)^{1/2}\xi\|^2
% \end{split} 
% \end{equation}
In order to bound $\text{G}_{14}$, we use \eqref{eq:dxy-bdsN2} to estimate
\begin{equation}\label{eq:G14}
\begin{split} 
|\langle&\xi,\text{G}_{14}\xi\rangle|\leq n^{1/2}\int dx dy \,\k \ell^2 V(\ell(x-y)) \|\cN_+^{-1/4}a_yd_x\xi\|\| \cN_+^{1/4}a_x\xi\|\\
&\leq n^{1/2}\k^{1/2}\ell^{-1/2}\|(\cN_++1)^{3/4}\xi\|\left[\int dx dy \,\k \ell^2 V(\ell(x-y)) \|d_x(\cN_++1)^{1/4}\xi\|^{1/2}\right]^{1/2}\\
%%%%%%%%%%%%%%%%%%%%%%%%
&\leq Cn^{1/2}\k^{1/2}\ell^{-1/2}\|(\cN_++1)^{3/4}\xi\|\\
&\hspace{1cm}\times \left[\int dx dy \,\k \ell^2 V(\ell(x-y)) \Big(\|\eta_x\|\|(\cN_++1)^{3/4}\xi\|+\|\eta\|\|b_x(\cN_++1)^{1/4}\xi\|\Big)\right]^{1/2}\\
&\leq Cn^{1/2}\k\ell^{-1}\|(\cN_++1)^{3/4}\xi\|^2\\
\end{split} 
\end{equation}
% which implies
% \begin{equation}
% \begin{split} 
% |\langle\xi,\text{G}_{14}\xi\rangle|\leq Cn\k\ell^{-1}\|(\cN_++1)^{1/2}\xi\|^2\\
% \end{split} 
% \end{equation}

Next we consider  $\text{G}_2$. With \eqref{eq:d}, we have
\begin{equation}
\begin{split} 
\text{G}_2=&n^{1/2}\int dx dy \, \k\ell^2 V(\ell(x-y)) \big(b^*(\g_x)+b(\s_x)+d_x^*\big) \int_0^1 ds\, e^{-sB}b^*_yb^*(\eta_x)e^{sB} \\
\end{split} 
\end{equation}
and we split $\text{G}_2=\text{G}_{21}+\text{G}_{22}+\text{G}_{23}$, where
\begin{equation}\label{eq:G2}
\begin{split} 
\text{G}_{21}=&n^{1/2}\int dx dy \, \k\ell^2 V(\ell(x-y)) b(\s_x) \int_0^1 ds\, e^{-sB}b^*_yb^*(\eta_x)e^{sB} \\
\text{G}_{22}=&n^{1/2}\int dx dy \, \k\ell^2 V(\ell(x-y)) b^*(\g_x) \int_0^1 ds\, e^{-sB}b^*_yb^*(\eta_x)e^{sB} \\
\text{G}_{23}=&n^{1/2}\int dx dy \, \k\ell^2 V(\ell(x-y))d_x^* \int_0^1 ds\, e^{-sB}b^*_yb^*(\eta_x)e^{sB} \\
\end{split} 
\end{equation}
In $\text{G}_{21}$ we expand further
\begin{equation}
\begin{split} 
\text{G}_{21}=&n^{1/2}\int dx dy \, \k\ell^2 V(\ell(x-y)) b(\s_x)\int_0^1 ds\,\big(b^*_y+b^*(p_y^{(s)})+b(\s^{(s)}_y)+(d^{(s)}_y)^*\big) e^{-sB}b^*(\eta_x)e^{sB} \\
\end{split} 
\end{equation}
where we denote again $\s^{(s)}=\sinh(s\eta)$, $p^{(s)}=\cosh(s\eta)-1$ and $(d^{(s)}_y)^*$ is defined as $d_y^*$ in \eqref{eq:d} with $\eta$ in $B$, $\gamma$ and $\sigma$ substituted with $s\eta$. We commute $b_y$ to the left using \eqref{eq:comm-b}, so that
\begin{equation}
\begin{split} 
 b(\s_x)b^*_y&=\int dw \,\s(w,x)b_wb^*_y=\int dw \,\s(w,x)\big(b^*_yb_w+\delta(y-w)(1-\cN_+/n)-n^{-1}a^*_ya_w\big)\\
 &=\s(y,x)(1-\cN_+/n)+b^*_yb(\s_x)-n^{-1}a^*_ya(\s_x)
\end{split} 
\end{equation}
%We write $\text{G}_{21}$ as
Hence
\begin{equation}\label{eq:G21}
\begin{split} 
\text{G}_{21}=&n^{1/2}\int dx dy \, \k\ell^2 V(\ell(x-y)) \s(y,x)\int_0^1 ds\, e^{-sB}b^*(\eta_x)e^{sB} \\
&+n^{1/2}\int dx dy \, \k\ell^2 V(\ell(x-y)) \\
&\hspace{2cm}\times\big(-\s(y,x)\cN_+/n+b^*_yb(\s_x)-n^{-1}a^*_ya(\s_x)\big)\int_0^1 ds\, e^{-sB}b^*(\eta_x)e^{sB} \\
&+n^{1/2}\int dx dy \, \k\ell^2 V(\ell(x-y)) b(\s_x)\\
&\hspace{2cm}\times\int_0^1 ds\,\big(b^*(p_y^{(s)})+b(\s^{(s)}_y)+(d^{(s)}_y)^*\big) e^{-sB}b^*(\eta_x)e^{sB} \\
&=:\text{G}_{211}+\text{G}_{212}+\text{G}_{213}
\end{split} 
\end{equation}
We use that
\[
 \int_0^1 ds\int dz\,\eta(z,x)\cosh_{s\eta}(w,z)=\sinh_\eta(x,w)
\]
and 
\[
 \int_0^1 ds\int dz\,\eta(z,x)\sinh_{s\eta}(w,z)=(\cosh_\eta-1)(x,w),
\]
resulting in 
\begin{equation}
\begin{split} 
\text{G}_{211}=&n^{1/2}\int dx dy \, \k\ell^2 V(\ell(x-y)) \s(y,x) \int dz\,\eta(z,x)\int_0^1 ds\big(b^*(\g^{(s)}_z)+b(\s^{(s)}_z)\big) \\
&+n^{1/2}\int dx dy \, \k\ell^2 V(\ell(x-y)) \s(y,x) \int dz\,\eta(z,x)\int_0^1 ds(d^{(s)}_z)^* \\
=&n^{1/2}\int dx dy \, \k\ell^2 V(\ell(x-y)) \s(y,x) \big(b^*(\s_x)+b(p_x)\big) \\
&+n^{1/2}\int dx dy \, \k\ell^2 V(\ell(x-y)) \s(y,x) \int dz\,\eta(z,x)\int_0^1 ds(d^{(s)}_z)^* \\
=&:\text{G}_{2111}+\text{G}_{2112}
\end{split} 
\end{equation}
In $\text{G}_{2111}$ we split $\s=\eta+r$ and write
\begin{equation}
\begin{split} 
\text{G}_{2111}=&n^{1/2}\int dx dy \, \k\ell^2 V(\ell(x-y)) \eta(y,x) \big(b^*(\s_x)+b(p_x)\big) +\mathcal{E}_{1}
\end{split} 
\end{equation}
The first contribution plus its hermitian conjugate adds up to the first term in the second line of \eqref{eq:G1linear} plus its hermitian conjugate to give the first contribution in \eqref{eq:cGn3main}, while
\begin{equation}\label{eq:cE1}
\begin{split} 
|\langle\xi,\mathcal{E}_{1}\xi\rangle|&\leq n^{1/2}\int dx dy \, \k\ell^2 V(\ell(x-y)) |r(y,x)| \|\big(b^*(\s_x)+b(p_x)\big)\xi\| \\
&\leq C\k n^{1/2}\ell^{-1}\|(\cN_++1)^{1/2}\xi\|\|\xi\|
\end{split} 
\end{equation}
where we used \eqref{eq:supr} and \eqref{eq:supEtaL2}. We estimate $\text{G}_{2112}$ using estimate \eqref{eq:supEta}, \eqref{eq:supr} and \eqref{eq:supEtaL2} to bound $\sigma(y,x)$, and then estimates \eqref{eq:dxy-bds} and \eqref{eq:supEtaL2}
\begin{equation}\label{eq:G2112}
\begin{split} 
|\langle\xi,\text{G}_{2112}\xi\rangle|&\leq C n^{3/2}\|(\cN_++1)^{1/2}\xi\|\\
&\hspace{1cm}\times\int dx dy \, \k\ell^2 V(\ell(x-y)) \int dz\,|\eta(z,x)|\|(\cN_++1)^{-1/2}d_z\xi\| \\
&\leq C n^{3/2}\|(\cN_++1)^{1/2}\xi\|\left(\int dx dy \, \k\ell^2 V(\ell(x-y)) \int dz\,|\eta(z,x)|^2\right)^{1/2}\\
&\hspace{1cm}\times \left(\int dx dy \, \k\ell^2 V(\ell(x-y)) \int dz\,\|(\cN_++1)^{-1/2}d_z\xi\|^2\right)^{1/2} \\
&\leq C \k n^{1/2}\ell^{-1} \|(\cN_++1)^{1/2}\xi\|\|(\cN_++1)\xi\|
\end{split} 
\end{equation}

Next we analyze $\text{G}_{212}$ in \eqref{eq:G21}. Using \eqref{eq:supEta}, \eqref{eq:supr} and \eqref{eq:supEtaL2} to bound $\sigma(y,x)$, as well as $\eta_x$, Lemma \ref{lm:TNT} and  Lemma \ref{lm:LN4}, we estimate
\begin{equation}\label{eq:G212}
\begin{split} 
|\langle\xi,\text{G}_{212}\xi\rangle|&\leq n^{1/2}\int dx dy \, \k\ell^2 V(\ell(x-y)) \\
&\hspace{1cm}\times \|\big(-\s(y,x)\cN_+/n+b(\s_x)^*b_y-n^{-1}a^*(\s_x)a_y\big)\xi\|\int_0^1 ds\, \|b^*(\eta_x)e^{sB} \xi\|\\
&\leq \k n^{1/2}\ell^{-1}\|(\cN_++1)\xi\|\|(\cN_++1)^{1/2}\xi\|
\end{split} 
\end{equation}
Similar arguments lead to the estimate for the last term in \eqref{eq:G21}:
\begin{equation}\label{eq:G213}
\begin{split} 
|\langle\xi,\text{G}_{213}\xi\rangle|
% &\leq n^{1/2}\int dx dy \, \k\ell^2 V(\ell(x-y)) \\
% &\hspace{2cm}b(\s_x)\int_0^1 ds\,\big(b^*(p_y^{(s)})+b(\s^{(s)}_y)+(d^{(s)}_y)^*\big) e^{-sB}b^*(\eta_x)e^{sB} \\
&\leq n^{1/2}\int dx dy \, \k\ell^2 V(\ell(x-y)) \\
&\hspace{2cm}\times\int_0^1 ds\,\|\big(b(p_y^{(s)})+b^*(\s^{(s)}_y)+d^{(s)}_y\big)b^*(\s_x) \xi\|\|b^*(\eta_x)e^{sB}\xi\| \\
&\leq \k n^{1/2}\ell^{-1}\|(\cN_++1)\xi\|\|(\cN_++1)^{1/2}\xi\|
\end{split} 
\end{equation}

We consider  $\text{G}_{22}$ in \eqref{eq:G2} next. We expand it as
\begin{equation}
\begin{split} 
\text{G}_{22}
% =&n^{1/2}\int dx dy \, \k\ell^2 V(\ell(x-y)) b^*(\g_x) \int_0^1 ds\, e^{-sB}b^*_yb^*(\eta_x)e^{sB} \\
=&n^{1/2}\int dx dy \, \k\ell^2 V(\ell(x-y)) b^*(\g_x) \int_0^1 ds\, \big(b^*(\g^{(s)}_y)+b(\s^{(s)}_y)+(d^{(s)}_y)^*\big)e^{-sB}b^*(\eta_x)e^{sB} 
\end{split} 
\end{equation}
and estimate it as
\begin{equation}\label{eq:G22e}
\begin{split} 
|&\langle\xi,\text{G}_{22}\xi\rangle|
% \leq n^{1/2}\int dx dy \, \k\ell^2 V(\ell(x-y)) \\
% &\hspace{3cm}\int_0^1 ds\,\| \big(b(\g^{(s)}_y)+b^*(\s^{(s)}_y)+(d^{(s)}_y)\big)b(\g_x) \xi\|\|b^*(\eta_x)e^{sB}\xi\| \\
\leq n^{1/2}\int dx dy \, \k\ell^2 V(\ell(x-y)) \\
&\hspace{2cm}\times\int_0^1 ds\,\| \big(b_y+b(p^{(s)}_y)+b^*(\s^{(s)}_y)+(d^{(s)}_y)\big)\big(b_x+b(p_x)\big) \xi\|\|b^*(\eta_x)e^{sB}\xi\| \\
\leq&n^{1/2}\int dx dy \, \k\ell^2 V(\ell(x-y))\int_0^1 ds\,\|b^*(\eta_x)e^{sB}\xi\| \Big[\| b_yb_x \xi\|\\
&\hspace{3cm}+\| b_yb(p_x) \xi\| +\| \big(b(p^{(s)}_y)+b^*(\s^{(s)}_y)+d^{(s)}_y\big)\big(b_x+b(p_x)\big) \xi\|\Big]\\
&\leq C \k^{1/2}n^{1/2}\ell^{-1/2}\|(\cN_++1)^{1/2}\xi\|\|\cV_\ell^{1/2}\xi\|+C\k n^{1/2}\ell^{-1}\|(\cN_++1)^{1/2}\xi\|\|(\cN_++1)\xi\|
\end{split} 
\end{equation}
where we used Lemma \ref{lm:TNT},  Lemma \ref{lm:LN4} and Lemma \ref{lm:dx}. 
In order to bound the last contribution in \eqref{eq:G2}, we estimate, similarly as we did for $\text{G}_{22}$, 
\begin{equation}\label{eq:G23e}
\begin{split} 
|&\langle\xi,\text{G}_{23}\xi\rangle|
% \leq n^{1/2}\int dx dy \, \k\ell^2 V(\ell(x-y))d_x^* \int_0^1 ds\,\big(b^*(\g^{(s)}_y)+b(\s^{(s)}_y)+(d^{(s)}_y)^*\big) e^{-sB}b^*(\eta_x)e^{sB} \\
\leq n^{1/2}\int dx dy \, \k\ell^2 V(\ell(x-y))\int_0^1 ds\,\| \big(b(\g^{(s)}_y)+b^*(\s^{(s)}_y)+(d^{(s)}_y)\big)d_x \xi\|\|b^*(\eta_x)e^{sB}\xi\| \\
&\leq C \k^{1/2}n^{1/2}\ell^{-1/2}\|(\cN_++1)^{1/2}\xi\|\|\cV_\ell^{1/2}\xi\|+C\k n^{1/2}\ell^{-1}\|(\cN_++1)^{1/2}\xi\|\|(\cN_++1)\xi\|
\end{split} 
\end{equation}

We analyze finally the last contribution in \eqref{eq:G3nl}, given by 
\begin{equation}
\begin{split} 
\text{G}_3
% &=n^{1/2}\int dx dy \, \k\ell^2 V(\ell(x-y)) e^{-B}b_x^* e^{B} \int_0^1 ds\, e^{-sB}b(\eta_y)b_xe^{sB} \\
%&=n^{1/2}\int dx dy \, \k\ell^2 V(\ell(x-y)) e^{-B}b_x^* e^{B}\int_0^1 ds\,\int dz \,\eta(z,y) e^{-sB}b_zb_xe^{sB} \\
% &=n^{1/2}\int dx dy \, \k\ell^2 V(\ell(x-y)) e^{-B}b_x^* e^{B} \\
% &\hspace{2cm}\int_0^1 ds\,\int dz \,\eta(z,y) \big(b(\g^{(s)}_z)+b^*(\s^{(s)}_z)+d^{(s)}_z\big)\big(b(\g^{(s)}_x)+b^*(\s^{(s)}_x)+d^{(s)}_x\big) \\
&=n^{1/2}\int dx dy \, \k\ell^2 V(\ell(x-y)) e^{-B}b_x^* e^{B} \\
&\hspace{2cm}\times\int_0^1 ds\, \big(b((\eta\g^{(s)})_y)+b^*((\eta\s^{(s)})_y)\big)\big(b(\g^{(s)}_x)+b^*(\s^{(s)}_x)+d^{(s)}_x\big) \\
&\quad+n^{1/2}\int dx dy \, \k\ell^2 V(\ell(x-y)) e^{-B}b_x^* e^{B} \\
&\hspace{2cm}\times\int_0^1 ds\,\int dz \,\eta(z,y) d^{(s)}_z\big(b(\g^{(s)}_x)+b^*(\s^{(s)}_x)+d^{(s)}_x\big) \\
&=:\text{G}_{31}+\text{G}_{32}
\end{split} 
\end{equation}
With Lemma \ref{lm:TNT},  Lemma \ref{lm:LN4} and Lemma \ref{lm:dx} and the bounds \eqref{eq:supEtaL2} and \eqref{eq:normsp} we get
\begin{equation}\label{eq:G31}
\begin{split} 
|\langle\xi,\text{G}_{31}\xi\rangle|
% &=n^{1/2}\int dx dy \, \k\ell^2 V(\ell(x-y)) e^{-B}b_x^* e^{B} \int_0^1 ds\, e^{-sB}b(\eta_y)b_xe^{sB} \\
&\leq n^{1/2}\int dx dy \, \k\ell^2 V(\ell(x-y)) \|b_x e^{B}\xi\| \\
&\hspace{2cm}\times\int_0^1 ds\,\|\big(b((\eta\g^{(s)})_y)+b^*((\eta\s^{(s)})_y)\big)\big(b(\g^{(s)}_x)+b^*(\s^{(s)}_x)+d^{(s)}_x\big) \xi\| \\
%%%%%%%%%
&\leq C\k n^{1/2}\ell^{-1}\|(\cN_++1)^{1/2}\xi\|\|(\cN_++1)\xi\|;
\end{split} 
\end{equation}
and
\begin{equation}\label{eq:G32}
\begin{split} 
|\langle\xi,\text{G}_{32}\xi\rangle|
&\leq n^{1/2}\int dx dy \, \k\ell^2 V(\ell(x-y)) \|b_x e^{B}\xi\|  \\
&\hspace{2cm}\times\int_0^1 ds\,\int dz \,|\eta(z,y)| \|d^{(s)}_z\big(b(\g^{(s)}_x)+b^*(\s^{(s)}_x)+d^{(s)}_x\big)\xi\| \\
&\leq C\k n^{1/2}\ell^{-1}\|(\cN_++1)^{1/2}\xi\|\|(\cN_++1)\xi\|
\end{split} 
\end{equation}
Putting together \eqref{eq:G11}, \eqref{eq:G12}, \eqref{eq:G13}, \eqref{eq:G14}, \eqref{eq:cE1}, \eqref{eq:G2112}, \eqref{eq:G212}, \eqref{eq:G213}, \eqref{eq:G22e}, \eqref{eq:G23e}, \eqref{eq:G31} and \eqref{eq:G32} 
%and observing that the hermitian conjugates of $\text{G}_1, \,\text{G}_2$ and $\text{G}_3$ can be estimated in the same way 
we arrive at \eqref{eq:errL3}.
\end{proof}
%%%%%%%%%%%%%%%%%%%%%%%%%%%%%%%%%%%%%%%%%%%%%%%%%%%%%%%%%%%%%%%%%%%%%%%%%%%%%%%%%%%%%%%%%%%%%%%%%%%%
\subsection{Analysis of $\cG_{n,\ell}^{(4)}$}\label{g4}

Recall the definition of $\cL_{n,\ell}^{(4)}$ in \eqref{eq:cLp}.
%\begin{equation}\label{eq:cLn4E} 
%\begin{split} 
% \cL_n^{(4)}&=\frac{1}{2}\sum_{p,q,r,s\in\L^*_{1,+}}V_{\ell,pqrs} \hat a^*_p\hat a^*_q\hat a_r\hat a_s.
%\end{split} 
%\end{equation}
We define the error $\cE_n^{(4)}$ by
\begin{equation}\label{eq:cLn4E} 
\begin{split} 
 \cG_{n,\ell}^{(4)}=e^{-B}\cL_{n,\ell}^{(4)}e^{B}&=\cV_\ell+\frac{1}{2}\sum_{p,q,r,s\in\L^*_{1,+}}V_{\ell,pqrs}\langle\ph_s\otimes\ph_r,\eta\rangle\langle\eta,\ph_p\otimes\ph_q\rangle\\
%  &\quad+\frac{1}{4}\sum_{p,q,r,s\in\L^*_{1,+}}V_{\ell,pqrs}\langle\eta,\ph_s\otimes\ph_r\rangle\langle\ph_p\otimes\ph_q,\eta\rangle\\
 &\quad+\frac{1}{2}\sum_{p,q,r,s\in\L^*_{1,+}}\big( V_{\ell,pqrs}\langle\ph_s\otimes\ph_r,\eta\rangle \hat b^*_p\hat b^*_q+\hc\big)\\
 &\quad+\cE_n^{(4)}
\end{split} 
\end{equation}
It can be estimated as in the Proposition below.

%%%%%%%%%%%%%%%%%%%%%%%%%%%%%%%%%%%%%%%%%%%
\begin{prop}\label{prop:G4}
Let $\cE_n^{(4)}$ be defined as in \eqref{eq:cLn4E}. Then, under the same assumptions as Proposition \ref{prop:G} for every $\delta>0$ there exists a $C>0$ such that
\begin{equation}\label{eq:errL4}
\pm\cE_{n,\ell}^{(4)}\leq\delta\cV_\ell+ C\k n\ell^{-1}(\cN_++1)
\end{equation}
as operator inequalities on $\cF^{\leq N}_+$.
\end{prop}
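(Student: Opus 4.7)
The plan is to proceed exactly as in the proof of Proposition \ref{prop:G2V}, using Duhamel to replace conjugation by $e^B$ with a commutator, and then expanding the inner Bogoliubov transformation via \eqref{eq:d}. Specifically, writing
\[
e^{-B}\cL_{n,\ell}^{(4)}e^B = \cV_\ell + \int_0^1 ds\, e^{-sB}[\cV_\ell, B]e^{sB},
\]
immediately accounts for the leading $\cV_\ell$ in \eqref{eq:cLn4E}. It then remains to show that the integral term produces the constant and the $\hat b^*\hat b^*$ (plus h.c.) contributions listed in \eqref{eq:cLn4E}, up to an operator bounded by $\delta \cV_\ell + C\kappa n \ell^{-1}(\cN_+ + 1)$.

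The first step would be to compute $[\cV_\ell, B]$. Writing $\cV_\ell = \tfrac12\int dxdy\,\kappa\ell^2 V(\ell(x-y))\, a_x^* a_y^* a_y a_x$ and using the canonical commutation relations together with \eqref{eq:comm-b}, the commutator with $b^*_u b^*_v$ produces, after symmetrization, terms of the form $\int dxdy\, \kappa\ell^2 V(\ell(x-y))\, a^*_x a^*_y a_y b^*(\eta_x)$ plus lower order pieces coming from the $(n-\cN_+)/n$ factors; the latter carry an extra $\cN_+/n$ and will be treated as errors by Lemma~\ref{lm:TNT}. I would then apply $e^{-sB}(\cdot)e^{sB}$ and expand each surviving $a$, $b$ via \eqref{eq:d} into $b(\gamma^{(s)}) + b^*(\sigma^{(s)}) + d^{(s)}$, with $\gamma^{(s)}=1+p^{(s)}$, $\sigma^{(s)}=s\eta+r^{(s)}$.

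The extraction of the two main terms would come from identifying those pieces where the pair $a_y a_x$ is contracted with factors of $s\eta$ against $\eta(u,v)$: a single $s\eta$-contraction integrated over $\int_0^1 ds$ produces (after passing back to the basis $\{\ph_p\}$ and using the definition \eqref{eq:defV} of $V_{\ell,pqrs}$ and \eqref{eq:etapq}) exactly the quadratic $\tfrac12\sum V_{\ell,pqrs}\langle\ph_s\otimes\ph_r,\eta\rangle \hat b^*_p\hat b^*_q$; a double $\eta$-contraction produces the constant $\tfrac12\sum V_{\ell,pqrs}\langle\ph_s\otimes\ph_r,\eta\rangle\langle\eta,\ph_p\otimes\ph_q\rangle$. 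All remaining pieces — those with at least one factor of $p^{(s)}$, $r^{(s)}$, or $d^{(s)}$, together with the $1/n$ corrections from the $b$-normalization — are error terms. Each is estimated by writing it as $\int \kappa\ell^2 V(\ell(x-y))\, a^\sharp(j_{1,x}) a^\sharp(j_{2,y})$ for suitable kernels $j_i \in \{\eta, p^{(s)}, r^{(s)}, \gamma^{(s)}, \sigma^{(s)}\}$ acting on two operator strings, and then applying Lemma \ref{lm:L2hat} (or Lemma \ref{lm:LN4} when one needs to bound by $\cV_\ell^{1/2}$) together with the kernel bounds \eqref{eq:normEta}, \eqref{eq:supEta}, \eqref{eq:supEtaL2}, \eqref{eq:normsp}, \eqref{eq:supr}, \eqref{eq:supL2} from Proposition \ref{prop:eta}. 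For the contributions with $d^{(s)}$, one invokes Lemma \ref{lm:dx}; typical Cauchy--Schwarz splits yield bounds of the form $\kappa n\ell^{-1}\|\eta\|_2^{\alpha}\|(\cN_++1)^{1/2}\xi\|^2 + \kappa^{1/2}n^{1/2}\ell^{-1/2}\|\eta\|_2^\alpha\|(\cN_++1)^{1/2}\xi\|\,\|\cV_\ell^{1/2}\xi\|$, with $\alpha \geq 0$, and since $\|\eta\|_2 \leq C\kappa n\ell^{-1}\leq C\kappa$, the arithmetic-geometric inequality absorbs $\cV_\ell$-dependent pieces into $\delta\cV_\ell$ and yields the stated $C\kappa n\ell^{-1}(\cN_+ + 1)$ remainder.

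The principal obstacle is bookkeeping: the commutator $[\cV_\ell, B]$ already produces several symmetric variants, and expanding each $a,b$ via \eqref{eq:d} multiplies this into many terms, all of which must be classified as either contributing to the two extracted objects in \eqref{eq:cLn4E} or as controllable errors. No new analytic input beyond what was used for Proposition \ref{prop:G2V} is needed; the only difference is that here the starting operator is quartic instead of quadratic, so the terms involving $a$ (rather than $b$) require one more careful normal-ordering step and a careful use of $\cN_+\leq n$ together with Lemma \ref{lm:TNT} to dispose of the $(n-\cN_+)/n$ corrections that come from switching between $a$'s and $b$'s.
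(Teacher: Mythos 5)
Your plan matches the paper's own proof: Duhamel in $s$, the commutator $[a^*_x a^*_y a_x a_y,B]$, normal ordering to extract the constant and quadratic terms of \eqref{eq:cLn4E}, and error estimates via Lemma \ref{lm:L2hat} and Lemma \ref{lm:dx} together with Proposition \ref{prop:eta}, absorbing $\cV_\ell^{1/2}$-dependent pieces into $\delta\cV_\ell$. The only bookkeeping difference is that the paper handles the residual $a^*(\eta_x)a_y$ piece of the commutator via a second Duhamel expansion and the auxiliary bound \eqref{eq:TbbT}, rather than expanding every transformed operator through \eqref{eq:d} directly as you propose; both schemes lead to the same error classification.
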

%%%%%%%%%%%%%%%%%%%%%%%%%%%%%%%%%%%%%%%%%%%

\begin{proof} The proof of \eqref{eq:errL4} follows closely \cite[Section 5.6]{BS} and \cite[Section 4.5]{BBCS1}. 
%We summarize here the main steps.
 We write
 \begin{equation}
\begin{split} 
 e^{-B}\cL_n^{(4)}e^{B}&=\frac{1}{2}\int dxdy\,\k \ell^2V(\ell(x-y))e^{-B} a^*_xa^*_ya_xa_ye^B\\
 &=\cV_\ell+\frac{1}{2}\int dxdy\,\k \ell^2V(\ell(x-y))\int_0^1ds\,e^{-sB} [a^*_xa^*_ya_xa_y,B]e^{sB}\\
 &=\cV_\ell+\frac{1}{2}\int dxdy\,\k\ell^2V(\ell(x-y))\int_0^1ds\,e^{-sB}b^*_xb^*_y\big(a_xa^*(\eta_y)+a^*(\eta_x)a_y+\hc\big)e^{sB}\\
\end{split} 
\end{equation}
We expand $e^{-sB}\big(a_xa^*(\eta_y)+a^*(\eta_x)a_y\big)e^{sB}$ further and  get
%  \begin{equation}
% \begin{split} 
%  T^*\cL_n^{(4)}T
%  &=\cV_\ell+\frac{1}{2}\int dxdy\, \ell^2V(\ell(x-y))\int_0^1ds\,e^{-sB}b^*_xb^*_ye^{sB}\Big[\eta(x,y)\\
%  &\quad+2a^*(\eta_x)a_y+\int_0^sdt\,e^{-tB}\big(2b(\eta^{(2)}_x)b_y+2b^*(\eta_y)b^*(\eta_x)\big)e^{tB}+\hc\Big]\\
% \end{split} 
% \end{equation}
 \begin{equation}
\begin{split} 
 e^{-B}\cL_n^{(4)}e^{B}-
\cV_\ell &=(\text{W}_1+\text{W}_2+\text{W}_3+\text{W}_4)+\hc
\end{split} 
\end{equation}
with
 \begin{equation}
\begin{split} 
 \text{W}_1 &=\frac{1}{2}\int dxdy\,\k \ell^2V(\ell(x-y))\eta(x,y)\int_0^1ds\,e^{-sB}b^*_xb^*_ye^{sB}\\
 \text{W}_2 &=\int dxdy\,\k \ell^2V(\ell(x-y))\int_0^1ds\,e^{-sB}b^*_xb^*_ye^{sB}a^*(\eta_x)a_y\\
  \text{W}_3 &=\int dxdy\,\k \ell^2V(\ell(x-y))\int_0^1ds\,e^{-sB}b^*_xb^*_ye^{sB}\int_0^sdt\,e^{-tB}b(\eta^{2}_x)b_ye^{tB}\\
  \text{W}_4 &=\int dxdy\,\k \ell^2V(\ell(x-y))\int_0^1ds\,e^{-sB}b^*_xb^*_ye^{sB}\int_0^sdt\,e^{-tB}b^*(\eta_y)b^*(\eta_x)e^{tB}\\
\end{split} 
\end{equation}
The term $\text{W}_1$ results from normal ordering of $a_xa^*(\eta_y)$; in $\text{W}_3$  the notation $\eta^2_x(w)$ stands for $\int dz \,\eta(w,z)\eta(z,x)$.
We start with analyzing $\text{W}_1 $. With the relations \eqref{eq:d} we further expand it as
 \begin{equation}
\begin{split} 
 \text{W}_1 &=\frac{1}{2}\int dxdy\,\k \ell^2V(\ell(x-y))\eta(x,y)\int_0^1ds\,\big(b^*(\g_x^{(s)})+b(\s_x^{(s)})\big)\big(b^*(\g_y^{(s)})+b(\s_y^{(s)})\big)\\
 &\quad+\frac{1}{2}\int dxdy\, \k\ell^2V(\ell(x-y))\eta(x,y)\int_0^1ds\,\big(d_x^{(s)}\big)^*\big(b^*(\g_y^{(s)})+b(\s_y^{(s)})\big)\\ &\quad+\frac{1}{2}\int dxdy\, \k\ell^2V(\ell(x-y))\eta(x,y)\int_0^1ds\,\big(b^*(\g_x^{(s)})+b(\s_x^{(s)})\big)\big(d_y^{(s)}\big)^*\\
  &\quad+\frac{1}{2}\int dxdy\,\k \ell^2V(\ell(x-y))\eta(x,y)\int_0^1ds\,\big(d_x^{(s)}\big)^*\big(d_y^{(s)}\big)^*\\
  &=: \text{W}_{11} +\text{W}_{12}+ \text{W}_{13}+\text{W}_{14},
\end{split} 
\end{equation}
where, for $x,y\in\L_1$, $\g^{(s)}_x(y)$, $\s^{(s)}_x(y)$ and $d^{(s)}_x$ are defined as in \eqref{eq:defGaSi} and \eqref{eq:d} respectively, with $\eta$ substituted by $s\eta$.
Multiplying out the product in $ \text{W}_{11}$ and normal ordering the term $b(\s_x^{(s)})b^*(\g_y^{(s)})$ leads to
 \begin{equation}\label{eq:mainW11}
\begin{split} 
 \text{W}_{11} &=\frac{1}{2}\int dxdy\, \k\ell^2V(\ell(x-y))\eta(x,y)ds\,\Big(b^*_xb^*_y+\frac{1}{2}\eta(x,y)\Big)\\
 &\quad+\text{W}_{112}\\
\end{split} 
\end{equation}
with
 \begin{equation}
\begin{split} 
\text{W}_{112} &=\frac{1}{2}\int dxdy\, \k\ell^2V(\ell(x-y))\eta(x,y)\int_0^1ds\,\Big[b^*_yb(\s_x)-n^{-1}a^*_ya(\s_x)+r^{(s)}(x,y)\\
&\quad-\cN_+n^{-1}\s^{(s)}(x,y)+b^*_xb^*(p^{(s)}_y)+b^*(p^{(s)}_x)\big(b^*_y+b^*(p^{(s)}_y)\big)+\big(b^*_x+b^*(p^{(s)}_x)\big)b(\s^{(s)}_y)\\&\quad+b(\s^{(s)}_x)\big(b^*(p^{(s)}_y)+b^*(\s^{(s)}_y)\big)\Big]\\
\end{split} 
\end{equation}
where $p^{(s)}_x(y)=\g^{(s)}_x(y)-\d(x-y)$ and $r^{(s)}_x(y)=\s^{(s)}_x(y)-s\eta_x(y)$.
The first line in \eqref{eq:mainW11}, together with its hermitian conjugate, gives the main terms in \eqref{eq:cLn4E}. To estimate $\text{W}_{112}$ we first use \eqref{eq:supEta} and then apply Lemma \ref{lm:L2hat} with estimate \eqref{eq:normsp} (for the term proportional to $r^{(s)}$ we use \eqref{eq:supr} and for the term proportional to $\s^{(s)}$ we use \eqref{eq:supEta} and \eqref{eq:supr}). This way we obtain 
 \begin{equation}\label{eq:W112}
\begin{split} 
|\langle\xi,\text{W}_{112}\xi\rangle| &\leq C\k n\ell^{-1}\|(\cN_++1)^{1/2}\xi\|^2
\end{split} 
\end{equation}
To control $\text{W}_{12}$ we use Lemma \ref{lm:dx} and \eqref{eq:supEta}:
 \begin{equation}\label{eq:W12}
\begin{split} 
 |\langle\xi,&\text{W}_{12}\xi\rangle| \leq C\int dxdy\,\k \ell^2V(\ell(x-y))|\eta(x,y)|\int_0^1ds\,\|d_x^{(s)}\xi\|\|\big(b^*(p_y^{(s)})+b(\s_y^{(s)})\big)\xi\|\\ 
 &\quad+C\|(\cN_++1)^{1/2}\xi\|\int dxdy\,\k \ell^2V(\ell(x-y))|\eta(x,y)|\int_0^1ds\,\|(\cN_++1)^{-1/2}b_yd_x^{(s)}\xi\|\\
  &\leq C\k n\ell^{-1}\|(\cN_++1)^{1/2}\xi\|^2+C\k^{1/2}n^{1/2}\ell^{-1/2}\|(\cN_++1)^{1/2}\xi\|\|\cV_\ell^{1/2}\xi\|
\end{split} 
\end{equation}
Similarly, for $\text{W}_{13}$ and for $\text{W}_{14}$ we have
 \begin{equation}\label{eq:W13}
\begin{split} 
 |\langle\xi,\text{W}_{13}\xi\rangle|&\leq C\|(\cN_++1)^{1/2}\xi\| \int dxdy\, \k\ell^2V(\ell(x-y))|\eta(x,y)|\\
 &\qquad\int_0^1ds\,\|(\cN_++1)^{-1/2}d_y^{(s)}\big(b(\g_x^{(s)})+b^*(\s_x^{(s)})\big)\xi\|\\
   &\leq C\k n\ell^{-1}\|(\cN_++1)^{1/2}\xi\|^2+C\k^{1/2}n^{1/2}\ell^{-1/2}\|(\cN_++1)^{1/2}\xi\|\|\cV_\ell^{1/2}\xi\|
\end{split} 
\end{equation}
and
 \begin{equation}\label{eq:W14}
\begin{split} 
 |\langle\xi&,\text{W}_{14}\xi\rangle|\leq C\|(\cN_++1)^{1/2}\xi\| \int dxdy\,\k \ell^2V(\ell(x-y))|\eta(x,y)|\int_0^1ds\,\|(\cN_++1)^{-1/2}d_y^{(s)}d_x^{(s)}\xi\|\\
   &\leq C\k n\ell^{-1}\|(\cN_++1)^{1/2}\xi\|^2+C\k^{1/2}n^{1/2}\ell^{-1/2}\|(\cN_++1)^{1/2}\xi\|\|\cV_\ell^{1/2}\xi\|
\end{split} 
\end{equation}

Next we consider  $ \text{W}_2$. By using \eqref{eq:d} and Lemma \ref{lm:dx}, wee observe that
 \begin{equation}\label{eq:TbbT}
\begin{split} 
\|&(\cN_++1)^{1/2}e^{-sB}b_xb_ye^{sB}\xi\|\leq C\Big[\|a_xa_y(\cN_++1)^{1/2}\xi\|+\|\eta\|\|\eta_y\|\|a_x(\cN_++1)\xi\|\\
&\quad+|\eta(x,y)|\|(\cN_++1)^{1/2}\xi\|+\|\eta\|\|\eta_x\|\|a_y(\cN_++1)\xi\|+\|\eta\|\|\eta_x\|\|\eta_y\|\|(\cN_++1)^{3/2}\xi\|\Big],
\end{split} 
\end{equation}
Combining this with estimate \eqref{eq:supEtaL2} we conclude that
 \begin{equation}\label{eq:W2}
\begin{split} 
 |&\langle\xi,\text{W}_2\xi\rangle|\\
 &\leq C\int dxdy\,\k \ell^2V(\ell(x-y))\int_0^1ds\,\|(\cN_++1)^{1/2}e^{-sB}b_xb_ye^{sB}\xi\|\|(\cN_++1)^{-1/2}a^*(\eta_x)a_y\xi\|\\
    &\leq C\k n\ell^{-1}\|(\cN_++1)^{1/2}\xi\|^2+C\k^{1/2}n^{1/2}\ell^{-1/2}\|(\cN_++1)^{1/2}\xi\|\|\cV_\ell^{1/2}\xi\|
\end{split} 
\end{equation}
With similar arguments to those used to prove \eqref{eq:TbbT} (in particular, using the last two estimates in  Lemma \ref{lm:dx}), we also obtain
 \begin{equation}
\begin{split} 
\|(\cN_++1)^{-1/2}e^{-sB}b(\eta^{(2)}_x)b_ye^{sB}\xi\|&=\|(\cN_++1)^{-1/2}\int dz\,\eta^{(2)}(x,z)e^{-sB}b_zb_ye^{sB}\xi\|\\
&\leq C\Big[\|\eta\|\|\eta_x\|\|a_y\xi\|+\|\eta\|\|\eta_x\|\|\eta_y\|\|(\cN_++1)^{1/2}\xi\|\Big]
\end{split} 
\end{equation}
and
 \begin{equation}
\begin{split} 
\|(\cN_++1)^{-1/2}e^{-sB}b(\eta^{(2)}_x)b(\eta_y)e^{sB}\xi\|&=\|(\cN_++1)^{-1/2}\int dzdt\,\eta(x,z)\eta(y,t)e^{-sB}b_zb_te^{sB}\xi\|\\
&\leq C\|\eta\|\|\eta_x\|\|\eta_y\|\|(\cN_++1)^{1/2}\xi\|
\end{split} 
\end{equation}
leading to
 \begin{equation}\label{eq:W3W4}
\begin{split} 
 |&\langle\xi,\text{W}_3\xi\rangle|,|\langle\xi,\text{W}_4\xi\rangle|\leq C\k n\ell^{-1}\|(\cN_++1)^{1/2}\xi\|^2+C\k^{1/2}n^{1/2}\ell^{-1/2}\|(\cN_++1)^{1/2}\xi\|\|\cV_\ell^{1/2}\xi\|.
\end{split} 
\end{equation}
Estimate \eqref{eq:W3W4}, together with \eqref{eq:W112}, \eqref{eq:W12}, \eqref{eq:W13}, \eqref{eq:W14} and \eqref{eq:W2} conclude the proof of \eqref{eq:errL4}.
\end{proof}

%%%%%%%%%%%%%%%%%%%%%%%%%%%%%%%%%%%%%%%%%%%%%%%%%%%%%%%%%%%%%%%%%%%%%%%%%%%%%%%%%%%%%%%%%%%%%%%%%%%%
\subsection{Proof of Proposition \ref{prop:G}}\label{proofProp}

\begin{proof}[Proof of Prop. \ref{prop:G}]
From Propositions \ref{prop:G0}, \ref{prop:G1}, \ref{prop:G2K}, \ref{prop:G2V},  \ref{prop:G3} and \ref{prop:G4} we conclude that the excitation Hamiltonian $\cG_{n,\ell}$ can be written as
 \begin{equation}\nonumber
\begin{split} 
e^{-B}\cL_{n,\ell}e^{B} &= C_{n,\ell}+L_{n,\ell}+\cK+Q_{n,\ell} +\cV_\ell+\cE_{n,\ell}
\end{split} 
\end{equation}
where the operators $\cK$ and $\cV_\ell$ are defined as in \eqref{eq:excitHam}, the constant contribution (i.e. the term not depending on operators) $C_{n,\ell}$ is given by \eqref{eq:constantTerm}, the linear terms are given by
 \begin{equation}\label{eq:linear} 
\begin{split} 
L_{n,\ell}&=n^{3/2}\int dx dy\, \k\ell^2V(\ell(x-y)) \big[b(\gamma_x)+b^*(\sigma_x) +\hc\big] \\
&\quad+n^{1/2}\int dx dy\,\k\ell^2V(\ell(x-y))\eta(y,x)  \big[b(\gamma_x)+b^*(\sigma_x) +\hc\big] 
\end{split} 
\end{equation}
and the quadratic terms are
 \begin{equation}\label{eq:TLT} 
\begin{split} 
Q_{n,\ell}&=-\frac{1}{2}\sum_{p,r\in\L^*_{1,+}}\Big(\langle\ph_p\otimes\ph_r,\big(\Delta_1+\Delta_2\big)\eta\rangle\hat  b_p^*\hat b^*_r+\hc\Big)\\
%
% &\quad-\frac{1}{2}\sum_{p,r\in\L^*_{1,+}}\langle\big(\Delta_1^{(\text{N})}+\Delta_2^{(\text{N})}\big)\eta,\ph_p\otimes\ph_r\rangle b_pb_r\\
&\quad+\frac{1}{2}\sum_{p,q\in\L^*_{1,+}}\Big(nV_{\ell,pq00}\hat b^*_p\hat b^*_q+\hc\Big)\\
&\quad+\frac{1}{2}\sum_{p,q,r,s\in\L^*_{1,+}}\Big( V_{\ell,pqrs}\langle\ph_s\otimes\ph_r,\eta\rangle \hat b^*_p\hat b^*_q+\hc\Big)\\
\end{split} 
\end{equation}
The error term $\cE_{n,\ell}$ satisfies
 \begin{equation}\nonumber
\begin{split} 
 \pm\cE_{n,\ell}\leq \delta (\cK+\cV_{\ell})+C\k n\ell^{-1}(\cN_++1).
\end{split} 
\end{equation}

We first consider the linear terms in \eqref{eq:linear}. Decomposing $\eta$ in the second line of \eqref{eq:linear} with the aid of \eqref{eq:etaQkQIK}, we have
 \begin{equation}
\begin{split} 
L_{n,\ell}
% &=n^{3/2}\int dx dy\,\k  \ell^2V(\ell(x-y))(1-w_\ell(x,y)) \big[b(\gamma_x)+b^*(\sigma_x) +\hc\big]\\
% &\quad+n^{1/2}\int dx dy\k \ell^2 V(\ell(x-y))\mu(x,y)\big[b(\gamma_x)+b^*(\sigma_x) +\hc\big]\\
&= \text{L}_{1}+\text{L}_2+\text{L}_3
\end{split} 
\end{equation}
with
 \begin{equation}\label{eq:Lterms}
\begin{split} 
\text{L}_{1}&=n^{3/2}\int dx dy\,\k  \ell^2V(\ell(x-y))\ell^3f_\ell(x,y) \big[b(\gamma_x)+b^*(\sigma_x) +\hc\big]\\
\text{L}_2&=n^{3/2}\int dx dy\,\k \ell^2 V(\ell(x-y))\int dz\big(w_\ell(z,y)+w_\ell(x,z)\big) \big[b(\gamma_x)+b^*(\sigma_x) +\hc\big]\\
\text{L}_3&=-n^{3/2}\int dx dy\,\k \ell^2 V(\ell(x-y))\int dz_1dz_2\,w_\ell(z_1,z_2)\big[b(\gamma_x)+b^*(\sigma_x) +\hc\big]\\
\end{split} 
\end{equation}
By estimates \eqref{eq:L1norm} and \eqref{eq:normsp}, it follows that for any $\xi\in\cF^{\leq n}_+$
  \begin{equation}\label{eq:lin3}
 \begin{split} 
 |\langle\xi,\text{L}_3\xi\rangle|&\leq C n^{3/2}\int dx dy\,\k \ell^2 V(\ell(x-y))|\langle\xi,\big(b(\gamma_x)+b^*(\sigma_x)\big)\xi\rangle|\int dz_1dz_2\,|w_\ell(z_1,z_2)|\\
 &\leq C\k n^{3/2}\ell^{-2}\int dx |\langle\xi,\big(b(\gamma_x)+b^*(\sigma_x)\big)\xi\rangle|\int dy\,\k \ell^3 V(\ell(x-y))\\
 &\leq C\k  n^{1/2}\ell^{-1}\|\cN_+^{1/2}\xi\|\|\xi\|.
\end{split} 
\end{equation}
For the term $\text{L}_2$ we estimate
 \begin{equation}\nonumber
\begin{split} 
|\langle\xi,\text{L}_2\xi\rangle|&\leq n^{3/2}\int dx dy\,\k \ell^2 V(\ell(x-y))\int dz\,|w_\ell(z,y)+w_\ell(x,z)| |\langle\xi,\big(b(\gamma_x)+b^*(\sigma_x)\big)\xi\rangle|\\
&\leq n^{3/2}\int dx|\langle\xi,\big(b(\gamma_x)+b^*(\sigma_x)\big)\xi\rangle|\int dy\,\k \ell^2 V(\ell(x-y))\int dz\,|w_\ell(z,y)| \\
&\quad + n^{3/2}\int dz dx \,|w_\ell(x,z)| |\langle\xi,\big(b(\gamma_x)+b^*(\sigma_x)\big)\xi\rangle|\int dy\,\k \ell^2 V(\ell(x-y))\\
\end{split} 
\end{equation}
% We set
% \[
%  g(x)=\int dy\,\k \ell^3 V(\ell(x-y))\int dz\,|w_\ell(z,y)|
% \]
% Notice that, using \eqref{eq:supOmegal} and \eqref{eq:L1normOmega},
%  \begin{equation}\label{eq:l2normg}
% \begin{split} 
% \int& dx\,|g(x)|^2=\int dx\,\Big|\int dy\,\k \ell^3 V(\ell(x-y))\int dz\,|w_\ell(z,y)|\Big|^2\\
% &=\int dy_1dz_1\,|w_\ell(z_1,y_1)|\int dy_2  dz_2\,|w_\ell(z_2,y_2)|\int dx\, \k \ell^3 V(\ell(x-y_1))\k \ell^3 V(\ell(x-y_2))\\
% &\leq\int dy_1dz_1\,|w_\ell(z_1,y_1)|\int dx\, \k \ell^3 V(\ell(x-y_1))\int dy_2\,\k \ell^3 V(\ell(x-y_2))\\
% &\leq C\k^3\ell^{-1}
% \end{split} 
% \end{equation}
Using  \eqref{eq:norm}, \eqref{eq:PWdecay} and \eqref{eq:normsp}, we get
 \begin{equation}\label{eq:lin2}
\begin{split} 
|\langle\xi,\text{L}_2\xi\rangle|
&\leq \k\frac{n^{3/2}}{\ell^2}\int dx|\langle\xi,\big(b(\gamma_x)+b^*(\sigma_x)\big)\xi\rangle| \\
&\quad + \frac{n^{3/2}}{\ell}\int dz dx \,|w_\ell(x,z)| |\langle\xi,\big(b(\gamma_x)+b^*(\sigma_x)\big)\xi\rangle|\int dy\,\k \ell^3 V(\ell(x-y))\\
% &\leq \frac{n^{3/2}}{\ell}\Big(\int dx\|\big(b(\gamma_x)+b^*(\sigma_x)\big)\xi\|^2\Big)^{1/2}\Big(\int dx|g(x)|^2\Big)^{1/2} \\
% &\quad + C\k\frac{n^{3/2}}{\ell}\Big(\int dz dx \,|w_\ell(x,z)|^2\Big)^{1/2} \Big(\int dz dx \,|\|\big(b(\gamma_x)+b^*(\sigma_x)\big)\xi\|^2\Big)^{1/2}\\
&\leq C\k n^{3/2}\ell^{-2}\|\cN_+^{1/2}\xi\|\|\xi\|\leq C\k n^{1/2}\ell^{-1}\|\cN_+^{1/2}\xi\|\|\xi\|
\end{split} 
\end{equation}

In order to control  $\text{L}_{1}$ in \eqref{eq:Lterms}, we write it as 
 \begin{equation}\label{eq:L1term}
\begin{split} 
\text{L}_{1}&=n^{3/2}\ell^{-1}c\int dx\, \big[b_x+b(p_x)+b^*(\sigma_x) +\hc\big]\\
&\quad+n^{3/2}\ell^{-1}\int dx \,\big[b(\gamma_x)+b^*(\sigma_x) +\hc\big]\left[\int dy\, \k \ell^3V(\ell(x-y))\ell^3f_\ell(x,y)-c\right]\\
&=:\text{L}_{11}+\text{L}_{12}\\
\end{split} 
\end{equation}
for a constant $c\in \mathbb{R}$. The expectation on $\xi\in \cF_+^{\leq n}$ of $\text{L}_{11}$ in \eqref{eq:L1term} vanishes for any $c$, since $\s,p\in L^2_+(\RRR^3)\times L^2_+(\RRR^3)$. We define
\[
 h_\ell(x)=\int dy\,\k \ell^3V(\ell(x-y))\ell^3f_\ell(x,y)
\]
and we set $c=h_\ell(0)$, where $h_\ell(0)$ is the function $h_\ell$ evaluated at the center of the box.
Let  $d(x)$ denote the distance of  $x$ from the boundary of the box. We denote with $S_{4/\ell}$ the set of all $x\in\L_1$ with  $d(x)<4R_0/\ell$, where $R_0$ is the diameter of the support of $V$. We call $\chi_{S_{4/\ell}}$ the characteristic function of this set.
%, i.e. has distance smaller than $4/\ell$ from $\partial \L_1$.
We split $\text{L}_{12}$ as
 \begin{equation}
\begin{split} 
\text{L}_{12}&=n^{3/2}\ell^{-1}\int dx \,\big[b(\gamma_x)+b^*(\sigma_x) +\hc\big]\left[h_\ell(x)-h_\ell(0)\right]\chi_{S_{4/\ell}^c}(x)\\
&\quad+n^{3/2}\ell^{-1}\int dx \,\big[b(\gamma_x)+b^*(\sigma_x) +\hc\big]\left[h_\ell(x)-h_\ell(0)\right]\chi_{S_{4/\ell}}(x)\\
&=\text{L}_{121}+\text{L}_{122}
\end{split} 
\end{equation}
From \eqref{eq:sup} it follows that $ \sup_{x\in \L_1}h_\ell(x)\leq C \kappa$,
for a constant $C>0$; therefore
 \begin{equation} 
\begin{split} 
|\langle\xi,\text{L}_{122}\xi\rangle|&\leq C \kappa n^{3/2}\ell^{-1}\|\xi\|\int dx \,\|(b(\gamma_x)+b^*(\sigma_x))\xi\|\,\chi_{S_{4/\ell}}(x)\\
&\leq C \kappa n^{3/2}\ell^{-1}\|\xi\|\|(\cN_++1)^{1/2}\xi\|\left(\int dx \,\chi_{S_{4/\ell}}(x)\right)^{1/2}\\
&\leq  C \kappa n^{3/2}\ell^{-3/2}\|\xi\|\|(\cN_++1)^{1/2}\xi\|
\end{split} 
\end{equation}
where we used Cauchy-Schwarz, \eqref{eq:normsp} and \eqref{eq:normEta}.
Using the same bounds, we obtain for $\text{L}_{121}$
\begin{equation}
\begin{split} 
|\langle\xi,\text{L}_{121}\xi\rangle|&\leq Cn^{3/2}\ell^{-1}\|\xi\|\int dx \,\|\big[b(\gamma_x)+b^*(\sigma_x) +\hc\big]\xi\||h_\ell(x)-h_\ell(0)|\chi_{S_{4/\ell}^c}(x)\\
&\leq C n^{3/2}\ell^{-1}\|\xi\|\|(\cN_++1)^{1/2}\xi\|\||h_\ell-h_\ell(0)|\chi_{S_{4/\ell}^c}\|_2
\end{split} 
\end{equation}
Calling $ h(x)=\int_{\L_\ell} dy\, \k V(x-y)f(x,y)$, we have
\begin{equation}\label{eq:PoincLell}
\begin{split} 
\int_{S^c_{4/\ell}} dx\,|h_\ell(x)-h_\ell(0)|^2&=\int_{S^c_{4/\ell}} dx\left|\int dy\, \k\ell^3V(\ell(x-y))\ell^3f_\ell(x,y)-h_\ell(0)\right|^2\\
% &=\ell^3\int_{S^c_4} dx\left|\int_{\L_\ell} dy\, \k V(x-y)f(x,y)-c\right|^2\\
&=\ell^{-3}\int_{S^c_4} dx\left|\ell^3h(x)-\ell^3h(0)\right|^2
\end{split} 
\end{equation}
where $S_4^{c}$ is the set of points in $\L_\ell$ whose coordinates are at a distance bigger than $4R_0$ from the boundary.
We write
\begin{equation}\nonumber
\begin{split} 
 h(x)-h(0)=\int_{0}^1 dt\, \nabla h (tx)\,x
\end{split}
\end{equation}
and it remains to calculate  $\nabla h$. We have
\begin{equation}\nonumber
\begin{split} 
\partial_{x_i}h(x)
%&=\partial_{x_i}\int_{\L_\ell} dy\,V(x-y)f(x,y)\\
&=-\int_{\L_\ell} dy\,\k \partial_{y_i}V(x-y)f(x,y)+\int_{\L_\ell} dy\,\k V((x-y)\partial_{x_i}f(x,y)\\
&=-\int_{\partial\L_\ell} d\sigma_y\,\k V(x-y)f(x,y)\n_i+\int_{\L_\ell} dy\,\k V(x-y)(\partial_{x_i}+\partial_{y_i})f(x,y)
\end{split} 
\end{equation}
The boundary contribution above vanishes for $x\in S_4^c$. %for the same reason, the remaining term in $\partial_{x_i}h(x)$ is identically zero if $y\in S_1$ and $x\in S_4^c$.
% \begin{equation}
% \begin{split} 
% \int_{\L_\ell} &dx\,\chi_{S_1^c}(x)\Big|\int_{\partial\L_\ell} d\sigma_y\,\n\,V((x-y))f(x,y)\Big|^2\leq C\ell^{-6}\int_{\L_\ell} dx\,\Big|\int_{\partial\L_\ell} d\sigma_y\,V((x-y))\Big|^2\\
% & \leq C\ell^{-6}\int_{\L_\ell} dx\,\int_{\partial\L_\ell} d\sigma_{y_1}\,V((x-y_1))\int_{\partial\L_\ell} d\sigma_{y_2}\,V((x-y_2))\\
% \end{split} 
% \end{equation}
% PROBLEM HERE\\
Moreover, using  \eqref{eq:PWder} and the fact that $V$ is bounded and compactly supported we obtain
\begin{equation}
\begin{split} 
|\nabla_{x}h(x)|&\leq C\kappa\ell^{-3}\big(d\big(x\big)+1\big)^{-5/3}.
\end{split} 
\end{equation}
%  \begin{equation}\label{eq:PWder}
%  \begin{split}
% |\nabla_{x+y}f(x,y)|&\leq C\kappa\left(\ell^{-3}\Big(d\big(x+y\big)^{5/3}+1\Big)^{-1}\right)
% \end{split} 
%   \end{equation}
Therefore
\begin{equation}\nonumber
\begin{split} 
 \ell^3|h(x)-h(0)|\leq C\k\Big[\int_{0}^1 dt\,\big(d\big(tx\big)+1\big)^{-5/3}|x|\Big]
\end{split}
\end{equation}
To compute the integral, assume that $x^{(3)} \geq \max\{ |x^{(1)}|, |x^{(2)}|\}$. Than $d(tx) = \ell/2 - t x^{(3)}$, and hence 
$$
\int_{0}^1 dt\,\big(d\big(tx\big)+1\big)^{-5/3} = \frac 3{2 x^{(3)} } \left( \frac 1{\left(\ell/2 + 1 - x^{(3)} \right)^{2/3}} - \frac 1{ \left( \ell/2 + 1 \right)^{2/3}} \right) \leq \frac{ 3 \sqrt{3}}{2 |x|} \frac 1{ \left( d(x) + 1\right)^{2/3}}
$$
where we used that $|x|^2\leq 3 \, |x^{(3)}|^2$.  In particular, 
\iffalse
We bound
\begin{equation}
\begin{split} 
\int_{0}^1 dt\,\frac{1}{(d\big(tx\big)+1)^{5/3}}\leq \frac{1}{(d\big(x\big)+1)^{5/3}}.
\end{split}
\end{equation}
We observe now that the biggest component of the vector $x$ (that we denote as $x^{(i)}$) has the same direction of the shortest segment connecting the point $x$ to the boundary of the box. Therefore we can write $|x|\leq 3\,|x^{(i)}|$ and 
% 
% We observe now that the vector $x$ and the direction of the distance of $x$ from the boundary of the box form an angle that is less that $\pi/4$. For this reason, the component of $x$ along the direction of $d\big(x\big)$ is always larger than all the other components and therefore
% there exists a constant $C$ such that
% \begin{equation}
% \begin{split} 
% \frac{|x|}{d\big(x\big)}\leq C.
% \end{split}
% \end{equation}
% We have then
\fi
\begin{equation}
\begin{split} 
 \ell^3|h(x)-h(0)|\leq \frac{C\k}{\left( d(x) + 1 \right)^{2/3}}
\end{split}
\end{equation}
%Extending the integration over $x^{(i)}$ to $\RRR$ we obtain
from which it easily follows that
\begin{equation}
\begin{split} 
\int_{S^c_4} dx\left| \ell^3\big(h(x)-h(0)\big)\right|^2\leq C\k^2\ell^2
\end{split} 
\end{equation}
\iffalse
and thus
\begin{equation}
\begin{split} 
|\langle\xi,L_{n,\ell}\xi\rangle|&\leq Cn^{3/2}\ell^{-1}\int dx \,\|b_x\xi\||h_\ell(x)-c|\chi_{S_\ell^c}(x)\\
&\leq C n^{3/2}\ell^{-1}\|(\cN_++1)^{1/2}\xi\|\||h_\ell(x)-c|\chi_{S_\ell^c}(x)\|_2\\
&\leq C\k n^{3/2}\ell^{-3/2}\|(\cN_++1)^{1/2}\xi\|\|\xi\|.
\end{split} 
\end{equation}
\fi
We have therefore proved that
\begin{equation}
\begin{split} 
|\langle\xi,\text{L}_{121}\xi\rangle|\leq C\k n^{3/2} \ell^{-3/2}\|(\cN_++1)^{1/2}\xi\|\|\xi\|.
\end{split} 
\end{equation}

We examine now the quadratic contributions in \eqref{eq:TLT}, given by 
\begin{equation}\nonumber 
\begin{split} 
Q_{n,\ell}
&=\frac{n}{2}\int dx dy\,\left[ (\D_x+\D_y)w_\ell(x,y)+\k \ell^2V(\ell(x-y))\big(1-w_\ell(x,y)\big)\right][b_xb_y+b_x^*b_y^*]\\
\end{split} 
\end{equation}
By equation \eqref{eq:6dScatRescOm} and $1-w_\ell(x,y)=\ell^3f_\ell(x,y)$, we have
\begin{equation}\nonumber 
\begin{split} 
Q_{n,\ell}
% &=\frac{n\ell^3}{2}\int dx dy\,\left[ -(\D_x+\D_y)f_\ell(x,y)+\ell^2V(\ell(x-y))f_\ell(x,y)\right][b_xb_y+b_x^*b_y^*]\\
% &\quad+\int dx dy\,\left[-\frac{1}{2} (\D_x+\D_y)\mu_\ell(x,y)+\frac{1}{2}\ell^2V(\ell(x-y))\mu_\ell(x,y)\right][b_xb_y+b_x^*b_y^*]\\
&=\frac{n\ell^5}{2}\l_\ell\int dx dy\,f_\ell(x,y)[b_xb_y+b_x^*b_y^*]\\
\end{split} 
\end{equation}
For any $\xi\in\cF_+^{\leq n}$ we estimate
\begin{equation}\nonumber 
\begin{split} 
|\langle\xi,Q_{n,\ell}\xi\rangle|
&\leq C\k n\ell^2\int dy\,|\langle b^*(f_\ell(\cdot,y))\xi,b_y\xi\rangle|\\&\leq C\k n\ell^2\|f_\ell\|_2\|(\cN_++1)^{1/2}\xi\|^2\leq C\k \frac{n}{\ell}\|(\cN_++1)^{1/2}\xi\|^2
\end{split} 
\end{equation}
where we used \eqref{eq:slength} and the fact that $f$ is normalized to $1$ in $\L_\ell\times\L_\ell$, so $\|f_\ell\|_2 = \ell^{-3}$.
This concludes the proof of Prop.~\ref{prop:G}.
%\eqref{eq:estForCond}.
\end{proof}

\section{Proof of Theorem \ref{mainTheorem} and Corollary \ref{LHY}}\label{proofs}

We shall now use Proposition  \ref{prop:G} and Lemma \ref{lm:TNT} in order to prove Theorem \ref{mainTheorem}.

\begin{proof}[Proof of Theorem \ref{mainTheorem}]
Using the bounds $\cK=\sum_{p\in \L_{1,+}^*}p^2a^*_pa_p\geq \pi^2\sum_{p\in \L_{1,+}^*}a^*_pa_p=\pi^2\cN_+$, $\cV_\ell\geq 0$ and setting $\delta=1/3$, we have, from \eqref{eq:estForCond},
\begin{equation}\label{eq:FirstLB}
\begin{split}
\cG_{n,\ell}&\geq C_{n,\ell}+(1-\delta)(\cK+\cV_\ell)-\kappa C \frac{n}{\ell}(\cN_++1)\geq C_{n,\ell}+\Big(\frac{2}{3}-\kappa \frac{C}{\pi^2} \frac{n}{\ell}\Big)\cK- C\kappa\frac{n}{\ell}
\end{split}
\end{equation}
Assuming $\kappa n/\ell$ small enough we get 
 \begin{equation}\label{eq:low}
\begin{split} 
\cG_{n,\ell}\geq C_{n,\ell}+ \frac{\pi^2}{2}\cN_+ - C\kappa\frac{n}{\ell} \geq C_{n,\ell} - C\kappa\frac{n}{\ell}
\end{split} 
\end{equation}
%and
%  \begin{equation}\label{eq:LB} 
%\begin{split} 
%\cG_{n,\ell}- C_{n,\ell}\geq \frac{\pi^2}{2}\cN_+ - C\kappa\frac{n}{\ell} 
%\end{split} 
% t\end{equation}
Equation \eqref{eq:estForCond} also implies (taking $\delta=1$) the upper bound
 \begin{equation}\label{eq:up}
\begin{split} 
\cG_{n,\ell}  &\leq C_{n,\ell}+2(\cK +\cV_\ell)+C\k\frac{n}{\ell}(\cN_++1)
\end{split} 
\end{equation}
From \eqref{eq:up} (evaluated on the vacuum)  and \eqref{eq:low}
it follows that
\begin{equation}\label{eq:ec}
\begin{split} 
\left|e_{n,\ell}-C_{n,\ell}\right|\leq C\k\frac{n}{\ell}
\end{split} 
\end{equation}
Using equation \eqref{eq:constantTerm}, the definition of $\eta$ (in equation \eqref{eq:etaQkQ}) and the fact that it is orthogonal to the condensate wave function $\ph_0$ we  write
%  \begin{equation}\label{eq:functional1} 
% \begin{split} 
% C_{n,\ell}&=\frac{n^2}{2} \int dxdy\,\ell^2 V(\ell(x-y))-\frac{n^2}{2}\int dx dy\, \bar w_\ell(x,y)(\Delta_x+\Delta_y)w_\ell(x,y)\\
% &\quad-  \frac{n^2}{2}\int dxdy\,\ell^2 V(\ell(x-y))\big(\bar w_\ell(x,y)+w_\ell(x,y)\big)+  \frac{n^2}{2}\int dxdy\,\ell^2 V(\ell(x-y))|w_\ell(x,y)|^2\\
% &\quad+R_{n,\ell}
% \end{split} 
% \end{equation}
 \begin{equation}\label{eq:functional4} 
\begin{split} 
C_{n,\ell}&=\frac{n^2}{2\ell^4} \int_{\L_\ell\times\L_\ell} dxdy\, \Big[\k V(x-y)|1-w( x, y)|^2+|\nabla_xw(x, y)|^2+|\nabla_yw(x, y)|^2\Big]+R_{n,\ell}
\end{split} 
\end{equation}
where
 \begin{equation}\label{eq:fMu} 
\begin{split} 
R_{n,\ell}&=-\frac{n^2}{2}\int dx dydz\, \Big[\,w_\ell(z,y) +\,w_\ell(x,z)-\int dt\,w_\ell(z,t)
\Big](\Delta_x+\Delta_y)w_\ell(x,y)
\end{split} 
\end{equation}
Recalling the definition $1-w=\ell^3f$, where $f$ is the minimizer of \eqref{eq:Fphi} in Proposition \ref{prop:Ff}, we conclude that
 \begin{equation}\label{eq:clog}
\begin{split} 
C_{n,\ell}&=4\pi \mathfrak{a} \frac{n^2}{\ell}\Big(1+\mathcal{O}\Big(\frac{\mathfrak{a}}{\ell}\ln (\ell/\mathfrak{a})\Big)\Big)+R_{n,\ell}
\end{split} 
\end{equation}
The error $R_{n,\ell}$ can be controlled by substituting equation  \eqref{eq:6dScatRescOm} for $(\Delta_x+\Delta_y)w_\ell(x,y)$ and using estimates \eqref{eq:sup} for $f_\ell$ and \eqref{eq:L1norm} for $w_\ell$. This gives  $|R_{n,\ell}|\leq C\k n^2\ell^{-2}$. Equations \eqref{eq:ec} and \eqref{eq:clog} imply \eqref{eq:gpEnergy}.

Let now $\psi_n \in L^2_s (\Lambda_1^n)$ be a normalized wave function, with
\[
\langle \psi_n , H_{n,\ell} \psi_n \rangle \leq e_{n,\ell} + \zeta  
\]
for some $\zeta > 0$ and $e_{n,\ell}$ the ground state energy of $H_n$. We define $\xi_n = e^{-B} U_n \psi_n \in \cF_+^{\leq n}$. Therefore
\[ \langle \xi_n, \cG_{n,\ell} \xi_n \rangle = \langle \psi_n , H_n \psi_n \rangle \leq e_{n,\ell} + \zeta \]
From \eqref{eq:low} and \eqref{eq:ec} we have
\begin{equation} 
\begin{split} 
 \frac{\pi^2}{2}\langle \xi_n,\cN_+  \xi_n \rangle \leq \zeta+ C\kappa\frac{n}{\ell} 
\end{split} 
\end{equation}
Using \eqref{eq:U}, Lemma \ref{lm:TNT} and \eqref{eq:normEta} we have
\begin{equation} 
\begin{split} 
 n - \langle \psi_n,\hat  a_0^* \hat a_0 \psi_n \rangle =\langle \psi_n,U^*_n\cN_+ U_n \psi_n \rangle\leq C \langle \xi_n,\cN_+  \xi_n \rangle \leq \frac{2C}{\pi^2}(\zeta+\kappa n\ell^{-1})
\end{split}
\end{equation}
which implies \eqref{eq:conv-thm}.
\end{proof}

Corollary \ref{LHY} follows from Theorem \ref{mainTheorem}.
\begin{proof}[Proof of Corollary \ref{LHY}] 
Inequality \eqref{eq:gpEnergy} implies that for $n<\frac{c}{\k}\ell=:p$ (where $c$ is a small enough number) there exists a $C>0$ such that
% \begin{equation}\nonumber
% \begin{split}
%  H_{n,\ell}\geq  \frac{n^2}{\ell}4\pi\mathfrak{a}-\kappa C \frac{n}{\ell}
% \end{split}
% \end{equation}
% and therefore
\begin{equation}\label{eq:lboundnell}
\begin{split}
 E(n,\ell)\geq 4\pi\mathfrak{a}\Big[\frac{n^2}{\ell^3}-C\frac{n}{\ell^3}-C\frak a\frac{n^2}{\ell^4} \ln(\ell/\frak a)\Big].
\end{split}
\end{equation}
We need now a bound in the case $n\geq p$.
Following \cite{LY}, we observe that since $V$ is non-negative,
\[
 E(n+n',\ell)\geq E(n,\ell)+E(n',\ell),
\]
where we dropped the interactions between the $n$ particles and the $n'$ particles. It follows that 
\begin{equation}
\begin{split}
 E(n,\ell)\geq \left[\frac{n}{p}\right]E(p,\ell)\geq \frac{n}{2p}E(p,\ell)
\end{split}
\end{equation}
where $\left[\frac{n}{p}\right]$ is the largest integer smaller than $\frac{n}{p}$. We use the latter estimate for $n\geq p$.
Calling $c_n$ the relative number of cells containing $n$ particles, we have that
\begin{equation}
\begin{split}
 \frac{E(N,L)}{N}\geq & \frac{4\pi\mathfrak{a}}{\rho\ell^6} \inf \Big\{\sum_{n<p}c_n\Big(n^2-Cn-C\frak a\frac{n^2}{\ell} \ln(\ell/\frak a)\Big)\\
 &+\frac{1}{2}\sum_{n\geq p}c_nn\Big(p-C-C\frak a\frac{p}{\ell} \ln(\ell/\frak a)\Big)\Big\}
\end{split}
\end{equation}
Defining $A=1-C\frak a\frac{ \ln(\ell/\frak a)}{\ell}$, we need therefore to minimize
\begin{equation}
\begin{split}
\sum_{n<p}c_n\big(n^2A-nC\big)+\frac{1}{2}\sum_{n\geq p}c_nn\big(pA-C\big)
\end{split}
\end{equation}
with the constraints
\[
 \sum_{n\geq 0}c_n=1, \qquad \sum_{n\geq 0}c_nn=\rho\ell^3.
\]
We define the variable
\[
 t= \sum_{n<p}c_nn\leq \rho\ell^3;
\]
we have therefore, by Cauchy-Schwarz,
\begin{equation}\label{eq:min}
\begin{split}
\sum_{n<p}c_n\big(n^2A-nC\big)+\frac{1}{2}\sum_{n\geq p}c_nn\big(pA-C\big)\geq t^2A-tC+\frac{1}{2}(\rho\ell^3-t)(pA-C)
\end{split}
\end{equation}
which we minimize for $1\leq t\leq \rho\ell^3$. If $p$ is large enough, for example  $p\geq 4\rho\ell^3$ (note that this imposes that $\ell^2\geq c(4\kappa\rho)^{-1}$), we obtain that $t=\rho\ell^3$ and the minimum of \eqref{eq:min} is $(\rho\ell^3)^2A-\rho\ell^3C$. This means that
\begin{equation}
\begin{split}
%  \frac{E(N,L)}{N}\geq & \frac{4\pi\mathfrak{a}}{\rho\ell^6} \inf \Big\{\sum_{n<p}c_n\Big[n^2\Big(1-C\frac{\ln(\ell)}{\kappa\ell}\Big)-C\frac{n}{\kappa}\Big]+\frac{1}{2}\sum_{n\geq p}c_nn\Big[p\Big(1-C\frac{\ln(\ell)}{\kappa\ell}\Big)-\frac{C}{\kappa}\Big]\Big\}\\
  \frac{E(N,L)}{N}\geq & 4\pi\mathfrak{a}\rho\Big[1 -C\frak a\frac{\ln(\ell/\frak a)}{\ell}-\frac{C}{\rho\ell^3}\Big]
\end{split}
\end{equation}
We set $\ell=(c/4)^{1/2}(\kappa\rho)^{-1/2}$ and we obtain, for a new constant $C>0$,
\begin{equation}
\begin{split}
  \frac{E(N,L)}{N}\geq & 4\pi\mathfrak{a}\rho\Big[1 -C(\rho\frak a^3)^{1/2}\ln(\rho/\frak a)-C(\rho\frak a^3)^{1/2}\Big]
\end{split}
\end{equation}
\end{proof}

\appendix
\section{The two-body problem in the Neumann box} \label{functional}

This Appendix is devoted to proving Propositions \ref{prop:Ff} and \ref{prop:eta}. We  will use the following Lemma.

\begin{lemma}\label{lm:poisson}
 Let $\O=[-\ell/2,\ell/2]^6$ and let $\varepsilon$ be such that $0<\varepsilon\ell^{2}\leq1$. For $y\in \Omega$ let $G_\varepsilon(x,y)$ be the solution of
\begin{equation}\label{eq:gepsilon}
  \big(-\D_x+\varepsilon\big)G_\varepsilon(x,y)=\delta_y(x)
\end{equation}
on $\O$ with Neumann boundary conditions. There exists a constant $C>0$ (independent of $\varepsilon$ and $\ell$) such that
\begin{equation}\label{eq:EstGreen}
G_\varepsilon(x,y)\leq C\Big(\frac{1}{|x-y|^4}+\frac{1}{\ell^6\varepsilon}\Big)
\end{equation}
for every $x,y\in \O$. 
Moreover, let $\tilde G_\varepsilon$ be the unique solution of 
 \begin{equation}\label{eq:GreenR6def}
 \big(-\D_x+\varepsilon\big)\tilde G_\varepsilon(x-y)=\delta_y(x)
 \end{equation}
 on $\mathbb{R}^6$ decaying at infinity. Then
 there exists  a constant $C>0$ such that for $1\leq i\leq 6$
 \begin{equation}\label{eq:DGDG}
  |\partial_{x^{(i)}}G_\varepsilon(x,y)-\partial_{x^{(i)}}\tilde G_\varepsilon(x-y)| \leq  C\left[\sum_{n}\frac{1}{|x-y_n|^5}+\frac{1}{\varepsilon^{1/2}\ell^6}\right]
 \end{equation}
 where the $y_n$ are the (at most $3^6-1$) points obtained by reflecting $y, y_n$ with respect to the planes generated by the sides of the box, whose distance from $y$ is less than $\ell$ (each reflected point is counted only once, and among the $y_n$ we don't include $y$ itself).
\end{lemma}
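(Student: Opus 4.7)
The plan is to construct $G_\eps$ explicitly by the method of images, which is naturally suited to Neumann boundary conditions. Starting from $y\in\O$ and iterating reflections across each of the six pairs of faces of $\O=[-\ell/2,\ell/2]^6$, one obtains an image lattice $\{y_m\}_{m\in\bZ^6}$ given coordinatewise by $y_m^{(i)}=m^{(i)}\ell+(-1)^{m^{(i)}}y^{(i)}$. The candidate formula is
$G_\eps(x,y)=\sum_{m\in\bZ^6}\tilde G_\eps(x-y_m)$,
which one verifies satisfies $(-\Delta_x+\eps)G_\eps(\cdot,y)=\delta_y$ in $\O$ (only $y_0=y$ lies in $\O$ when $y$ is interior) and has vanishing normal derivative on each face (each image pairs with its reflection across that face, so the normal derivatives cancel in the sum). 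Absolute convergence for $\eps>0$ follows from the exponential decay of $\tilde G_\eps$, which I will use in the form $\tilde G_\eps(z)\leq C|z|^{-4} e^{-c\sqrt{\eps}\,|z|}$ and $|\nabla \tilde G_\eps(z)|\leq C|z|^{-5} e^{-c\sqrt{\eps}\,|z|}$; both follow from the explicit representation of $\tilde G_\eps$ via the modified Bessel function $K_2$.

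For the first bound on $G_\eps$, I split the image sum into three parts according to $|m|_\infty$: the central image $m=0$, the first shell $|m|_\infty=1$ (at most $3^6-1$ points), and the far images $|m|_\infty\geq 2$. The $m=0$ term yields the $|x-y|^{-4}$ contribution directly. A coordinatewise argument using only $x^{(i)},y^{(i)}\in[-\ell/2,\ell/2]$ shows that $|x^{(i)}-y_m^{(i)}|\geq|x^{(i)}-y^{(i)}|$ for every $m$, hence $|x-y_m|\geq|x-y|$; each of the first-shell terms is therefore bounded by $C|x-y|^{-4}$, and the whole shell is absorbed into the first claimed term. For $|m|_\infty\geq 2$ one has $|x-y_m|\geq c(|m|_\infty-1)\ell$, and since $\#\{|m|_\infty=k\}=\cO(k^5)$ the tail is controlled by $C\ell^{-4}\sum_{k\geq 2} k\,e^{-c\sqrt{\eps}\, k\ell}$. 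A short case analysis, splitting on whether $\sqrt{\eps}\,\ell$ is less than or greater than $1$ and invoking the hypothesis $\eps\ell^2\leq 1$, bounds this by $C(\eps\ell^6)^{-1}$.

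The derivative bound follows from the analogous decomposition $\partial_{x^{(i)}} G_\eps(x,y)-\partial_{x^{(i)}}\tilde G_\eps(x-y)=\sum_{m\neq 0}\partial_{x^{(i)}}\tilde G_\eps(x-y_m)$. Using $|\nabla\tilde G_\eps(z)|\leq C|z|^{-5}$ for the first-shell terms produces the explicit sum $C\sum_n |x-y_n|^{-5}$; the far images contribute $C\ell^{-5}\sum_{k\geq 2} e^{-c\sqrt{\eps}\, k\ell}$, which is bounded by $C(\sqrt{\eps}\,\ell^6)^{-1}$ by the same case split. The main technical obstacle is balancing these two scales correctly: the monotonicity $|x-y_m|\geq|x-y|$ is what prevents first-shell images from dominating the bound (they can cluster near $x$ when both $x$ and $y$ lie near the same face of $\O$, so an \emph{a priori} triangle-inequality bound is too weak), and for the far sum one must carefully match the combinatorial growth $k^5$ of the shells against the mild decay of $\tilde G_\eps$, where the assumption $\eps\ell^2\leq 1$ is exactly what ties the polynomial scale $\ell^{-4}$ (respectively $\ell^{-5}$) to the zero-mode scale $(\eps\ell^6)^{-1}$ (respectively $(\sqrt{\eps}\,\ell^6)^{-1}$).
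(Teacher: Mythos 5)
Your proposal is correct and follows essentially the same route as the paper: method of images with the lattice $y_m^{(i)}=m^{(i)}\ell+(-1)^{m^{(i)}}y^{(i)}$, the uniform decay estimate $\tilde G_\varepsilon(z)\leq C|z|^{-4}e^{-c\sqrt{\varepsilon}|z|}$ (and its gradient analogue) from the $K_2$ representation, the coordinatewise monotonicity $|x-y_m|\geq|x-y|$ to control the near images, and an exponential-sum estimate for the far images. The only cosmetic difference is that you organize the far sum by $|m|_\infty$-shells (with the count $\mathcal{O}(k^5)$) where the paper compares directly with an integral over $\mathbb{R}^6\setminus B_1(0)$; both give the same $(\varepsilon\ell^6)^{-1}$ and $(\sqrt\varepsilon\,\ell^6)^{-1}$ tails, and your "case split on $\sqrt\varepsilon\,\ell$" is in fact vacuous under the standing hypothesis $\varepsilon\ell^2\leq 1$.
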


\begin{proof}
 The solution $\tilde G_\varepsilon$ to \eqref{eq:GreenR6def}
% \[
 %\big(-\D_x+\varepsilon\big)\tilde G_\varepsilon(x-y)=\delta_y(x)
%\]
%on $\mathbb{R}^6$ 
can be expressed as
\begin{equation}\label{eq:GreenR6}
 \tilde G_\varepsilon(x)=\frac{\varepsilon}{2^3\pi^{3}}\frac{\text{K}_2(\sqrt{\varepsilon}|x|)}{|x|^{2}}
\end{equation}
where $\text{K}_2$ is the modified Bessel function of the third kind of order 2 (see \cite{AS}). From the properties of $\text{K}_2$ we deduce that for large $\varepsilon^{1/2}|x|$
\begin{equation}\label{eq:GreenLarge}
  \tilde G_\varepsilon(x)=\frac{\varepsilon^{3/4} e^{-\sqrt{\varepsilon}|x|}}{|x|^{2+1/2}}\Big(1+\mathcal{O}((\sqrt{\varepsilon}|x|)^{-1})\Big)
\end{equation}
while for small $\varepsilon^{1/2}|x|$ there exists a constant $C_1>0$ such that
\begin{equation}\label{eq:GreenSmall}
  \tilde G_\varepsilon(x)=\frac{C_1}{|x|^4}+\mathcal{O}(\varepsilon |x|^{-2})
\end{equation}
We obtain the Green function $G_\varepsilon$ on $\O$ with Neumann boundary conditions as follows. For $x,y\in\O$,
\begin{equation}\label{eq:GreenNeumann}
 G_\varepsilon(x,y)=\tilde G_\varepsilon(x-y)+\sum_{n\in\mathbb{Z}^6\backslash\{0\}} \tilde G_\varepsilon(x-y_n)
\end{equation}
where the positions $y_n$ are all possible reflections (each counted only once) of $y$ and $y_n$ with respect to the infinite planes obtained by extending the sides of the box $\O$ and their periodic replicas over all $\mathbb{R}^6$. This operation gives rise to a grid, and each six-dimensional cell contains one and only one $y_n$ (therefore the label $n\in\mathbb{Z}^6\backslash\{0\}$ also identifies the cell where $y_n$ belongs). The positions $y_n$ can be thought as positions of image charges, whose contributions cancels the normal derivative of $G_\varepsilon$ on $\partial\O$. Given a point $y=(y^{(1)},\dots,y^{(6)})\in\O$, the coordinates of its image charges are, for $j=1, \dots, 6$,
\[
 y^{(j)}_n=n^{(j)}\ell+(-1)^{n^{(j)}}y^{(j)}.
\]

% In fact if $\nu_i$ is the normalized vector orthogonal to $x_i=\ell/2$ pointing outside the cube, for any $y\in\O$ and for $1\leq n\leq 3^6-1$,
% \[
%   \frac{\partial G_\varepsilon}{\partial \nu_i}(x,y)=0, \qquad\text{for}\,\, x\in \partial\O
% \]
% for symmetry reasons.
% Moreover, for $x,y\in\O$,
% \begin{equation}
%  (-\D+\varepsilon)G_\varepsilon(x,y)=\delta_y(x)+\sum_{n\in\mathbb{Z}^6\backslash\{0\}} \delta_{y_n}(x)=\delta_y(x)
% \end{equation}
% since $y_n\in\mathbb{R}^6\backslash\O$.
In order to estimate \eqref{eq:GreenNeumann}, we deduce from \eqref{eq:GreenLarge} and \eqref{eq:GreenSmall}  that for any  $0<\lambda<1$ there exists a $C_\lambda>0$ such that
\begin{equation}\label{eq:GreenEstimate}
  \tilde G_\varepsilon(x)\leq\frac{C_\lambda e^{-\lambda\sqrt{\varepsilon}|x|}}{|x|^{4}}.
\end{equation}
Using the estimate above, for the charges that are such that $|x-y_n|\geq \ell$  we bound the contribution in the second term on the right-hand side of \eqref{eq:GreenNeumann} as
%(note that $\varepsilon^{1/2}|x-y_n|>\varepsilon^{1/2}\ell>1$ for $\varepsilon\geq\ell^{-2}$).   
\[
 \left|\sum_{n\in\mathbb{Z}^6\backslash\{0\}} \frac{e^{-\lambda\sqrt{\varepsilon}|x-y_n|}}{|x-y_n|^{4}}\right|
 %\leq C\sum_{n\in\mathbb{Z}^6\backslash\{0\}} \frac{\varepsilon^{3/4} e^{-\sqrt{\varepsilon}|n|\ell}}{(|n|\ell)^{2+1/2}}
 \leq \frac{C_\lambda}{\ell^{4}} \sum_{n\in\mathbb{Z}^6\backslash\{0\}} \frac{ e^{-\lambda\sqrt{\varepsilon}|n|\ell}}{|n|^{4}}
\]
We estimate the sum with an integral (this can be done since the summand is a continuous decreasing function of $n$ on $\mathbb{R}^6\backslash B_1(0)$, where $B_1(0)$ is the ball or radius one centered in zero), so that
\[
 \sum_{n\in\mathbb{Z}^6\backslash\{0\}} \frac{ e^{-\lambda\sqrt{\varepsilon}|n|\ell}}{|n|^{4}}\leq \int_{\mathbb{R}^6\backslash B_1(0)} dn  \frac{ e^{-\lambda\sqrt{\varepsilon}n\ell}}{|n|^{4}}=\frac{1}{(\lambda\sqrt{\varepsilon}\ell)^{2}}\int_{\mathbb{R}^6\backslash B_1(0)} dn  \frac{ e^{-|n|}}{|n|^{4}}
\]
and therefore
 \begin{equation}\label{eq:sumCharges}
 \left|\sum_{n\in\mathbb{Z}^6\backslash\{0\}} \frac{ e^{-\lambda\sqrt{\varepsilon}|x-y_n|}}{|x-y_n|^{4}}\right|\leq \frac{C_\lambda}{\varepsilon\ell^6}
 \end{equation}
Only a finite number of $y_n$ are such that  $|x-y_n|<\ell$, and for those we bound $|x-y|\leq|x-y_n|$. We thus obtain  \eqref{eq:EstGreen}.

 We consider now $\partial_{x_i}\tilde G_\varepsilon(x)$, given by 
 %. We call $|x|=z$, and we use
 %\begin{equation}
 % \frac{1}{z}\frac{d}{dz}\left(\frac{\text{K}_2(z)}{z^2}\right)=-\frac{\text{K}_3(z)}{z^3}
 %\end{equation}
% (we refer to  \cite[Chapter 3]{AS} for  properties of the Bessel function of the third kind). Therefore
 \begin{equation}
 \begin{split}
  \partial_{x^{(i)}}\tilde G_\varepsilon(x)&%=\frac{d}{dz}\tilde G_\varepsilon(z)\frac{\partial z}{\partial{x^{(i)}}}
% %   \\
% %   &=\frac{\varepsilon^2}{2^3\pi^{3}}\frac{x_i}{z}\frac{d}{dz}\left(\frac{\text{K}_2(\sqrt{\varepsilon}|x|)}{\varepsilon|x|^{2}}\right)\\
% %   &=\frac{\varepsilon^3}{2^3\pi^{3}}\frac{x_i}{z'}\frac{d}{dz'}\left(\frac{\text{K}_2(z')}{(z')^{2}}\right)\\
% %   &=-x_i\frac{\varepsilon^3}{2^3\pi^{3}}\frac{\text{K}_3(z')}{(z')^3}\\
% % &
   =-x_i\frac{\varepsilon^{3/2}}{2^3\pi^{3}}\frac{\text{K}_3(\sqrt{\varepsilon}|x|)}{|x|^3}\\
 \end{split}
 \end{equation}
 (see  \cite[Chapter 3]{AS} for  properties of the Bessel function of the third kind). 
 For large $\varepsilon^{1/2}|x|$, 
 \begin{equation}
  \begin{split}
   \partial_{x^{(i)}}\tilde G_\varepsilon(x)  &\simeq
   %-x_i\frac{\varepsilon^3}{2^3\pi^{3}}\frac{1}{\varepsilon^{3/2}|x|^3}\left(\frac{\pi}{2\varepsilon^{1/2}|x|}\right)^{1/2}e^{-\sqrt{\varepsilon}|x|}=
   Cx_i\frac{\varepsilon^{5/4} }{|x|^{3+1/2}}e^{-\sqrt{\varepsilon}|x|}\\
  \end{split}
 \end{equation}
 For small $\varepsilon^{1/2}|x|$,
 \begin{equation}\label{eq:sd}
  \begin{split}
  \partial_{x^{(i)}}\tilde G_\varepsilon(x) &\simeq C \frac{x_i}{|x|^6}\\
  \end{split}
 \end{equation}
 The two equations above imply that for any $0<\lambda<1$ there exists a $C_\lambda>0$ such that 
  \begin{equation}
  \begin{split}
 |  \partial_{x^{(i)}}\tilde G_\varepsilon(x) |&\leq  \frac{C_\lambda}{|x|^5}e^{-\lambda\sqrt{\varepsilon}|x|}.
  \end{split}
 \end{equation}
Similarly as above, we sum the contribution from charges such that $|x-y_n|>\ell$, so that
 \begin{equation}
  \begin{split}
 \Big|\sum_{|x-y_n|>\ell}\partial_{x^{(i)}} \tilde G_\varepsilon(x-y_n)\Big|&\leq C_\lambda\sum_{n\in\mathbb{Z}^6\backslash\{0\}}\frac{1}{|x-y_n|^{5}}e^{-\lambda\sqrt{\varepsilon}|x-y_n|}
 \leq\frac{C_\lambda}{\varepsilon^{1/2}\ell^6}
    \end{split}
\end{equation}
% In the case $|x-y_n|<\ell$ we use \eqref{eq:sd}.
 Therefore
%  \begin{equation}\label{eq:estDerG}
%   \begin{split}
%    |\partial_{x_i} G_\varepsilon(x,y)-\partial_{x_i} \tilde G_\varepsilon(x-y)| \leq  C\sum_{n}\frac{|(x-y_n)_i|}{|x-y_n|^6}+\frac{C}{\varepsilon^{1/2}\ell^6}
%   \end{split}
%  \end{equation}
 \begin{equation}\label{eq:estDerG}
  \begin{split}
   |\partial_{x^{(i)}} G_\varepsilon(x,y)-\partial_{x^{(i)}} \tilde G_\varepsilon(x-y)| \leq  \sum_{n\neq 0, \, |x-y_n|<\ell}\frac{C}{|x-y_n|^5}+\frac{C}{\varepsilon^{1/2}\ell^6}
  \end{split}
 \end{equation}
 for a constant $C>0$. 
 %The sum over $n$ in \eqref{eq:estDerG} runs only over those $n$ such that $|x-y_n|<\ell$.
 %, since for the image charges $y_m$ such that $|x-y_m|\leq  \tilde c^{-1}\ell$ we can bound $|x-y_m|>|x-y_n|$.
% \textcolor{blue}{Notice here that due to the five image charges close to $y$ I also get the first contribution on the right hand side of \eqref{eq:estDerG}, which complicates the proof of \eqref{eq:PWder}.}
\end{proof}

%We now prove Proposition  \ref{prop:Ff}.
\begin{proof}[Proof of Proposition \ref{prop:Ff}.] Existence and uniqueness of minimizers can be proved by standard methods. We start by proving \eqref{eq:slength}.
% We consider the trial function $g(x,y)= f_0(x-y)$, with $f_0(x)$ defined as the solution of the zero energy scattering equation on $\mathbb{R}^3$ introduced in equation \eqref{eq:0en}.
% , and  
% $c_\ell$ is a normalizing factor so that 
% \[
%  |c_\ell|^2\int_{\L_\ell\times\L_\ell} dxdy\, |f_0( x, y)|^2=1,
% \]
% Since (see \cite[Lemma 5.1]{ESY2} for a reference)
% \[
% \left(1- \frac{\frak{a}}{|x|}\right)\leq f_0(x)\leq 1,
% \]
% we have the lower bound
% \[
% \|g\|_2=\int_{\L_\ell\times\L_\ell} dxdy\, |g( x, y)|^2\geq \int_{\L_\ell\times\L_\ell} dxdy\,\left(1-\frac{C\frak{a}}{|x-y|}\right)^2_+= \ell^6-C \frak{a} \ell^5.
% \]
% In the following, I shall assume for simplicity that $V$ is bounded and of rapid decay. 
Let $f_0$ be the zero-energy scattering solution defined in \eqref{eq:0en}, and $f(x_1,x_2) = f_0(x_1-x_2)$ for $x_1,\, x_2\in\L_\ell$. We write $\psi = f g$ and integrate by parts. 
% With the notation $\D_x=\D_{x_1}+\D_{x_2}$ the scattering equation \eqref{eq:0en} takes the form
% $$
%  \big(-\Delta_{x}+ \kappa V(x_1-x_2)\big) f(x_1,x_2)  = 0
% $$
Calling $\L_\ell\times\L_\ell=\O$ and writing $\nabla$ for $\nabla_x$, with $x=(x_1,x_2)$, we have
$$
\int_{\Omega} \left( |\nabla \psi |^2 + \k V |\psi|^2 \right) = \int_{\Omega}  f^2 |\nabla g |^2 + \int_{\partial \Omega}  g^2 f \hat n \cdot \nabla f  
$$
where $\hat n$ is the unit outward normal vector, and we use the shorthand notation $V(x) = V(x_1-x_2)$ for simplicity. Note that $\hat n\cdot \nabla f > 0$ since $f_0$ is an increasing function. By assumption $V$ is regular enough such that $f_0 \geq c_0 > 0$ (see \cite[Lemma 5.1]{ESY2} for  properties of the zero energy scattering equation). Let us write $\tau =  \delta_{\partial \Omega} f \hat n \cdot \nabla f $, so that the second term is simply $\int g^2 \tau$. We thus have
\begin{equation}\label{lala}
\int_{\Omega} \left( |\nabla \psi |^2 + \k V |\psi|^2 \right) \geq c_0^2  \int_{\Omega}  |\nabla g |^2 + \int_{\Omega}  g^2 \tau
\end{equation}
Let us look for the lowest eigenvalue of the right-hand side, i.e., the largest $\lambda$ such that 
$$
c_0^2  \int_{\Omega}  |\nabla g |^2 + \int_{\Omega}  g^2 \tau \geq \lambda \int_{\Omega} g^2
$$
Since $f \leq 1$, this is also a lower bound to the eigenvalue we are looking for, i.e.,
$$
\int_{\Omega} \left( |\nabla \psi |^2 + \k V |\psi|^2 \right) \geq \lambda  \int_{\Omega}  g^2 \geq \lambda \int_{\Omega} \psi^2
$$
Clearly  $\lambda \leq \ell^{-6} \int \tau$. Using that $f\leq 1$ we have
$$
\int\tau \leq \int_{\partial \Omega} \hat n \cdot \nabla f =  \int_{\Omega} \Delta f \leq 2 \int_{\L_\ell\times \mathbf{R}^3}dx_1dx_2 (\Delta f_0)(x_1-x_2)  = 8\pi \frak a \ell^3
$$
We may assume that $g$ shares the symmetries of $\Omega$, in which case 
\begin{equation}\nonumber
\begin{split}
 &\int_{\Omega} g^2 \tau = 12 \int_{\Omega} g^2 \tau_1 \\
 &= 12 \int_{\L_\ell} dx_2 \int_{[-\ell/2,\ell/2]^2} dx_1^\perp  g(-\ell/2,x_1^\perp,x_2)^2 f(-\ell/2,x_1^\perp, x_2) \frac{\big(-\ell/2-x_2^{(1)}\big)}{|(-\ell/2,x_1^\perp) - x_2|} f_0'( (-\ell/2,x_1^\perp) - x_2)
\end{split}
\end{equation}
where we write the vector $x_j$ as $(x_j^{(1)}, x_j^\perp)$, and denote the radial derivative of $f_0$ by $f_0'$. 
Using the Schur complement formula, we have, with $Q$ the projection orthogonal to the constant function on $\Omega$,
$$
\lambda \geq \ell^{-6} \int \tau - \frac{12^2}{\ell^6} \langle \tau_1,  Q[ Q(-c_0^2\Delta - 8\pi \frak a \ell^{-3}) Q ]^{-1} Q\,\tau_1\rangle.
$$
Since the spectral gap of $-\Delta$ equals $(\pi/\ell)^2$ we can further bound  
$$
Q(-c_0^2\Delta - 8\pi a \ell^{-3}) Q  \geq \frac{c_0^2}2  Q \left(-\Delta + \ell^{-2} \right)  Q\geq \frac{c_0^2}2  Q \left(-\Delta_{x_1} + \ell^{-2} \right)  Q
$$
as long as $c_0^2(\pi^2-1)/2  \geq 8\pi \frak a/\ell$, which we assume henceforth. In particular,
$$
Q\left[ Q(-c_0^2\Delta - 8\pi a \ell^{-3}) Q  \right]^{-1}Q \leq \frac 2{c_0^2}  Q \left[-\Delta_{x_1} + \ell^{-2} \right]^{-1}  Q = \frac 2{c_0^2}  \left[-\Delta_{x_1} + \ell^{-2} \right]^{-1}  - \frac {2\ell^2} {c_0^2}  P  
$$
with $P=1-Q$ the projection onto the constant function. 
Observing that $-2\ell^2c_0^{-2}P$ can be dropped for an upper bound, we thus have
$$
\lambda \geq \ell^{-6} \int \tau   
  - \frac{2 \cdot 12^2}{c_0^2\ell^6} \langle  \tau_1 ,  [ -\Delta_{x_1} + 1/\ell^2 ]^{-1}   \tau_1\rangle  
$$
An analysis similar to Lemma~\ref{lm:poisson} shows that the integral kernel of $ [ -\Delta_{x_1} + \ell^{-2} ]^{-1}$ on $[-\ell/2,\ell/2]^3$ is bounded above by $c_1 |x_1-y_1|^{-1}$, hence
$$
\langle  \tau_1 ,  [ -\Delta_{x_1} + 1/\ell^2 ]^{-1}   \tau_1\rangle   \leq 
c_1 \int_{\L_\ell^{3}}dx_1 dy_1 dx_2  \frac{ \tau_1(x_1,x_2) \tau_1(y_1, x_2)  }{|x_1-y_1|} 
$$
Using that $f \leq 1$ as well as $f_0'(x_1) \leq \frak a/|x_1|^2$, we have, for fixed $x_2$, 
\begin{align*}
& \int_{\L_\ell^{2}} dx_1 dy_1 \frac{ \tau_1(x_1,x_2) \tau_1(y_1, x_2) }{|x_1-y_1|} \\ 
&\leq \left(\frak a \big(\ell/2+x_2^{(1)}\big)\right)^2 \int_{[-\ell/2,\ell/2]^4} dx_1^\perp dy_1^\perp \frac{ 1}{|x_1^\perp - y_1^\perp|}  \frac 1{| (-\ell/2,x_1^\perp) - x_2 |^3} \frac 1{| (-\ell/2,y_1^\perp) - x_2 |^3} \\ 
& \leq \frac{\frak a^2}{\ell/2+x_2^{(1)}}   \int_{\R^4} dx_1^\perp dy_1^\perp \frac{ 1}{|x_1^\perp - y_1^\perp|}  \frac 1{ ( 1+ (x_1^\perp)^2 )^{3/2} } \frac 1{ ( 1 + (y_1^\perp)^2 )^{3/2}}
\end{align*}
where $\ell/2+x_2^{(1)}$ has been scaled out after extending the integral to $\mathbb{R}^4$.  The final integral is finite by the Hardy-Littlewood-Sobolev inequality.  
In order to obtain a better bound for   $x_2^{(1)}$ close to $-\ell/2$, we use  in addition that $f_0'$ is bounded, and hence that  $f_0'(x) \leq c_2 \frak{a}^{1/2} / |x|^{3/2}$ for some $c_2>0$. Thus
\begin{align*}
& \int_{\L_\ell^{2}} dx_1 dy_1 \frac{ \tau_1(x_1,x_2) \tau_1(y_1, x_2) }{|x_1-y_1|} \\
&\leq \frak{a}\,  c_2^2 (\ell/2+x_2^{(1)} )^2 \int_{[-\ell/2,\ell/2]^4} dx_1^\perp  dy_1^\perp \frac{ 1}{|x_1^\perp - y_1^\perp|}  \frac 1{| (-\ell/2,x_1^\perp) - x_2 |^{5/2}} \frac 1{| (-\ell/2,y_1^\perp) - x_2 |^{5/2}} \\
& \leq \frak{a} \,c_2^2    \int_{\R^4} dx_1^\perp dy_1^\perp \frac{ 1}{|x_1^\perp - y_1^\perp|}  \frac 1{ ( 1+ (x_1^\perp)^2 )^{5/4} } \frac 1{ ( 1 + (y_1^\perp)^2 )^{5/4}}
\end{align*}
where the integral is again finite by the Hardy-Littlewood-Sobolev inequality. 

Altogether, we have thus shown that
$$
 \int_{\O} dx_1 dy_1 \frac{ \tau_1(x_1,x_2) \tau_1(y_1, x_2) }{|x_1-y_1|} \leq C \frak a^2  \min\left\{ \frac {1}{\ell/2+x_2^{(1)}} , \frac{1}{\frak{a}} \right\} 
$$
Integrating this over $x_2$ yields
$$
\int_{\L_\ell} dx_2 \int_{\O} dx_1 dy_1 \frac{ \tau_1(x_1,x_2) \tau_1(y_1, x_2) }{|x_1-y_1|} \leq C \frak a^2 \ell^2 \ln \frac \ell {\frak a}
$$
%Moreover, using again that $f \leq 1$ and $f_0'(x_1) \leq \frak a/|x_1|^2$,
%$$
%\ell^{-4}\int_{\L_\ell^{3}}dx_1 dy_1 dx_2 \tau_1(x_1,x_2) \tau_1(y_1, x_2)\leq \frak{a}^2\ell^{-1}
%$$
and thus
$$
\lambda \geq \ell^{-6} \int \tau - C \frac {\frak a^2}{\ell^4} \ln \frac \ell {\frak a}
$$
To complete the lower bound on $\lambda$, we need a lower bound on $\int \tau$. We have
\begin{equation}\label{eq:ub}
\int\tau = \frac 12 \int_{\partial \Omega} \hat n \cdot \nabla f^2 = \frac 12 \int_{\Omega} \Delta f^2 = \int_{\Omega} ( |\nabla f|^2 + \kappa V f^2 ) = 8\pi\frak a \ell^3 - \int_{\L_\ell} dx_1 \int_{\L_\ell^c}  ( |\nabla f|^2 + \kappa V f^2 )
\end{equation}
Using that $\Delta f^2 = 2 ( |\nabla f|^2 + \kappa V f^2 )  \leq C  \min\{ \frak a^{-2} ,\frak a^2 /|x_1-x_2|^4\}$, the error term is bounded by
$$
 C \frak a^2 \int_{\L_\ell} dx_1  \min\{ (\ell/2 + x_1^1)^{-1} , 1 /\frak a \}  = C\frak a^2 \ell^2 \ln \frac {\ell}{\frak a}
$$
We thus conclude that 
$$
\lambda \geq 8\pi\frak a \ell^3 - C\frak a^2 \ell^2 \ln \frac {\ell}{\frak a}
$$
and from \eqref{lala}
\begin{equation}\label{lll|a}
\int_{\Omega} \left( |\nabla \psi |^2 + \k V |\psi|^2 \right) \geq  \lambda \int_{\Omega}  |g|^2 \geq \lambda \int_\Omega |\psi|^2 
\end{equation}
since $f_0 \leq 1$. In particular, $\lambda_\ell \geq \lambda$, and this concludes the lower bound.
The upper bound follows by taking the trial function $\psi=f$ corresponding to $g=1$ and using again \eqref{eq:ub} together with
\[
\|f\|^2_2\geq  \ell^6-C \frak{a} \ell^5,
\]
where the latter follows from \eqref{eq:0enSol}. This completes the proof of \eqref{eq:slength}.

The estimate \eqref{eq:der} in point $i)$ clearly follows  from \eqref{eq:slength}.
We proceed with point $ii)$. 
% We denote $\O=\L_\ell\times\L_\ell$. 
% For $g_1,g_2\in L^2(\O)$, we denote with $\langle g_1,g_2\rangle_{\L_\ell}$ the scalar product
% \[
%  \langle g_1,g_2\rangle_{\O}=\int_{\O} dx\,\bar g_1(x)g_2(x)
% \]
The minimizer satisfies the eigenvalue equation on $\O$ with Neumann boundary conditions
  \begin{equation}\label{eq:ELeq} 
\begin{split}
\Big[-\Delta_x+\kappa V(x)\Big]f( x)=\lambda_\ell f( x),
\end{split} 
\end{equation}
with $\l_\ell=8\pi \mathfrak{a}\ell^{-3}\big(1+\mathcal{O}(\frak{a}\ell^{-1}\ln (\ell/\mathfrak{a}))\big)$. As before $x=(x_1,x_2)\in\L_\ell\times\L_\ell=\O$ and $\D_x=\D_{x_1}+\D_{x_2}$. Abusing notation we wrote   $V(x)=V(x_1-x_2)$. It is useful to introduce a parameter $0<\varepsilon\leq\ell^{-2}$ and write \eqref{eq:ELeq} as
\begin{equation}\label{eq:ELeqEps} 
\begin{split}
\big(-\Delta_x+\varepsilon\big)f( x)=\big(\lambda_\ell+\varepsilon-\kappa V(x)\big) f( x).
\end{split} 
\end{equation}
We can express the solution to \eqref{eq:ELeqEps} as
\[
 f(x)=\int_\O dy\,G_\varepsilon(x,y)\big(\lambda_\ell+\varepsilon-\kappa V(y)\big) f( y)
\]
with $G_\varepsilon(x,y)$ defined in \eqref{eq:gepsilon}. Lemma \ref{lm:poisson} and the positivity of the minimizer $f$, of $G_\varepsilon(x,y)$ and of the potential $V$ imply that
\begin{equation}\label{eq:InfNorm}
  f(x)\leq C(\lambda_\ell+\varepsilon)\int_\O dy\,\frac{ f( y)}{|x-y|^4}+\frac{C(\lambda_\ell+\varepsilon)}{\ell^6\varepsilon} \int_\O dy\,f( y)
\end{equation}
The last term can be bounded as
\[
\int_\O dy\,f( y)\leq \|f\|_2\|\chi_\O\|_2= \ell^3 %\|f\|_2
\]
We split the first integral in \eqref{eq:InfNorm} as
\begin{equation}
\int_\O dy\,\frac{ f( y)}{|x-y|^4}=\int_{\O\cap B_\delta(x)} dy\,\frac{ f( y)}{|x-y|^4}+\int_{\O\backslash B_\d(x)} dy\,\frac{ f( y)}{|x-y|^4}
\end{equation}
for $0<\delta\leq\ell$ and $B_\d(x)=\{y\in\mathbb{R}^6:|x-y|\leq\delta\}$.
We have
\begin{equation}
\int_{B_\d(x)} dy\,\frac{ f( y)}{|x-y|^4}\leq C\delta^2\|f\|_\infty
\end{equation}
and 
\begin{equation}
\int_{\O\backslash B_\d(x)} dy\,\frac{ f( y)}{|x-y|^4} %\chi_\O(y)
\leq  \|f\|_2\left(\int_{\mathbb{R}^6\backslash B_\d(x)} dy\,\frac{ 1}{|x-y|^8}\right)^{1/2}= \frac{C}{\delta}% \|f\|_2
\end{equation}
Hence
\begin{equation}
\|f\|_\infty\leq C(\lambda_\ell+\varepsilon)\Big[\delta^2\|f\|_\infty+\frac {1}{\delta}+ \frac{1}{\ell^3\varepsilon}\Big].
\end{equation}
We set $\varepsilon=\ell^{-2}$ and $\delta^2=\big(2C(\l_\ell+\varepsilon)\big)^{-1}$, so that
$
\|f\|_\infty\leq C'\ell^{-3} %\|f\|_2
$, proving \eqref{eq:sup}.
%Since the minimizer is normalized to one we obtain \eqref{eq:sup}.

%We examine now point $iii)$. 
In order to prove \eqref{eq:norm} in point $iii)$, we decompose $f$ as $f=c+g$, with $\int_\O g=0$ and  $c=\ell^{-6}\int f$. We shall show that
\begin{equation}\label{eq:normg}
 \|g\|_2\leq C\kappa\ell^{-1}
\end{equation}
for a constant $C>0$.
Since
\[
 \|f-1/\ell^3\|_2^2\leq 2\|f-c\|_2^2+2\|c-1/\ell^3\|_2^2=2\|g\|_2^2+2|\ell^3c-1|^2
\]
and, since
\[
\|f-c\|_2^2=1-c^2\ell^6\geq|1-c\ell^3|^2,
\]
we have
\[
 \|f-1/\ell^3\|_2^2\leq4\|g\|_2^2
\]
Hence \eqref{eq:norm} follows from \eqref{eq:normg}.
To prove  \eqref{eq:normg} we write equation \eqref{eq:ELeqEps} as
\begin{equation}
\begin{split}
\big(-\Delta_x+\varepsilon\big)g( x)=\big(\lambda_\ell-\kappa V(x)\big) f( x)+\varepsilon g( x)
\end{split} 
\end{equation}
for some $0<\varepsilon\leq \ell^{-2}$. 
We have
\begin{equation}\label{eq:L2normGreen}
 g(x)=\int_\O dy\,G_\varepsilon(x,y)\big(\lambda_\ell-\kappa V(y)\big) f( y)+\varepsilon\int_\O dy\,G_\varepsilon(x,y)g( y)
\end{equation}
By Lemma \ref{lm:poisson} and the Hardy-Littlewood-Sobolev  and H\"older inequalities we have
\begin{equation}\label{eq:norm2first}
\begin{split}
  \Big\|\lambda_\ell\int_\O dy\,G_\varepsilon(\,\cdot\,,y) f( y)\Big\|_2&\leq C\lambda_\ell \Big\|\int_\O dy\,\Big(\frac{1}{|\,\cdot\,-y|^4}+\frac{1}{\ell^6\eps}\Big) f( y)\Big\|_2\\
  &\leq C\lambda_\ell \|f\|_{6/5}+C\frac{\lambda_\ell}{\ell^3\eps}\|f\|_1 \leq \frac{C\k}{\ell^3\eps}
%  &\leq \frac{C\k}{\ell}\|f\|_2+\frac{C\k}{\ell^3\eps}\|f\|_2
\end{split}
\end{equation}
% Since $\varepsilon>(4/9)\ell^{-2}$ and $\|f\|_2=1$, we have
% \begin{equation}\label{eq:norm2first}
% \begin{split}
%   \Big\|\lambda_\ell\int_\O dy\,G_\varepsilon(x,y) f( y)\Big\|_2&\leq C\ell^{-1}.
% \end{split}
% \end{equation}
To bound the contribution  proportional to $V$ in \eqref{eq:L2normGreen}, we use \eqref{eq:sup} and  estimate
\begin{equation}\nonumber
\begin{split}
\int_\O dy\,\Big(\frac{1}{|x-y|^4}+\frac{1}{\ell^6\eps}\Big)V(y) f( y)&\leq \frac{C}{\ell^3}\int_\O dy\,\frac{1}{|x-y|^4}V(y) + \frac{C}{\ell^9\eps}\int_\O dy\,V(y)
\end{split}
\end{equation}
Using the notation $y=(y_1,y_2)\in\L_\ell\times\L_\ell$, we observe that
\begin{equation}\label{eq:estimateV}
\begin{split}
\int_{\L_\ell\times\L_\ell} &dy_1dy_2\,\frac{V(y_1-y_2)}{\big[|x_1-y_1|^2+|x_2-y_2|^2\big]^2}\\
%&\leq \int_{\L_\ell} dy_1\int_{y_1-\L_\ell}dy_2'\,\frac{V(y_2')}{\big[|x_1-y_1|^2+|x_2-y_1+y_2'|^2\big]^2}\\
&\leq\int_{\mathbb{R}^3}dy_2\,V(y_2) \int_{\mathbb{R}^3} dy_1\frac{1}{\big[|x_1-y_1|^2+|x_2-y_1+y_2|^2\big]^2}\\
% &\leq\int_{\mathbb{R}^3}dy_2\,V(y_2) \int_{\mathbb{R}^3} dy_1\frac{1}{|x_1-y_1|^2}\frac{1}{|x_2-y_1+y_2|^2}\\
&= C\int_{\mathbb{R}^3}dy_2\, \frac{V(y_2)}{|x_1-x_2-y_2|} \leq \frac {C}{|x_1-x_2|} \int_{\mathbb{R}^3}dy\,V(y)  \\
\end{split}
\end{equation}
where we have used Newton's theorem in the last step. The $L^2$ norm of the last expression is thus bounded by $(\int V) \ell^2$,  
% and
% \begin{equation}\nonumber
% \begin{split}
% \Big\|\int_{\mathbb{R}^3}dy_2\,& \frac{V(y_2)}{|x_1-x_2-y_2|}\Big\|_2^2=\int_{\L_\ell\times\L_\ell}dx_1dx_2\int_{\mathbb{R}^3}dy_1\, \frac{V(y_1)}{|x_1-x_2-y_1|}\int_{\mathbb{R}^3}dy_2\, \frac{V(y_2)}{|x_1-x_2-y_2|}\\
% &\leq\int_{\mathbb{R}^3}dy_1\,V(y_1)\int_{\mathbb{R}^3}dy_2\,V(y_2)\int_{\L_\ell}dx_1\int_{[-\ell,\ell]^3}dx_2'\frac{1}{|x_2'-y_1|}\frac{1}{|x_2'-y_2|}
% \end{split}
% \end{equation}
%By Newton's theorem we have
%\begin{equation}\nonumber
%\begin{split}
%\Big\|\int_{\mathbb{R}^3}dy_2\, \frac{V(y_2)}{|x_1-x_2-y_2|}\Big\|_2^2
% &\leq\int_{\mathbb{R}^3}dy_1\,V(y_1)\int_{\mathbb{R}^3}dy_2\,V(y_2)\int_{\L_\ell}dx_1\int_{[-\ell,\ell]^3}dx_2'\frac{1}{|x_2'-y_1|^2}\\
%&\leq C\ell^4\Big(\int_{\mathbb{R}^3}dy\,V(y)\Big)^2
%\end{split}
%end{equation}
and we conclude that
\begin{equation}\label{eq:norm2second}
\begin{split}
  \Big\|\int_\O dy\,G_\varepsilon(\,\cdot\,,y)V(y) f( y)\Big\|_2&\leq \frac{C}{\varepsilon\ell^3}\|V\|_1
\end{split}
\end{equation}
We are left with the last contribution in \eqref{eq:L2normGreen}. Since $g$ is orthogonal to the constant function, we can use the spectral gap $(\pi/\ell)^2$ of the Laplacian to obtain the bound
%
%We indicate with $|c\rangle\langle c|$ the projection onto the constant function; recalling that $\int_\O g=0$, i.e., $g$ is orthogonal to the constant function, we write
%\begin{equation}
%\begin{split}
%\varepsilon\|G_\varepsilon g\|_2=\varepsilon\|G_\varepsilon \big(1-|c\rangle\langle c|\big)\big)g\|_2\leq \varepsilon\|G_\varepsilon \big(1-|c\rangle\langle c|\big)\big)\|_{\text{op}}\|g\|_2
%\end{split}
%\end{equation}
%This means that
\begin{equation}\label{eq:norm2third}
\begin{split}
\varepsilon\|G_\varepsilon g\|_2
%\leq \varepsilon\left\|\frac{1}{-\D+\varepsilon} \chi(-\D\geq (\pi/\ell)^2)\right\|_{\text{op}}\|g\|_2
\leq \varepsilon \frac{\ell^2}{\pi^2}\|g\|_2
\end{split}
\end{equation}
% As we did in \eqref{eq:norm2first0}, we use Hardy-Littlewood-Sobolev and recalling that $\int_\O g=0$, we estimate
% \begin{equation}\label{eq:norm2third}
%  \Big\|\varepsilon\int_\O dy\,G_\varepsilon(x,y)g( y)\Big\|_2\leq C\varepsilon\|g\|_{6/5} \leq C\varepsilon\ell^2\|g\|_2
% \end{equation}
By \eqref{eq:norm2first}, \eqref{eq:norm2second} and \eqref{eq:norm2third} and with the choice $\varepsilon=\ell^{-2}$ (so that $\varepsilon \frac{\ell^2}{\pi^2}<1$) we therefore arrive at 
%\begin{equation}\nonumber
% \|g\|_2\leq  \frac{C\kappa}{\ell} %\|f\|_2+ \frac{C\k}{\ell}\|V\|_1
%\end{equation}
\eqref{eq:normg}, proving \eqref{eq:norm}. 
The estimate \eqref{eq:L1norm} follows by Cauchy-Schwarz.

Next we examine  point $iv)$. Again, we decompose $f$ as $f=c+g$, with $\int_\O g=0$ and $c$ a positive constant. We observe that
\begin{equation}\nonumber
\begin{split}
|1-\ell^3f(x_1,x_2)|\leq |1-\ell^3c|+\ell^3|g(x_1,x_2)|\leq \|g\|_2+\ell^3|g(x_1,x_2)|
\end{split} 
\end{equation}
Hence, if we show that
\begin{equation}\label{eq:PWg}
\begin{split}
 \sup_{x\in\O}\,(|x_1-x_2|+1)|g(x_1,x_2)|\leq C\kappa\ell^{-3},
\end{split} 
\end{equation}
the bound \eqref{eq:PWdecay} follows. To show \eqref{eq:PWg}, we multiply \eqref{eq:L2normGreen} by $|x_1-x_2|+1$ to obtain
\begin{equation}\label{eq:PWGreen}
\begin{split}
 (|x_1-x_2|&+1)g(x_1,x_2)\\
 &=\int_{\L_\ell\times\L_\ell} dy_1dy_2\, (|x_1-x_2|+1)G_\varepsilon(x_1,x_2,y_1,y_2)\big(\lambda_\ell-\k V(y_1-y_2)\big) f( y_1,y_2)\\
 &\quad+\varepsilon\int_{\L_\ell\times\L_\ell} dy_1dy_2\, (|x_1-x_2|+1)G_\varepsilon(x_1,x_2,y_1,y_2)g( y_1,y_2)
 \end{split} 
\end{equation}
We use Lemma \ref{lm:poisson} to estimate $G_\varepsilon$ and \eqref{eq:sup} as well as \eqref{eq:slength} to get
\begin{equation}\nonumber
\begin{split}
\lambda_\ell\int_{\L_\ell\times\L_\ell} dy_1dy_2&\,(|x_1-x_2|+1)G_\varepsilon(x_1,x_2,y_1,y_2)f( y_1,y_2)\\
&\leq\frac{C}{\ell^6}\int_{\L_\ell\times\L_\ell} dy_1dy_2\,(|x_1-x_2|+1)\left[\frac{1}{\big[|x_1-y_1|^2+|x_2-y_2|^2\big]^2}+\frac{1}{\ell^6\eps}\right]\\
&\leq C\kappa\ell^{-3}+ C\kappa\varepsilon^{-1}\ell^{-5}.
\end{split} 
\end{equation}
% Where we used
% \begin{equation}\nonumber
% \begin{split}
% &\int_{\L_\ell\times\L_\ell} dy_1dy_2\,\frac{1}{|x_1-y_1|^2}\frac{1}{|x_2-y_2|^2}\\
% &\leq\int_{\L_\ell} dy_1\int_{[-\ell,\ell]^3}dy'_2\,\frac{1}{|x_1-y_1|^2}\frac{1}{|x_2+y_2'-y_1|^2}\\
% &\leq\int_{[-\ell,\ell]^3}dy'_2\,\int_{\mathbb{R}^3} dy_1\frac{1}{|x_1-y_1|^2}\frac{1}{|x_2+y_2'-y_1|^2}\\
% &\leq\int_{[-\ell,\ell]^3}dy'_2\,\frac{1}{|x_1-x_2-y'_2|}
% \end{split} 
% \end{equation}
Moreover, with \eqref{eq:estimateV} and \eqref{eq:sup}, we have
\begin{equation}\nonumber
\begin{split}
\int_{\L_\ell\times\L_\ell} dy_1dy_2&\,(|x_1-x_2|+1)G_\varepsilon(x_1,x_2,y_1,y_2)\k V(y_1-y_2)f( y_1,y_2)\\
&\leq\frac{C\k}{\ell^3}\int_{\mathbb{R}^3}dy_2\, (|x_1-x_2|+1)\frac{V(y_2)}{|x_1-x_2-y_2|}+\frac{C\k}{\varepsilon\ell^5}\int_{\RRR^3}V(y_2)dy_2\\
\end{split} 
\end{equation}
By Newton's theorem we see  that
\begin{equation}\label{eq:assumption}
\begin{split}
\int_{\mathbb{R}^3}dy_2\,& (|x_1-x_2|+1)\frac{\k V(y_2)}{|x_1-x_2-y_2|}\\&\leq C\int_{\mathbb{R}^3}dy_2\, \k V(y_2)+\frac{1}{|x_1-x_2|}\int_{|y_2|\leq |x_1-x_2|}dy_2\,\k V(y_2)+\int_{|y_2|> |x_1-x_2|}dy_2\, \frac{\k V(y_2)}{|y_2|}\\
% &\leq \frac{C\k}{|x_1-x_2|}|x_1-x_2|^{3/2}+C\k
&\leq C\k,
\end{split} 
\end{equation}
where we used that $\int dx\, V(x) |x|^{-1}$ is finite.
We conclude  that
\begin{equation}\nonumber
\begin{split}
\int_{\L_\ell\times\L_\ell} dy_1dy_2&\,(|x_1-x_2|+1)G_\varepsilon(x_1,x_2,y_1,y_2)\k V(y_1-y_2)f( y_1,y_2)\leq C\kappa(\ell^{-3}+\varepsilon^{-1}\ell^{-5})
\end{split} 
\end{equation}
We are left with the last contribution in \eqref{eq:PWGreen}. We write, using \eqref{eq:L1norm},
\begin{equation}
\begin{split}
\varepsilon\int_{\L_\ell\times\L_\ell} &dy_1dy_2\, (|x_1-x_2|+1)G_\varepsilon(x_1,x_2,y_1,y_2)g( y_1,y_2)\\&\leq C\varepsilon\int_{\L_\ell\times\L_\ell} dy_1dy_2\, (|x_1-x_2|+1)\left[\frac{1}{|x-y|^4}+\frac{1}{\varepsilon\ell^6}\right]g( y_1,y_2)\\
&\leq C\varepsilon\int_{\L_\ell\times\L_\ell} dy_1dy_2\, \frac{(|x_1-x_2|+1)}{|x-y|^4}g( y_1,y_2)+\frac{C\k}{\ell^3}
 \end{split} 
\end{equation}
We bound the first term above as follows
\begin{equation}\label{eq:PWGreen3}
\begin{split}
C\varepsilon\int_{\L_\ell\times\L_\ell}& dy\,\frac{(|x_1-x_2|+1)}{|x-y|^4}\frac{(|y_1-y_2|+1)g(y)}{(|y_1-y_2|+1)}\\
&\leq C\varepsilon\Big[\sup_{y\in\L_\ell\times\L_\ell}(|y_1-y_2|+1)g(y)\Big]\int_{\L_\ell\times\L_\ell} dy\,\frac{|x_1-x_2|+1}{|y_1-y_2|+1}\frac{1}{|x-y|^4}
 \end{split} 
\end{equation}
Similarly as we did in \eqref{eq:estimateV} we estimate
% \begin{equation}
% \begin{split}
% \int_{\L_\ell\times\L_\ell} dy\,\frac{1}{|y_1-y_2|+1}\frac{1}{|x-y|^4}&\leq\int_{[-\ell,\ell]^3}dy_2\, \frac{1}{|y_2|}\frac{1}{|x_1-x_2-y_2|}
%  \end{split} 
% \end{equation}

\begin{equation}
\begin{split}
\int_{\L_\ell\times\L_\ell}& dy\,\frac{1}{|y_1-y_2|+1}\frac{1}{|x-y|^4}\\
%&\leq \int_{\L_\ell} dy_1\int_{y_1-\L_\ell}dy_2'\,\frac{1}{|y_2'|+1}\frac{1}{\big[|x_1-y_1|^2+|x_2-y_1+y_2'|^2\big]^2}\\
&\leq\int_{[-\ell,\ell]^3}dy_2\,\frac{1}{|y_2|+1} \int_{\mathbb{R}^3} dy_1\frac{1}{\big[|x_1-y_1|^2+|x_2-y_1+y_2|^2\big]^2}\\
&\leq\int_{|y_2|\leq \sqrt{3}\ell} dy_2\, \frac{1}{|y_2|}\frac{1}{|x_1-x_2-y_2|} ]\leq C \frac{ \ell^2 }{ |x_1 - x_2| + 1} 
\\
\end{split}
\end{equation}
where we again applied  Newton's theorem in the last step. 
%\begin{equation}
%\begin{split}
%&\int_{[-\ell,\ell]^3}dy_2\, \frac{|x_1-x_2|+1}{|y_2|+1}\frac{1}{|x_1-x_2-y_2|}\\
%&\leq 2\int_{[-\ell,\ell]^3}dy_2\, \frac{1}{|y_2|+1}+\frac{1}{|x_1-x_2|}\int_{|y_2|\leq |x_1-x_2|}dy_2\,\frac{1}{|y_2|+1}+\int_{|x_1-x_2|<|y_2|<2\ell}dy_2\, \frac{1}{|y_2|}\frac{1}{|y_2|+1}\\
%&\leq C\ell^2
%\end{split}
%\end{equation}
Thus
\begin{equation}
\begin{split}
\int_{\L_\ell\times\L_\ell} dy\,\frac{|x_1-x_2|+1}{|y_1-y_2|+1}\frac{1}{|x-y|^4}&\leq C\ell^2
 \end{split} 
\end{equation}
In conclusion, we have
\begin{equation}\nonumber
\begin{split}
  (|x_1-x_2|+1)|g(x_1,x_2)|\leq& \frac{C\k}{\varepsilon\ell^5}\int_{\mathbb{R}^3}dy\, V(y)+\frac{C\k}{\ell^3} +C\varepsilon\ell^2\Big[\sup_{y\in\L_\ell\times\L_\ell}(|y_1-y_2|+1)|g(y)|\Big]
 \end{split} 
\end{equation}
therefore, by setting $\varepsilon=(2C\ell^2)^{-1}$, we obtain \eqref{eq:PWg}.
%\begin{equation}\nonumber
%\begin{split}
% \sup_{x\in\L_\ell\times\L_\ell}(|x_1-x_2|+1)|g(x_1,x_2)|\leq \frac{C\k}{\ell^3}
% \end{split} 
%\end{equation}
%for a constant $C>0$, proving 

Finally we investigate point $v)$. 
As above, we decompose $f=c+g$ with $c=\ell^{-6}\int f$. We shall prove  that
\begin{equation}\label{eq:pwDg}
\begin{split}
 \big[d(\tfrac{x_1+x_2}2 )^{5/3}+1\big]|\nabla_{x_1+x_2}g(x)|\leq C\k\ell^{-3}
\end{split}
\end{equation}
where $d(x)$ is the distance of $x$ from the boundary of the box $\Lambda_\ell$.
By \eqref{eq:L2normGreen}, we have
\begin{equation}\label{eq:supDg2}
\begin{split}
\nabla_{x_1+x_2}g(x)&=-\int_\O dy\,\nabla_{y_1+y_2}\tilde G_\varepsilon(x-y)\big(\lambda_\ell-V(y)\big) f( y)-\varepsilon\int_\O dy\,\nabla_{y_1+y_2}\tilde G_\varepsilon(x-y)g( y)\\
&\quad+\int_\O dy\,\nabla_{x_1+x_2}\Big[G_\varepsilon(x,y)-\tilde G_\varepsilon(x-y)\Big]\big(\lambda_\ell-\k V(y)\big) f( y)\\
&\quad+\varepsilon\int_\O dy\,\nabla_{x_1+x_2}\Big[G_\varepsilon(x,y)-\tilde G_\varepsilon(x-y)\Big]g( y)\\
  \end{split} 
\end{equation}
We integrate by parts in the first line, and obtain
\begin{equation}\label{a43}
\begin{split}
\nabla_{x_1+x_2}g(x)&=\int_\O dy\,\tilde G_\varepsilon(x-y)\big(\lambda_\ell-\kappa V(y)\big) \nabla_{y_1+y_2}f( y)+\varepsilon\int_\O dy\,\tilde G_\varepsilon(x-y)\nabla_{y_1+y_2}g( y)\\
&\quad+\int_{\partial\O} d\s_y\,\hat{n}\,\tilde G_\varepsilon(x-y)\big(\lambda_\ell-\k V(y)\big) f( y)+\varepsilon\int_{\partial\O} d\s_y\,\hat{n}\,\tilde G_\varepsilon(x-y)g( y)\\
&\quad+\int_\O dy\,\nabla_{x_1+x_2}\Big[G_\varepsilon(x,y)-\tilde G_\varepsilon(x-y)\Big]\big(\lambda_\ell-\k V(y)\big) f( y)\\
&\quad+\varepsilon\int_\O dy\,\nabla_{x_1+x_2}\Big[G_\varepsilon(x,y)-\tilde G_\varepsilon(x-y)\Big]g( y)=\sum_{j=1}^6\text{D}_j(x)
  \end{split} 
\end{equation}
where $d\sigma_y$ is the surface element of the boundary of the box $\partial\Omega$ and $\hat n$ is the unit vector pointing outwards.
% Multiplying by $|x_1+x_2-\ell|$ we get
% \begin{equation}
%  |x_1+x_2-\ell|\nabla_{x_1+x_2}g(x)=\text{D}_1+\text{D}_2+\text{D}_3+\text{D}_4+\text{D}_5+\text{D}_6
% \end{equation}
% with
% \begin{equation}\label{eq:supDg2PP}
% \begin{split}
% \text{D}_1&=|x_1+x_2-\ell|\int_\O dy\,\tilde G_\varepsilon(x-y)\big(\lambda_\ell-V(y)\big) \nabla_{y_1+y_2}f( y)\\
% \text{D}_2&=\varepsilon|x_1+x_2-\ell|\int_\O dy\,\tilde G_\varepsilon(x-y)\nabla_{y_1+y_2}g( y)\\
% \text{D}_3&=|x_1+x_2-\ell|\int_{\partial\O} d\s_y\,\tilde G_\varepsilon(x-y)\big(\lambda_\ell-V(y)\big) f( y)\\
% \text{D}_4&=|x_1+x_2-\ell|\varepsilon\int_{\partial\O} d\s_y\,\tilde G_\varepsilon(x-y)g( y)\\
% \text{D}_5&=|x_1+x_2-\ell|\int_\O dy\,\nabla_{x_1+x_2}\Big[G_\varepsilon(x,y)-\tilde G_\varepsilon(x-y)\Big]\big(\lambda_\ell-V(y)\big) f( y)\\
% \text{D}_6&=\varepsilon|x_1+x_2-\ell|\int_\O dy\,\nabla_{x_1+x_2}\Big[G_\varepsilon(x,y)-\tilde G_\varepsilon(x-y)\Big]g( y)\\
%   \end{split} 
% \end{equation}
We start by considering $\text{D}_2$. Using \eqref{eq:GreenEstimate}, we can bound, for every $x\in\Omega$,
\begin{equation} %\label{eq:D2}
\begin{split}
 &|\big[d( \tfrac{x_1+x_2}2)^{5/3}+1\big]\text{D}_2(x)|\\
 &\leq C\varepsilon \big[d( \tfrac{x_1+x_2}2)^{5/3}+1\big]\int_{\O} dy\,\frac{\big|\big[d(y_1+y_2)^{5/3}+1\big]\nabla_{y_1+y_2}g(y)\big|}{|x-y|^4\big[d(\tfrac{y_1+y_2}2)^{5/3}+1\big]}\\
 &\leq C\varepsilon\sup_{y\in \O}\big|\big[d(\tfrac{y_1+y_2}2)^{5/3}+1\big]\nabla_{y_1+y_2}g(y)\big| \int_{\O} dy\,\frac{d(\tfrac{x_1+x_2}2)^{5/3}+1}{|x-y|^4\big[d(\tfrac{y_1+y_2}2)^{5/3}+1\big]}
  \end{split} 
\end{equation}
In the following we shall show that 
$$
\int_{\O} dy\,\frac{1}{|x-y|^4\big[d(\tfrac{y_1+y_2}2)^{5/3}+1\big]}   \leq \frac{ C \ell^2}{  d(\tfrac{x_1+x_2}2)^{5/3} +1}
$$
Since 
\begin{equation}\label{dp}
\frac{ 1}{  d(\tfrac{x_1+x_2}2)^{5/3} +1} \leq  \sum_{i=1}^3 \sum_{j=1}^2 \frac{ 1}{2^{-5/3} |x_1^{(i)}+x_2^{(i)}- (-1)^j \ell|^{5/3} +1} \leq \frac{ 6}{  d(\tfrac{x_1+x_2}2)^{5/3} +1}
\end{equation}
it is sufficient to prove that 
\begin{equation}\label{stp}
 \int_{\O} dy\,\frac{1}{|x-y|^4\big[|y_1^{(1)}+y_2^{(1)}-\ell|^{5/3}+1\big]}  \leq \frac{ C \ell^2}{  |x_1^{(1)}+x_2^{(1)}-\ell| ^{5/3} +1}
\end{equation}
For this purpose, we shall write
 \begin{equation}\label{eq:rewrite}
\begin{split}
|x_1- y_1|^2+|x_2- y_2|^2=\frac{1}{2}\big|(x_1+x_2)-( y_1+ y_2)\big|^2+\frac{1}{2}\big|(x_1-x_2)-( y_1- y_2)\big|^2;
\end{split} 
\end{equation}
with the change of variable $ y_1+ y_2=b, \,  y_1- y_2=a$ %and assuming $d(x_1+x_2)=|x_1^{(1)}+x_2^{(1)}-\ell|$ 
we have
\begin{equation}\label{eq:integral}
\begin{split}
\int_{\O}& dy\,\frac{|x_1^{(1)}+x_2^{(1)}-\ell|^{5/3}+1}{|x-y|^4\big[|y_1^{(1)}+y_2^{(1)}-\ell|^{5/3}+1\big]}\\
&=\frac 12\int_{[-\ell,\ell]^3} db\int_{\omega(b)}da\,\frac{|x_1^{(1)}+x_2^{(1)}-\ell|^{5/3}+1}{\big[\big|(x_1+x_2)-b\big|^2+\big|(x_1-x_2)-a\big|^2\big]^2\big[|b^{(1)}-\ell|^{5/3}+1\big]} 
  \end{split} 
\end{equation}
where $\omega(b)=[|b^{(1)}|-\ell,\ell-|b^{(1)}|]\times[|b^{(2)}|-\ell,\ell-|b^{(2)}|]\times[|b^{(3)}|-\ell,\ell-|b^{(3)}|]$. Let us introduce the notation $a= (a^{(1)},a^\perp)$ and $b=(b^{(1)}, b^{\perp})$.
To bound \eqref{eq:integral} we bound the numerator with $2 (2\ell)^{5/3}$ (assuming $2\ell \geq 1$) and  extend the integration domain of the variable $a^\perp$ to $[-\ell,\ell]^2$; dropping the term involving $a^{(1)}$ in the denominator, we can integrate over $a^{(1)}$ to obtain the bound
\begin{equation}\nonumber
\begin{split}
&\int_{[-\ell,\ell]^3} db\int_{\omega(b)}da\,\frac{|x_1^{(1)}+x_2^{(1)}-\ell|^{5/3}+1}{\big[\big|(x_1+x_2)-b\big|^2+\big|(x_1-x_2)-a\big|^2\big]^2\big[|b^{(1)}-\ell|^{5/3}+1\big]} \\
% &\leq (2\ell)^{5/3}\int_{[-\ell,\ell]^3} db\int_{\omega(b)}da\,\frac{1}{\big[\big|(x_1+x_2)-b\big|^2+\big|(x_1^\perp-x_2^\perp)-a^\perp\big|^2\big]^2|b^{(1)}-\ell|^{5/3}}\\
&\leq 2 (2\ell)^{5/3}\int_{[-\ell,\ell]} db^{(1)}\frac{1}{|b^{(1)}-\ell|^{2/3}} \int_{[-\ell,\ell]^4}db^\perp da^\perp\,\frac{1}{\big[\big|(x_1+x_2)-b\big|^2+\big|(x_1^\perp-x_2^\perp)-a^\perp\big|^2\big]^2}\\
  \end{split} 
\end{equation}
We estimate
\begin{equation}\nonumber
\begin{split}
 \int_{[-\ell,\ell]^4}db^\perp da^\perp\,&\frac{1}{\big[\big|(x_1+x_2)-b\big|^2+\big|(x_1^\perp-x_2^\perp)-a^\perp\big|^2\big]^2}\leq C\int_{|(x_1^{(1)}+x_2^{(1)})-b^{(1)}|}^{3\ell}dz\,\frac{1}{z}\\&= C\ln\left(\frac{3\ell}{|(x_1^{(1)}+x_2^{(1)})-b^{(1)}|}\right)\\
  \end{split} 
\end{equation}
and
\begin{equation}\nonumber
\begin{split}
\int_{[-\ell,\ell]} db^{(1)}&\frac{1}{|b^{(1)}-\ell|^{2/3}}\ln\left(\frac{3\ell}{|(x_1^{(1)}+x_2^{(1)})-b^{(1)}|}\right) \leq (2\ell)^{1/3} \sup_{0<s<1} \int_0^1 dt \, t^{-2/3} \ln \frac {3/2}{ t - s } \leq C \ell^{1/3}
\end{split} 
\end{equation}
This proves \eqref{stp}. 
\iffalse
\begin{equation}\nonumber
\begin{split}
\int_{[-\ell,\ell]} db^{(1)}&\frac{1}{|b^{(1)}-\ell|^{2/3}}\ln\left(\frac{\ell}{|(x_1^{(1)}+x_2^{(1)})-b^{(1)}|}\right)\\
&\leq \int_{[-\ell,\ell]} db^{(1)}\,\frac{\chi(|(x_1^{(1)}+x_2^{(1)})-\ell|>|b^{(1)}-\ell|)}{|b^{(1)}-\ell|^{2/3}}\ln\left(\frac{\ell}{|(x_1^{(1)}+x_2^{(1)})-\ell|}\right)\\
&\quad+\int_{[-\ell,\ell]} db^{(1)}\,\frac{\chi(|(x_1^{(1)}+x_2^{(1)})-\ell|\leq|b^{(1)}-\ell|)}{|b^{(1)}-\ell|^{2/3}}\ln\left(\frac{\ell}{|b^{(1)}-\ell|}\right)\leq C\ell^{1/3},
\end{split} 
\end{equation}
where 
%the first term in the second line comes out in the case $|x_1^{(1)}+x_2^{(1)}|=\ell$ while in the other two 
we bounded $|(x_1^{(1)}+x_2^{(1)})-b^{(1)}|\geq||(x_1^{(1)}+x_2^{(1)})-\ell|-|b^{(1)}-\ell||$ inside the logarithm; in the second line we used that
\begin{equation}\nonumber
\begin{split}
\ln\left(\frac{\ell}{|(x_1^{(1)}+x_2^{(1)})-b^{(1)}|}\right)\leq C\ln\left(\frac{\ell}{|(x_1^{(1)}+x_2^{(1)})-\ell|}\right)\frac{|b^{(1)}-\ell|}{|(x_1^{(1)}+x_2^{(1)})-\ell)\big|}\\
\end{split} 
\end{equation}
and bounded the last factor by one. In the third line we used the analogous inequality as above with $|(x_1^{(1)}+x_2^{(1)})-\ell|$ interchanged with $|b^{(1)}-\ell|$.
We conclude that
\begin{equation}\label{eq:splitb}
\begin{split}
\int_{\O}& dy\,\frac{(x_1^{(1)}+x_2^{(1)})^{5/3}+1}{|x-y|^4\big[(y_1^{(1)}+y_2^{(1)})^{5/3}+1\big]}\leq C \ell^2
  \end{split} 
\end{equation}
\fi
We have thus shown that 
\begin{equation}\label{eq:D2}
\begin{split}
 &|\big[d(\tfrac{x_1+x_2}2)^{5/3}+1\big]\text{D}_2(x)|\leq C\varepsilon\ell^2\sup_{y\in \O}\big|\big[d(\tfrac{y_1+y_2}2)^{5/3}+1\big]\nabla_{y_1+y_2}g(y)\big|
  \end{split} 
\end{equation}

We proceed with $\text{D}_1$, which we write as 
 $\text{D}_1=\text{D}_{11}+\text{D}_{12}$, with 
\begin{equation}
\begin{split}
\text{D}_{11}(x)&=\lambda_\ell\int_\O dy\,\tilde G_\varepsilon(x-y) \nabla_{y_1+y_2}f( y)
  \end{split} 
\end{equation}
and
\begin{equation}
\begin{split}
\text{D}_{12}(x)&=-\int_\O dy\,\tilde G_\varepsilon(x-y)\k V(y) \nabla_{y_1+y_2}f( y)
  \end{split} 
\end{equation}
Using the same method as above we estimate
\begin{equation}\label{eq:D11}
\begin{split}
|\big[d(\tfrac{x_1+x_2}2)^{5/3}+1\big]\text{D}_{11}(x)|&\leq C\l_\ell\ell^2\sup_{y\in \O}|\big[d(\tfrac{y_1+y_2}2)^{5/3}+1\big]\nabla_{y_1+y_2}g(y)|.
\end{split} 
\end{equation}
For $\text{D}_{12}$ we have
\begin{equation}\label{eq:D12}
\begin{split}
|\big[&d(\tfrac{x_1+x_2}2)^{5/3}+1\big]\text{D}_{12}(x)|\\
&\leq \sup_{y\in \O}|\big[d(\tfrac{y_1+y_2}2)^{5/3}+1\big]\nabla_{y_1+y_2}g(y)|\int_\O dy\,\frac{\k V(y)\big[d(\tfrac{x_1+x_2}2)^{5/3}+1\big]}{|x-y|^4\big[d(\tfrac{y_1+y_2}2)^{5/3}+1\big]}
  \end{split} 
\end{equation}
Because of \eqref{dp} it again suffices to bound the last integral with $d(z)$ replaced by $|z^{(1)} - \ell|$ for both $z=x_1+x_2$ and $z=y_1+y_2$. 
With the same change of variables as in \eqref{eq:integral} we have
\begin{equation}
\begin{split}
\int_{\O}& dy\,\frac{\k V(y)}{|x-y|^4\big[\big(y_1^{(1)}+y_2^{(1)}\big)^{5/3}+1\big]}\\
&\leq \frac 12\int_{\RRR^3} da\int_{\RRR^3}db\,\frac{\k V(a)}{\big[\big|(x_1+x_2)-b\big|^2+\big|(x_1-x_2)-a\big|^2\big]^2\big[|b^{(1)}-\ell|^{5/3}+1\big]} 
  \end{split} 
\end{equation}
where we extended integration domain to $\RRR^6$.
Integrating first in the variable $b^\perp$ we have
\begin{equation}\label{eq:intV}
\begin{split}
\int_{\RRR^3} da&\int_{\RRR^3}db\,\frac{\k V(a)}{\big[\big|(x_1+x_2)-b\big|^2+\big|(x_1-x_2)-a\big|^2\big]^2\big[|b^{(1)}-\ell|^{5/3}+1\big]}\\
&= C\int_{\R} db^{(1)} \frac{1}{|b^{(1)}-\ell|^{5/3}+1}\int_{\RRR^3} da\frac{\k V(a)}{\big|\big(x_1^{(1)}+x_2^{(1)}\big)-b^{(1)}\big|^2+\big|(x_1-x_2)-a\big|^2}
  \end{split} 
\end{equation}
Using that $V$ is bounded and of compact support, one readily checks that 
\begin{equation}\label{sito}
\int_{\RRR^3} da \frac{V(a)}{X+|Y-a|^2} \leq\frac{C}{X+ |Y|^2+1}
\end{equation}
for all $X\geq0$ and $Y\in\RRR^3$.
\iffalse
\begin{equation}
\begin{split}
\int_{\RRR^3} da&\frac{V(a)}{X+|Y-a|^2}\leq\frac{1 }{X+1}\int_{\RRR^3} da\frac{V(a)(X+1)}{X+|Y-a|^2}\\
&\leq\frac{1 }{X+1}\left[\int_{\RRR^3} da V(a)^2\right]^{1/2}\left[\int_{Y-\text{supp}(V)} da\frac{(X+1)^2}{\big[X+|a|^2\big]^2} \right]^{1/2}
  \end{split} 
\end{equation}
Since $\sup_X\big[(X+1)^2\big(X+|a|^2\big)^{-2}\big]\leq \max\{1,|a|^{-2}\}$, we have
\begin{equation}\label{eq:VoverX}
\begin{split}
\int_{\RRR^3} da&\frac{V(a)}{X+|Y-a|^2}\leq\frac{C\k}{X+1}
  \end{split} 
\end{equation}
\fi
%Thus, using the inequality in \eqref{eq:intV} 
Hence we find that
\begin{equation}
\begin{split}
\int_{\O} dy\,\frac{\k V(y)}{|x-y|^4\big[|y_1^{(1)}+y_2^{(1)}-\ell|^{5/3}+1\big]}&\leq C\kappa\int_{\R} db^{(1)} \frac{1}{|b^{(1)}-\ell|^{5/3}+1} \frac{1}{|(x_1^{(1)}+x_2^{(1)})-b^{(1)}|^{2}+1}\\
&\leq \frac{C\kappa}{\big[|x_1^{(1)}+x_2^{(1)}-\ell|^{5/3}+1\big]}
  \end{split} 
\end{equation}
which is the desired bound, allowing us to conclude that  
\begin{equation}\label{eq:ksmall}
\begin{split}
|\big[&d(\tfrac{x_1+x_2}2)^{5/3}+1\big]\text{D}_{12}(x)|\leq C\kappa \sup_{y\in \O}|\big[d(\tfrac{y_1+y_2}2)^{5/3}+1\big]\nabla_{y_1+y_2}g(y)|.
  \end{split} 
\end{equation}
% \textcolor{red}{using the method of the question file one can see that}
% \begin{equation}\label{eq:D6}
% \begin{split}
% |\text{D}_6(x)|&\leq C\varepsilon^{1/2}\ell^{-4}+\varepsilon\ell^{-3}\log(\ell)
%   \end{split} 
% \end{equation}

In order to bound $\text{D}_5$ we split it into 
\begin{equation}
\begin{split}
\text{D}_{51}(x)&=\lambda_\ell\int_\O dy\,\nabla_{x_1+x_2}\Big[G_\varepsilon(x,y)-\tilde G_\varepsilon(x-y)\Big]f( y)\\
  \end{split} 
\end{equation}
and
\begin{equation}
\begin{split}
\text{D}_{52}(x)&=-\int_\O dy\,\nabla_{x_1+x_2}\Big[G_\varepsilon(x,y)-\tilde G_\varepsilon(x-y)\Big]\k V(y)f( y)\\
  \end{split} 
\end{equation}
We easily bound
\begin{equation}\label{eq:D51}
\begin{split}
 |\big[d(\tfrac{x_1+x_2}2)^{5/3}&+1\big]\text{D}_{51}(x)|\\
 & \leq
 \lambda_\ell\big[d(\tfrac{x_1+x_2}2)^{5/3}+1\big]\int_\O dy\,\Big[ \sum_{ n}\frac{1}{|x-y_n|^5}+\frac{1}{\varepsilon^{1/2}\ell^6}\Big]f( y)\\
 &\leq C \kappa (\ell^{-3-1/3}+\ell^{-4-1/3}\varepsilon^{-1/2})
 \end{split} 
\end{equation}
where we used \eqref{eq:DGDG}, \eqref{eq:slength} and \eqref{eq:sup} and we estimated  $d(\tfrac{x_1+x_2}2)\leq\ell/2$.
For $\text{D}_{52}$, we use again \eqref{eq:DGDG} and \eqref{eq:sup} to estimate
\begin{equation}\label{eq:D52}
\begin{split}
|\big[&d(\tfrac{x_1+x_2}2)^{5/3}+1\big]\text{D}_{52}(x)|\leq \frac{C\big[d(\tfrac{x_1+x_2}2)^{5/3}+1\big]}{\ell^3}\int_\O dy\,\Big[ \sum_{ n}\frac{1}{|x- y_n|^5}+\frac{1}{\varepsilon^{1/2}\ell^6}\Big]\kappa V(y).
  \end{split} 
\end{equation}
For the second contribution we have 
\begin{equation}
\begin{split}
\frac{\big[d(\tfrac{x_1+x_2}2)^{5/3}+1\big]}{\varepsilon^{1/2}\ell^9}\int_\O dy\,\kappa V(y)\leq \frac{C\kappa}{\varepsilon^{1/2}\ell^{4+1/3}}\\
  \end{split} 
\end{equation}
For the first contribution in \eqref{eq:D52}, among the  image charges $ y_n$ we start considering the one that has all coordinates equal to those of $y$ except the first one. We rename it as $\tilde y$, and we have $\tilde y_1^{(1)}=-\ell- y_1^{(1)}$. We write 
\begin{equation} 
\begin{split}
\int_\O dy\,&\frac{1}{|x-\tilde y|^5} V(y)=\int_\O dy_1dy_2\,\frac{V(y_1-y_2)}{(|x_1-\tilde y_1|^2+|x_2-\tilde y_2|^2)^{5/2}}\\
&=\int_{[-3\ell/2,-\ell/2]\times[-\ell/2,\ell/2]^5} d\tilde y_1d\tilde y_2\,\frac{ V(\tilde y^{(1)}_1+\tilde y^{(1)}_2+\ell,\tilde y^{\perp}_1-\tilde y^{\perp}_2)}{(|x_1-\tilde y_1|^2+|x_2-\tilde y_2|^2)^{5/2}}
  \end{split} 
\end{equation}
The denominator can be expressed as in \eqref{eq:rewrite};
% \begin{equation}
% \begin{split}
% |x_1-\tilde y_1|^2+|x_2-\tilde y_2|^2=\frac{1}{2}\big|(x_1+x_2)-(\tilde y_1+\tilde y_2)\big|^2+\frac{1}{2}\big|(x_1-x_2)-(\tilde y_1-\tilde y_2)\big|^2
% \end{split} 
% \end{equation}
with the change of variables $ \tilde y_1+\tilde y_2=b, \, \tilde y_1-\tilde y_2=a$, we have (extending the integration domain to $\RRR^6$)
\begin{equation} 
\begin{split}
\int_\O dy\,\frac{1}{|x-\tilde y|^5}V(y) & \leq \frac{1}{\sqrt{2}}\int_{\RRR^3} db^{(1)}da^{\perp}\, V(b^{(1)}+\ell, a^{\perp})\\
& \quad \times \int_{\RRR^3}da^{(1)} db^{\perp}\,\frac{1}{\big(\big|x_1+x_2-b\big|^2+\big|x_1-x_2-a\big|^2\big)^{5/2}}\\
&= C \int_{\mathbb{R}^3} db^{(1)}da^{\perp}\frac{ V(b^{(1)}+\ell,  a^{\perp})}{\big(x_1^{(1)}+x_2^{(1)}-b^{(1)}\big)^2+\big(x_1^\perp-x_2^\perp-a^\perp\big)^2}
  \end{split} 
\end{equation}
%and 
%\begin{equation} 
%\begin{split}
%\frac{1}{\ell^3}&\left|\int_\O dy\,\frac{1}{|x-\tilde y|^5}\k V(y)\right|\leq \frac{C}{\ell^3}\int_{\mathbb{R}^3} db^{(1)}da^{\perp}\frac{\k V(b^{(1)}+\ell,  a^{\perp})}{\big((x_1^{(1)}+x_2^{(1)})-b^{(1)}\big)^2+\big((x_1^\perp-x_2^\perp)^{(2)}-a^\perp\big)^2}
%  \end{split} 
%\end{equation}
Using now again \eqref{sito}, we arrive at 
%so that (writing $|x_1^{(1)}+x_2^{(1)})-b^{(1)}|\geq ||x_1^{(1)}+x_2^{(1)})-\ell|-|b^{(1)}-\ell||$ and using \eqref{eq:VoverX}) 
\begin{equation} 
\begin{split}
\frac{1}{\ell^3}&\left|\int_\O dy\,\frac{1}{|x-\tilde y|^5}\k V(y)\right|\leq \frac{C}{\ell^3}\frac{\kappa}{|x_1^{(1)}+x_2^{(1)}-\ell|^2+1} \leq \frac{C}{\ell^3}\frac{\kappa}{d(x_1+x_2)^2+1} 
  \end{split} 
\end{equation}
The contribution from the other image charges can be estimated similarly, and we omit the details. We conclude that 
\begin{equation}\label{eq:D52b}
\begin{split}
|\big[d(\tfrac{x_1+x_2}2)^{5/3}+1\big]\text{D}_{52}(x)|&\leq\frac{C\kappa}{\varepsilon^{1/2}\ell^{4+1/3}}+\frac{C\kappa}{\ell^{3}}
%\big[\sum_{i=1}^3d(x_1^{(i)}+x_2^{(i)})^{5/3}+1\big]\sum_{i=1}^3\frac{1}{d\big(x_1^{(i)}+x_2^{(i)}\big)^2+1}\\
  \end{split} 
\end{equation}

Next we investigate $\text{D}_6$. With \eqref{eq:DGDG} and \eqref{eq:PWg} we have
\begin{equation}\label{eq:D6est}
\begin{split}
|\text{D}_6(x)|&=\Big|\varepsilon\int_\O dy\,\nabla_{x_1+x_2}\Big[G_\varepsilon(x,y)-\tilde G_\varepsilon(x-y)\Big]g( y)\Big|\\
&\leq \frac{C\varepsilon^{1/2}}{\ell^6}\int_\O dy\,|g( y)|+\frac{\varepsilon\kappa}{\ell^3}\sum_{n}\int_\O dy\,\frac{1}{|x- y_n|^5}\frac{1}{|y_1-y_2|+1}\\
  \end{split} 
\end{equation}
To bound the first term we can also use \eqref{eq:PWg}, which gives $\int|g|\leq C\k \ell^{2}$.
To estimate the second term in \eqref{eq:D6est} we start, as above, by considering the image charge $\tilde y$ such that $\tilde y_1^{(1)}=-\ell- y_1^{(1)}$, $\tilde y_1^{(i)}= y_1^{(i)}$ for $i=2,3$ and $\tilde y_2^{(j)}= y_2^{(j)}$  for $j=1,2,3$. We perform again the change of variables $ \tilde y_1+\tilde y_2=b, \, \tilde y_1-\tilde y_2=a$ and extend the integration domains % of $(a^{(1)},b^\perp)$ to $\RRR^3$ 
so that 
\begin{equation}\nonumber
\begin{split}
&\int_\O dy\,\frac{1}{|x- \tilde y|^5}\frac{1}{|y_1-y_2|+1}\\
& \leq \frac 1 {\sqrt{2}}\int_{[-\ell,\ell]^2} da^{\perp}\,\int_{\RRR^4}db^{(1)}db^{\perp}da^{(1)}\\& \qquad\quad \times \frac{1}{\big[\big|x_1+x_2-b\big|^2+\big|x_1-x_2-a\big|^2\big]^{5/2}}\frac{1}{\big[|b^{(1)}+\ell|^2+|a^{\perp}|^2\big]^{1/2}+1}\\
&\leq  C \int_{ [-\ell,\ell]^2} da^{\perp}\,\frac{1}{ |a^{\perp}|+1}\frac{1}{\big|x_1^\perp-x_2^\perp-a^\perp\big|}
  \end{split} 
\end{equation}
where we dropped the term $|b^{(1)}+\ell|^2$ in the last step in order to be able to explicitly integrate over $b^{(1)}$. It is easy to see that the remaining integral is bounded by $\ln \ell$, uniformly in $x_1^\perp - x_2^\perp$. The same estimate can be applied to the other imaged charges, with the result that 
%The integration over $a^{(1)},b^\perp$ leads to
%\begin{equation}\nonumber
%\begin{split}
%&\frac{\varepsilon}{\ell^3}\int_\O dy\,\frac{1}{|x- \tilde y|^5}\frac{1}{|y_1-y_2|+1}\\
%&\leq\frac{\varepsilon}{\ell^3}
%\int_{ [-\ell,\ell]^3} db^{(1)}da^{\perp}\,\frac{1}{\big[|b^{(1)}|^2+|a^{\perp}|^2+1\big]^{1/2}}\frac{1}{\big[\big|(x_1^{(1)}+x_2^{(1)})-b^{(1)}\big|^2+\big|(x_1^\perp-x_2^\perp)-a^\perp\big|^2\big]}
 % \end{split} 
%\end{equation}
%We extend now the integration domain to the ball $B_{2\ell}(0)$ to obtain
\begin{equation}\nonumber
\begin{split}
|\text{D}_6(x)|&\leq C\k \varepsilon^{1/2}\ell^{-4}+\textcolor{black}{C\k \varepsilon\ell^{-3}\ln(\ell)}
  \end{split} 
\end{equation}
In particular, 
\begin{equation}\label{eq:D6}
\begin{split}
|\big[d(\tfrac{x_1+x_2}2)^{5/3}+1\big]\text{D}_6(x)|&\leq C\k \varepsilon^{1/2}\ell^{-7/3}+C\k \varepsilon\ell^{-4/3}\ln(\ell)
  \end{split} 
\end{equation}

We are left with considering $\text{D}_3$ and $\text{D}_4$. With the aid of \eqref{eq:GreenEstimate} and \eqref{eq:PWg} we can bound 
\begin{equation}
\begin{split}
|\text{D}_4(x)|&=\Big|\varepsilon\int_{\partial\O} d\s_y\,\hat n\,\tilde G_\varepsilon(x-y) g( y)\Big|\leq \frac{C\varepsilon \kappa}{\ell^3}\int_{\partial\O} d\s_y  \frac{1}{|x-y|^4} \frac{1}{|y_1-y_2|+1}
  \end{split} 
\end{equation}
It clearly suffices to consider the contribution to the boundary integral coming from $y_1^{(1)}= -\ell/2$. 
With the change of variables $ y_1^{\perp}+ y_2^{\perp}=b^{\perp}, \,  y_1^{\perp}- y_2^{\perp}=a^{\perp}$ we have, similarly as above, 
\begin{equation}\nonumber
\begin{split}
&\int_{\partial\O} d\s_y\,\frac{1}{|x-y|^4}\frac{1}{|y_1-y_2|+1}\\
&\leq   \int_\RRR  dy_2^{(1)}\int_{[-\ell,\ell]^2}da^\perp\,\frac{1}{\big[|y_2^{(1)}+\ell/2|^2+\big| a^{\perp}\big|^2\big]^{1/2}+1}\\
&\quad \times \int_{\RRR^2} db^\perp\,\frac{1}{\Big[\big|x_1^\perp+x_2^\perp-b^\perp\big|^2+\big|x_1^\perp-x_2^\perp-a^\perp\big|^2+ 2\big|x_1^{(1)}+\ell/2\big|^2 + 2 \big|x_2^{(1)}-y_2^{(1)}\big|^2\Big]^2}\\
& \leq C \int_{[-\ell,\ell]^2}da^\perp\,\frac{1} {  \big| a^{\perp}\big|+1}  \frac{1}{\big|x_1^\perp-x_2^\perp-a^\perp\big|}  \leq C \ln \ell\\  \end{split} 
\end{equation}
and thus
\begin{equation}\label{eq:D4}
\begin{split}
|\big[d(\tfrac{x_1+x_2}2)^{5/3}+1\big]\text{D}_4(x)|&\leq C\k   \varepsilon\ell^{-4/3}\ln(\ell)
  \end{split} 
\end{equation}
% \textcolor{red}{using the change of coordinate one can see that}
% \begin{equation}\label{eq:D4}
% \begin{split}
% |\text{D}_4(x)|&=\Big|\varepsilon\int_{\partial\O} d\s_y\,\tilde G_\varepsilon(x-y) g( y)\Big|\\
% &\leq \frac{C\varepsilon}{\ell^3}\int_{\partial\O} d\s_y\,\frac{1}{|x-y|^4}\frac{1}{|y_1-y_2|+1}\leq  \frac{C\varepsilon}{\ell^{3}}\log(\ell)
%   \end{split} 
% \end{equation}
In $\text{D}_3$ we estimate the contribution proportional to $\lambda_\ell$ as
\begin{equation}\nonumber
\begin{split}
\lambda_\ell\int_{\partial\O} d\s_y\,\tilde G_\varepsilon(x-y)f( y)\leq C\frac{\lambda_\ell}{\ell^3}\int_{\partial\O} d\s_y \frac{1}{|x-y|^4}\leq  \frac {C \k }{\ell^{5}} ,
  \end{split} 
\end{equation}
where we used \eqref{eq:sup} and \eqref{eq:slength}. 
For the contribution proportional to $V$, we use again \eqref{eq:sup} to bound it as
\begin{equation}\nonumber
\begin{split}
&\int_{\partial\O} d\s_y\,\tilde G_\varepsilon(x-y)\k V(y)f( y)\leq \frac{C}{\ell^3}\int_{\partial\O} d\s_y\,  \frac{\k V(y) }{|x-y|^4}  \\
  \end{split} 
\end{equation}
To estimate the first term on the right-hand side, we perform the same change of variables as in $\text{D}_4$. Extending the domain of integration to $\RRR^5$ and doing the integration over $b^\perp$ we have
\begin{equation}\nonumber
\begin{split}
\int_{\partial\O,\, y_1^{(1)}=-\ell/2} d\s_y\,\frac{V(y)}{|x-y|^4} &\leq  C \int_{\RRR^3} dy_2^{(1)}da^{\perp}   \frac{V(y_2^{(1)}+\ell/2,a^{\perp})}{\big|x_1^{\perp}-x_2^{\perp}-a^{\perp}\big|^2+  \big|x_1^{(1)} + x_2^{(1)}+\ell/2 -y_2^{(1)}\big|^2} \\ & \leq  C\frac{1}{|x_1^{(1)}+x_2^{(1)}-\ell|^2+1} 
  \end{split} 
\end{equation}
where we used again \eqref{sito} in the last step. Hence
%
%and, with \eqref{eq:VoverX},
%\begin{equation}\nonumber
%\begin{split}
%&\int_{\partial\O} d\s_y\,\tilde G_\varepsilon(x-y)\k V(y)f( y)\\
%&\leq \frac{C}{\ell^3}\int_{\RRR^3} dy_2^{(1)}da^{(2)}da^{(3)}\,\frac{\k V(y_2^{(1)}+\ell/2,a^\perp)}{\big|(x_1^{\perp}-x_2^{\perp})-a^{\perp}\big|^2+|(x_1^{(1)}+x_2^{(1)})-y_2^{(1)}|^2}\\
%&\leq \frac{C}{\ell^3}\frac{1}{|(x_1^{(1)}+x_2^{(1)})-\ell|^2+1}
 % \end{split} 
%\end{equation}
%We obtain therefore
\begin{equation}\label{eq:D3}
\begin{split}
|\big[d(\tfrac{x_1+x_2}2)^{5/3}+1\big]\text{D}_3(x)|&\leq C\k  \ell^{-3}
  \end{split} 
\end{equation}

By combining \eqref{a43}, \eqref{eq:D2}, \eqref{eq:D11}, \eqref{eq:ksmall}, \eqref{eq:D51}, \eqref{eq:D52b}, \eqref{eq:D6}, \eqref{eq:D4} and \eqref{eq:D3} we have thus shown that
\begin{equation}\nonumber
\begin{split}
  \big[d(\tfrac{x_1+x_2}2)^{5/3}+1\big]|\nabla_{x_1+x_2}g(x)| &\leq C\k \left( \ell^{-3} +  \varepsilon \ell^{-4/3} \ln (\ell)   \right) \\ & \quad +C\left(\varepsilon \ell^2 +\l_\ell \ell^2 + \kappa\right) \sup_{y\in \O}\big|\big[d(\tfrac{y_1+y_2}2)^{5/3}+1\big]\nabla_{y_1+y_2}g(y)\big|.
\end{split}
\end{equation}
We choose $\varepsilon=c\ell^{-2}$ with small enough $c$ so that the factor $C( \varepsilon\ell^2 + \l_\ell \ell^2 + \kappa)$ is smaller than one for large $\ell$ and small $\kappa$, concluding  the proof of \eqref{eq:PWder}.
\end{proof}

\begin{proof}[Proof of Proposition \ref{prop:eta}]
From \eqref{eq:der} it follows that
 \begin{equation}\label{eq:derOmega} 
\begin{split} 
 \int_{\L_1\times\L_1} dxdy\,\Big[|\nabla_xw_\ell(x, y)|^2+|\nabla_yw_\ell(x, y)|^2\Big]\leq \frac{C\k}{\ell},
\end{split} 
\end{equation}
estimate \eqref{eq:sup} implies
  \begin{equation}\label{eq:supOmegal} 
\begin{split}
|w_\ell(x, y)|\leq C
\end{split} 
\end{equation}
and from \eqref{eq:norm} it follows that
  \begin{equation}\label{eq:normOmega} 
\begin{split}
 \int_{\L_1\times\L_1} dxdy\,\big|w_\ell(x, y)\big|^2\leq \frac{C\k^2}{\ell^2},
\end{split} 
\end{equation}
while \eqref{eq:L1norm} shows that
  \begin{equation}\label{eq:L1normOmega} 
\begin{split}
 \int_{\L_1\times\L_1} dxdy\,|w_\ell(x,y)|\leq \frac{C\k}{\ell}.
\end{split} 
\end{equation}
By equation \eqref{eq:etaDecomp} and bounds \eqref{eq:normOmega}, \eqref{eq:derOmega} we find
\begin{equation}\nonumber
 \begin{split}
 \int_{\L_1\times\L_1} dxdy\,\big|\mu(x, y)\big|^2&\leq C\k\frac{n^2}{\ell^2}\\ \int_{\L_1\times\L_1} dxdy\,\Big[\big|\nabla_x\mu(x, y)\big|^2+\big|\nabla_y\mu(x, y)\big|^2\Big]&\leq C\k\frac{n^2}{\ell}
\end{split} 
\end{equation}
which imply \eqref{eq:normEta} and \eqref{eq:normDEta}.
By \eqref{eq:sup} we have
\begin{equation}
 |\eta(x,y)|\leq n|w_\ell(x,y)|+|\mu(x,y)|\leq C n
\end{equation}
which proves \eqref{eq:supEta}. Estimate \eqref{eq:supEtaBetter} follows from \eqref{eq:PWdecay}.
Point $ii)$ follows from \eqref{eq:supEtaBetter}.

We consider now point $iii)$. From the definition of $r$, we find
\begin{equation}\label{eq:serier}
\begin{split}
 r(x,y)& =  \sum_{n=1}^\infty \frac{1}{(2n+1)!} \eta^{2n+1} (x,y) \\
& = \sum_{n=1}^\infty \frac{1}{(2n+1)!} \int dz dw \, \eta(x,z)\,\eta^{2n-1}(z,w)\, \eta (w,y);
\end{split} 
\end{equation}
using \eqref{eq:normEta}, which implies $\|\eta\|_2\leq C$, we arrive at
\begin{equation}
\begin{split} 
|r (x,y)| \leq \; &\sum_{n=1}^\infty \frac{1}{(2n+1)!} \left[ \int dw dz |\eta(x,z)|^2 |\eta (w,y)|^2 \right]^{1/2}  \left[ \int dw dz \, |\eta^{2n-1}(z,w)|^2 \right]^{1/2} \\
\leq \; &\sum_{n=1}^\infty \frac{1}{(2n+1)!} \| \eta \|_2^{2n-1} \, \| \eta (x,\cdot) \|_2 \| \eta (\cdot,y) \|_2 \leq C \|\eta\|_2 \| \eta (x,\cdot) \|_2 \| \eta (\cdot,y) \|_2 
\end{split}
\end{equation}
for every $x,y \in \L_1$. The bound for $p$ can be proven analogously. This proves \eqref{eq:supr} and consequently \eqref{eq:supL2} and \eqref{eq:normsp}. 
%   \begin{equation}
%  \begin{split}
% |r(x,y)|\leq C\|\eta_x\|_2\|\eta_y\|_2\|\eta\|_2e^{\|\eta\|_2}  
%   \end{split}
% \end{equation} 
\end{proof}

\section*{Acknowledgments}
Funding from the European Union’s Horizon 2020 research and innovation programme
under the ERC grant agreement No 694227 is gratefully acknowledged.

%%%%%%%%%%%%%%%%%%%%%%%%%%%%%%%%%%%%%%%%%%%%%%%%%%%%%%%%%%%%%%%%%%%%%%%%%%%%%%%%%%%%%%%%%%%%%%%%%%%%%%%%%%%%%%%%%%%%%%%%%%%%%%%%%%%%%%%%%%%%%%%%%%%%%%%%%%%%%%%%%%%%%%%%%%%%%
%%%%%%%%%%%%%%%%%%%%%%%%%%%%%%%%%%%%%%%%%%%%%%%%%%%%%%%%%%%%%%%%%%%%%%%%%%%%%%%%%%%%%%%%%%%%%%%%%%%%%%%%%%%%%%%%%%%%%%%%%%%%%%%%%%%%%%%%%%%%%%%%%%%%%%%%%%%%%%%%%%%%%%%%%%%%%
%%%%%%%%%%%%%%%%%%%%%%%%%%%%%%%%%%%%%%%%%%%%%%%%%%%%%%%%%%%%%%%%%%%%%%%%%%%%%%%%%%%%%%%%%%%%%%%%%%%%%%%%%%%%%%%%%%%%%%%%%%%%%%%%%%%%%%%%%%%%%%%%%%%%%%%%%%%%%%%%%%%%%%%%%%%%%
%%%%%%%%%%%%%%%%%%%%%%%%%%%%%%%%%%%%%%%%%%%%%%%%%%%%%%%%%%%%%%%%%%%%%%%%%%%%%%%%%%%%%%%%%%%%%%%%%%%%%%%%%%%%%%%%%%%%%%%%%%%%%%%%%%%%%%%%%%%%%%%%%%%%%%%%%%%%%%%%%%%%%%%%%%%%%
%%%%%%%%%%%%%%%%%%%%%%%%%%%%%%%%%%%%%%%%%%%%%%%%%%%%%%%%%%%%%%%%%%%%%%%%%%%%%%%%%%%%%%%%%%%%%%%%%%%%%%%%%%%%%%%%%%%%%%%%%%%%%%%%%%%%%%%%%%%%%%%%%%%%%%%%%%%%%%%%%%%%%%%%%%%%%
%%%%%%%%%%%%%%%%%%%%%%%%%%%%%%%%%%%%%%%%%%%%%%%%%%%%%%%%%%%%%%%%%%%%%%%%%%%%%%%%%%%%%%%%%%%%%%%%%%%%%%%%%%%%%%%%%%%%%%%%%%%%%%%%%%%%%%%%%%%%%%%%%%%%%%%%%%%%%%%%%%%%%%%%%%%%%
%%%%%%%%%%%%%%%%%%%%%%%%%%%%%%%%%%%%%%%%%%%%%%%%%%%%%%%%%%%%%%%%%%%%%%%%%%%%%%%%%%%%%%%%%%%%%%%%%%%%%%%%%%%%%%%%%%%%%%%%%%%%%%%%%%%%%%%%%%%%%%%%%%%%%%%%%%%%%%%%%%%%%%%%%%%%%

\end{document}